\documentclass[10pt]{amsart}
\usepackage[margin=3.5cm]{geometry}
\linespread{1.0}
\usepackage{amssymb}
\usepackage{amsmath}
\usepackage{centernot}
\usepackage{graphicx} 
\usepackage[mathscr]{euscript}
\usepackage{enumerate}
\usepackage{xspace}
\usepackage{color}
\usepackage{enumitem}
\usepackage{amsthm}
\usepackage{dsfont}
\usepackage{mathrsfs}
\usepackage[scr=boondoxo]{mathalfa}
\usepackage{bbm}
\usepackage[colorlinks=true, linkcolor = blue, citecolor = blue]{hyperref}
\usepackage[numbers]{natbib}
\usepackage[backgroundcolor=white,bordercolor=red]{todonotes}
\DeclareMathAlphabet{\mathpzc}{OT1}{pzc}{m}{it}
\usepackage[nameinlink]{cleveref}

\hyphenation{Defi-ni-tion}

\begin{document}

\theoremstyle{plain}
\newtheorem{theorem}{Theorem}[section]
\newtheorem{lemma}[theorem]{Lemma}
\newtheorem{proposition}[theorem]{Proposition}
\newtheorem{corollary}[theorem]{Corollary}
\newtheorem{Ass}[theorem]{Assumption}

\theoremstyle{definition}
\newtheorem{discussion}[theorem]{Discussion}
\newtheorem{definition}[theorem]{Definition}
\newtheorem{remark}[theorem]{Remark}
\newtheorem{example}[theorem]{Example}
\newtheorem{condition}[theorem]{Condition}
\newtheorem{agreement}[theorem]{Agreement}
\newtheorem{SA}[theorem]{Standing Assumption}
\newtheorem*{observation}{Observations}
\newtheorem*{RL}{Comments on Related Literature}

\renewcommand{\chapterautorefname}{Chapter} 
\renewcommand{\sectionautorefname}{Section} 

\crefname{lemma}{lemma}{lemmas}
\Crefname{lemma}{Lemma}{Lemmata}
\crefname{corollary}{corollary}{corollaries}
\Crefname{corollary}{Corollary}{Corollaries}

\newcommand{\m}{\mathfrak{m}}
\newcommand{\mo}{\mathfrak{m}^\circ}
\newcommand{\mon}{\mathfrak{m}^{\circ, n}}
\newcommand{\tm}{\tilde{\mathfrak{m}}}
\newcommand{\M}{\tilde{\mathfrak{m}}}
\newcommand{\Mo}{\tilde{\mathfrak{m}}^\circ}
\newcommand{\Mon}{\tilde{\mathfrak{m}}^{\circ, n}}
\newcommand{\s}{\mathfrak{s}}
\newcommand{\ts}{\tilde{\mathfrak{s}}}
\renewcommand{\S}{S} 
\renewcommand{\v}{\mathfrak{v}}
\renewcommand{\u}{\mathfrak{u}}
\newcommand{\q}{\mathfrak{q}}
\def\stackrelboth#1#2#3{\mathrel{\mathop{#2}\limits^{#1}_{#3}}}
\renewcommand{\l}{\mathscr{l}}
\newcommand{\E}{\mathsf{E}} 
\renewcommand{\P}{\mathsf{P}}
\newcommand{\Po}{\mathds{P}^\circ}
\newcommand{\Pon}{\mathds{P}^{\circ, n}}
\newcommand{\Q}{\mathsf{Q}} 
\newcommand{\tP}{\tilde{\mathsf{P}}}
\newcommand{\tQ}{\tilde{\mathsf{Q}}}
\newcommand{\Qo}{\tilde{\mathds{P}}^\circ}
\newcommand{\Qon}{\tilde{\mathds{P}}^{\circ, n}}
\newcommand{\W}{\mathds{W}}
\newcommand{\on}{\operatorname}
\newcommand{\oU}{U}
\newcommand{\of}{[\hspace{-0.06cm}[}
\newcommand{\gs}{]\hspace{-0.06cm}]}
\newcommand{\ofr}{(\hspace{-0.09cm}(}
\newcommand{\gsr}{)\hspace{-0.09cm})}
\renewcommand{\emptyset}{\varnothing}

\renewcommand{\theequation}{\thesection.\arabic{equation}}
\numberwithin{equation}{section}

\newcommand{\1}{\mathds{1}}
\newcommand{\f}{\mathfrak{f}}
\newcommand{\g}{\mathfrak{g}}
\newcommand{\e}{\mathfrak{e}}
\renewcommand{\t}{T}
\newcommand{\bR}{\mathbb{R}}
\newcommand{\cF}{\mathcal{F}}
\newcommand{\cG}{\mathcal{G}}
\newcommand{\cA}{\mathcal{A}}
\newcommand{\x}{\mathfrak{z}}

\newcommand*{\Y}{Y}
\newcommand*{\U}{U}
\newcommand*{\Z}{Z}
\newcommand*{\B}{B}
\newcommand*{\V}{V}

\newcommand*{\ol}{\overline}
\newcommand*{\wt}{\widetilde}
\newcommand*{\canOmega}{\ol\Omega}
\newcommand*{\canF}{\overline{\mathcal{F}}\vphantom{()}}
\newcommand*{\canbfF}{\ol{\mathbf F} \vphantom{()}}
\newcommand*{\canY}{\ol\Y}
\newcommand*{\canU}{\ol\U}
\newcommand*{\canV}{\ol\V}
\newcommand*{\canZ}{\ol\Z}
\newcommand*{\cantheta}{\bar\theta}
\newcommand*{\wtY}{\wt\Y}
\newcommand*{\wtZ}{\wt\Z}

\newcommand*{\siQ}{{}^\sigma\hspace{-1pt}\Q}
\newcommand*{\siB}{{}^\sigma\mathbb B}
\newcommand*{\siS}{{}^\sigma\!\S}
\newcommand*{\sibfF}{{}^\sigma\mathbf F}
\newcommand*{\sicF}{{}^\sigma\!\cF}

\title[Criteria for the absence of arbitrage]{Criteria for the absence of arbitrage \\ in general diffusion markets}

\author[D. Criens]{David Criens}
\address{D. Criens - University of Freiburg, Ernst-Zermelo-Str. 1, 79104 Freiburg, Germany.}
\email{david.criens@stochastik.uni-freiburg.de}

\author[M. Urusov]{Mikhail Urusov}
\address{M. Urusov - University of Duisburg-Essen, Thea-Leymann-Str. 9, 45127 Essen, Germany.}
\email{mikhail.urusov@uni-due.de}

\keywords{No arbitrage;
no unbounded profit with bounded risk;
no free lunch with vanishing risk;
absolutely continuous local martingale measure;
equivalent (local) martingale measure;
strict martingale density;
one-dimensional diffusion;
scale function;
speed measure.}

\makeatletter
\@namedef{subjclassname@2020}{\textup{2020} Mathematics Subject Classification}
\makeatother

\subjclass[2020]{60G44; 60H10; 60J60; 91B70; 91G15; 91G30}

\thanks{We thank Ioannis Karatzas for asking us about NUPBR for general diffusion markets, which initiated this research, and Yuri Kabanov for motivating us to extend our work into the direction of NA.
We are also very grateful to both of them for their inspiring feedback on the first version of this article.
Further, we thank the participants of the 16th Bachelier Colloquium on Financial Mathematics and Stochastic Calculus (M\'etabief, France, January 15--19, 2024), where this work has been presented, for fruitful discussions.
Last but definitely not least, we thank Martin Schweizer, the associate editor and two anonymous referees for numerous constructive comments and suggestions that helped us to improve the manuscript.}

\date{\today}

\allowdisplaybreaks

\begin{abstract}
We establish deterministic necessary and sufficient conditions for the no-arbitrage notions
NA (``no arbitrage''), NUPBR (``no unbounded profit with bounded risk'') and NFLVR (``no free lunch with vanishing risk'')
in general diffusion market models with finite and infinite time horizons.
These are single asset models whose (discounted) asset price process $\Y$ is a regular continuous strong Markov process that is also a semimartingale.
We further characterize the existence of an equivalent martingale measure in such models.
All deterministic criteria are provided in terms of the scale function and the speed measure of~$\Y$.
\end{abstract}

\maketitle

\frenchspacing
\pagestyle{myheadings}

\section{Introduction}
The question whether certain arbitrage opportunities exist for a given financial market model is of fundamental importance to develop a theory of finance that answers questions related to pricing and hedging of contingent claims and portfolio optimization (see, e.g., the monographs
\cite{DS2006,CJY,karatzaskardaras,KaraShre_MF,shir}).
On the other hand, the existence of certain arbitrage opportunities motivates concepts of asset price bubbles that also attracted much attention in recent years
(cf., e.g., \cite{protter_bubbles} for an overview).

\smallskip
A very flexible class of continuous time financial market models are general diffusion models, which are single asset models whose (discounted) asset price process is a one-dimensional path-continuous regular strong Markov process that is also a semimartingale. This class includes all It\^o diffusion models of the type 
\begin{align} \label{eq: intro SDE}
d Y_t = \mu (Y_t) dt + \sigma (Y_t) dW_t
\end{align}
under the famous Engelbert--Schmidt conditions.
 Further, it covers models with local time effects such as skewness and stickiness. Such effects can, for instance, be observed when companies got takeover offers (see the Introduction of \cite{criensurusov22} for more details). An important feature of these models is that they are characterized by two deterministic objects, the {\em scale function} and the {\em speed measure}, which we call the {\em diffusion characteristics}.
The purpose of this paper is to investigate several important notions of no-arbitrage for general diffusion models and to describe them in a deterministic manner through the diffusion characteristics.

\smallskip
Before we comment more specifically on our contributions, let us shortly discuss the no-arbitrage concepts under consideration. Mainly thanks to the groundbreaking work of Delbaen and Schachermayer \cite{DS}, the no-arbitrage condition NFLVR (``no free lunch with vanishing risk'') can nowadays be seen as the most classical concept for continuous time. Their FTAP (``fundamental theorem of asset pricing'')
establishes the equivalence between
NFLVR and the existence of an ELMM (``equivalent local martingale measure'').
We emphasize that the work of Delbaen and Schachermayer is by no means restricted to path-continuous frameworks that are under consideration in this paper, see \cite{DS2006} for an account of their work.

The notion NA (``no arbitrage'') is the classical concept in discrete time and it appeals through its meaningful economic interpretation.
We notice that, in continuous time, NA is understood as ``no arbitrage with admissible strategies'' (admissibility is needed to exclude for example doubling strategies).
One of the main reasons for introducing NFLVR is that the FTAP for discrete time NA does not transfer to continuous time.
While the existence of an ELMM is certainly sufficient for NA, it is not necessary.
For continuous semimartingale models,
Delbaen and Schachermayer~\cite{DS1995} proved the important fact that the existence of an ACLMM (``absolutely continuous local martingale measure'') is necessary but not sufficient for NA.
Building on this work of Delbaen and Schachermayer, Kabanov and Stricker~\cite{KabStr2005} and Strasser \cite{Strasser2005} established a complete FTAP for continuous time NA, which provides a necessary and sufficient condition that is related to the existence of ACLMMs for shifted market models.

The last notion we comment on is NUPBR (``no unbounded profit with bounded risk'').
Its importance was already noticed by Delbaen and Schachermayer~\cite{DS} who treated NUPBR in Section~3 of their paper~\cite{DS} without giving it a specific name. 
In a related context, NUPBR appeared in Kabanov's paper \cite{Kabanov1997} under the name ``BK property''.
Even earlier, Kabanov and Kramkov \cite{KabanovKramkov1994} investigated the closely related notion
NAA$_1$ (``no asymptotic arbitrage of the first kind''), although in a different context of large financial markets.
In general, NUPBR is weaker than NFLVR and it neither implies nor is implied by NA. More specifically, Delbaen and Schachermayer~\cite{DS} proved that 
\begin{equation}\label{eq:270724a1}
\text{NFLVR} \quad \Longleftrightarrow \quad \text{NA and NUPBR}.
\end{equation}
Later, Karatzas and Kardaras \cite{karatzaskardaras07} identified NUPBR as the minimal concept of no arbitrage required to develop a theory of finance that includes hedging of contingent claims and portfolio optimization
(also the name ``NUPBR'' first appeared in \cite{karatzaskardaras07}).
We further highlight the works of Hulley and Schweizer \cite{HS10} and Takaoka and Schweizer \cite{takaoka14} on probabilistic characterizations of NUPBR.
In this realm, we also mention the probabilistic characterization of the notion NA$_1$ (``no arbitrage of the first kind'') by Kardaras \cite{Kardaras2012}, as NA$_1$ turns out to be equivalent to NUPBR (see Kardaras \cite{Kardaras2010}).
Moreover, in the present context of a single financial market, all three aforementioned notions NAA$_1$, NA$_1$ and NUPBR are equivalent (see Lemma~A.1 in Kabanov, Kardaras and Song \cite{KabanovKardarasSong2016}) and we, therefore, only speak about NUPBR in what follows.
For the behavior of NUPBR under filtration shrinking we refer to Kardaras and Ruf \cite{kardarasruf}.
NUPBR under filtration enlargements was studied by Acciaio, Fontana and Kardaras \cite{AFK_16}, Aksamit, Choulli, Deng and Jeanblanc \cite{ACDJ} and Aksamit, Choulli and Jeanblanc \cite{ACJ}.
A profound discussion of many facets of NUPBR can also be found in Chapter~2 of the recent monograph \cite{karatzaskardaras} by Karatzas and Kardaras
(there NUPBR is called \emph{market viability}).
For an overview of various notions of no arbitrage and their interrelations, the interested reader is also referred to the paper of Fontana~\cite{Fontana2015} and the Ph.D. thesis of Hulley \cite[Chapter~1]{Hulley2009}.

\smallskip
The main results in this paper are deterministic characterizations of NA, NUPBR and NFLVR for general diffusion market models (with finite and infinite time horizons) in terms of their diffusion characteristics. Furthermore, we investigate the existence of EMMs (``equivalent martingale measures''), which can be viewed as a very strong no-arbitrage condition (see \cite{cherny}).
It is worth mentioning that our results for NA, NUPBR and NFLVR also propagate to ``integrated general diffusion markets'' where the asset price process \(\S\) has the form
$$
\S=\S_0+\int_0^\cdot \sigma_u\,d\Y_u, \quad \sigma \not = 0, 
$$
for a general diffusion semimartingale \(\Y\).
In particular, this means that our results cover stochastic exponentials of diffusion models, which is a widespread modeling approach.

In the following, we highlight some facets of our main contributions in more detail.
As a general structural condition, we prove that each of the notions NA, NUPBR and NFLVR forces the scale function to be continuously differentiable with a strictly positive
absolutely continuous derivative. Broadly speaking, this means that in the absence of arbitrage the scale function has to be of the same type as for an It\^o diffusion. In particular, this shows that skewness effects always imply arbitrage. This does not apply to stickiness effects that might or might not lead to arbitrage.
The observation that NFLVR entails the scale function to be continuously differentiable with strictly positive absolutely continuous derivative is not entirely new, since it was already proved in our previous paper \cite{criensurusov22} for certain canonical diffusion settings. We think it is quite interesting to observe that even the weaker notions NA and NUPBR are sufficient for this structural property.

Like in the most general semimartingale market models as investigated in~\cite{DS}, also in our diffusion framework, NA and NUPBR are in general position, i.e., neither of them implies the other, and 
we cannot say more than~\eqref{eq:270724a1}.
However, our results identify three general subsettings of our framework with finite time horizon
(only depending on the underlying state space and the boundary behavior)
where
\begin{align*}
\text{either}
\quad
&
\text{NFLVR}
\;\;\Longleftrightarrow\;\;
\text{NA}
\;\;\Longleftrightarrow\;\;
\text{NUPBR},
\\
\text{or}
\quad
&
\text{NFLVR}
\;\;\Longleftrightarrow\;\;
\text{NA}
\;\;\;\Longrightarrow\;\;
\textup{NUPBR},
\\
\text{or}
\quad
&
\text{NFLVR}
\;\;\Longleftrightarrow\;\;
\text{NUPBR}
\;\;\;\Longrightarrow\;\;\;
\textup{NA}.
\end{align*}
For instance, the first subsetting includes all general diffusion models that are regular on \(\bR\), i.e., reach all real numbers in finite time with positive probability.

Further, it is worth mentioning that we do not require the underlying filtration to be generated by the asset price process $\Y$. In fact, we allow for larger filtrations. 
The only property we need is that $\Y$ has to be strongly Markov relative to the filtration.
It is natural to ask whether counterexamples like the one discussed in \cite{DS1998counter} are possible in our framework, i.e., whether, under NA (or under NFLVR), it is possible that the ``minimal martingale density''
(cf. the process $\widehat Z$ on p.~42 in \cite{HS10})
is a strict local martingale.
Phrased differently, this means that the minimal ACLMM (resp. ELMM) fails to exists although some ACLMM (resp. ELMM) exists.
We refer to \cite{DS1998counter} for more detail on this question and on what can happen in general.
It turns out that such counterexamples are impossible in our framework, i.e., whenever an ACLMM (resp. ELMM) exists, the same is true for the minimal version.

\smallskip
Lastly, we comment on literature closely related to our main results and discuss some aspects of our proofs.
To the best of our knowledge, even in the one-dimensioinal It\^o diffusion framework that is given by \eqref{eq: intro SDE},
the literature contains no deterministic characterization of the notions NA and NUPBR.
By contrast, there are many papers on the characterization of NFLVR and the existence of EMMs. In fact, for general diffusion models in a canonical framework, a deterministic characterization of NFLVR has been established in our previous paper \cite{criensurusov22}.
The articles of Criens~\cite{Criens2018,criens20}, Delbaen and Shirakawa~\cite{DelbaenShirakawa2002} and Mijatovi\'c and Urusov \cite{MU12b}
appear to be closest to the present paper and \cite{criensurusov22} in the sense that they also aim for deterministic conditions. In all of these papers, the asset price process is some sort of It\^o process with non-vanishing volatility, which excludes, for instance, sticky points in the interior of the state space. 
On a technical level, the proofs in these papers rely on the idea of studying the true martingale property and positivity of certain stochastic exponentials that are natural candidate density processes. 
In case the scale function is continuously differentiable with strictly positive
absolutely continuous
derivative, we can adapt this strategy, construct a tractable candidate density process as in the It\^o diffusion setting and investigate its properties to understand NUPBR. For NA, we use a different approach that is based on the probabilistic characterizations by Kabanov and Stricker~\cite{KabStr2005} and Strasser \cite{Strasser2005}. Namely, we investigate the existence of ACLMMs for time-shifted market models. Again, it is important to understand first the structure of the scale function. In both cases, to prove the desired structure, we use arguments based on the strong Markov property, a local change of measure up to a positive predictable time, martingale problem techniques and results on absolute continuity and equivalence of general diffusions that we recently established in our previous paper \cite{criensurusov22}.
We refer to Section~\ref{sec: outline proof} for more detailed comments on this strategy.
Our characterization of NFLVR is a direct consequence of our results for NA and NUPBR. As we think that NFLVR is of particular interest, we also provide short direct proofs of our characterizations of NFLVR that transfer results from \cite{criensurusov22} to the more general setting of this paper.
Related to this, the question when a non-negative local martingale is a true martingale was recently studied by Desmettre, Leobacher and Rogers~\cite{desmettre} in the same general diffusion framework that is considered in this paper. 
Although the results from~\cite{desmettre} are quite general, there are diffusion models whose candidate densities cannot be brought to the form studied in~\cite{desmettre}. In Section~\ref{sec: outline proof}, we comment on this point in more detail.
Finally, our characterizations of the existence of EMMs are based on Kotani's \cite{kotani} fine result on the martingale property of general diffusions on natural scale.

\smallskip
The paper is organized as follows. In Section~\ref{sec:no-arbitrage} we introduce and discuss the notions NUPBR, NA and NFLVR and the corresponding FTAPs. Our setting and the main results are presented in Section~\ref{sec:setting}. An outline and comments on our proofs can be found in Section~\ref{sec: outline proof}.
Finally, our detailed proofs are presented in Section~\ref{sec:proofs}. To provide the possibility for a linear reading, we recall each main theorem before its proof.

\section{No-arbitrage notions}\label{sec:no-arbitrage}

Throughout this paper, we consider a finite or infinite deterministic time horizon \(T \in (0, \infty]\).

\begin{agreement}\label{agr:050224a1}
In case $T=\infty$ we understand the interval $[0,T]$ as $\bR_+ = [0, \infty)$ and read expressions like ``$t \in [0, T]$'' as ``$t\in\bR_+$''.
\end{agreement}

	A pair \((\mathbb{B}, \S)\) is said to be a \emph{financial market} if \(\mathbb{B} = (\Omega, \mathcal{F}, \mathbf{F}= (\mathcal{F}_t)_{t \in [0, T]}, \P)\) is a filtered probability space with a right-continuous filtration \(\mathbf{F}\) that supports a continuous real-valued \(\mathbf{F}\)-\(\P\)-semimartingale \(\S= (\S_t)_{t \in [0, T]}\).
For a financial market \((\mathbb{B}, \S)\), let \(L (\mathbb{B}, \S)\) be the set of all \(\mathbf{F}\)-predictable real-valued processes \(H = (H_t)_{t \in [0, T]}\) which are integrable
w.r.t. 
\(\S\), i.e., that satisfy \(\P\)-a.s., for all \(t \in [0, T]\),
\[
\int_0^t |H_s|\, d [\on{Var}(A)]_s < \infty
\quad\text{and}\quad
\int_0^t H^2_s\,d \langle M, M\rangle_s < \infty,
\]
where $\S=\S_0+M+A$ is the canonical decomposition of $\S$, i.e., $M$ is a continuous \(\mathbf{F}\)-\(\P\)-local martingale with $M_0=0$ and $A$ is a continuous \(\mathbf{F}\)-adapted process of finite variation with $A_0=0$.
In our financial context, the elements of \(L (\mathbb{B}, \S)\) are called \emph{trading strategies}. To ease our presentation, we write 
\[
V^H \triangleq \int_0^\cdot H_s d \S_s
\]
for the {\em value process} associated to the trading strategy \(H \in L (\mathbb{B}, \S)\).
\begin{definition}
Let \((\mathbb{B}, \S)\) be a financial market.
For \(c \in \bR_+\), a trading strategy \(H \in L (\mathbb{B}, \S)\) is called {\em \(c\)-admissible} if \(\P\)-a.s. \(V^H \geq - c\). Further, we call a trading strategy {\em admissible} if it is \(c\)-admissible for some \(c \in \bR_+\).
\end{definition}

We define $K_1$ to be the set of all contingent claims attainable from zero initial capital via a $1$-admissible strategy, i.e.,
\begin{align*}
K_1
\triangleq \Big\{ V^H_T
 \colon H &\text{ is \(1\)-admissible and, if \(T = \infty\), then }
V^H_\infty\triangleq\lim_{t\to\infty}V^H_t
\text{ exists \(\P\)-a.s.} \Big\}.
\end{align*}
Further, let $K$ be the set of all contingent claims attainable from zero initial capital via some admissible strategy, i.e.,
$$
K=\bigcup_{n\in\mathbb N}(n K_1),
$$
and let \(C\) be the set of all essentially bounded random variables that are dominated by claims in \(K\), i.e.,
\[
C \triangleq \Big\{ g \in L^\infty \colon \exists f \in K \text{ such that } g \leq f\ \text{\(\P\)-a.s.} \Big\}.
\]

Next, we recall the definitions of the basic no-arbitrage notion
\emph{NA}
and the notions
\emph{no unbounded profit with bounded risk (NUPBR)}
and
\emph{no free lunch with vanishing risk (NFLVR)}.

\begin{definition}[NA]\label{def:100224a1}
	Let $(\mathbb B,\S)$ be a financial market.
	We say that a strategy $H\in L(\mathbb B,\S)$ realizes \emph{arbitrage} if
	\begin{enumerate}
		\item[\textup{(i)}]
		$H$ is admissible,
		
		\item[\textup{(ii)}]
		in case $T=\infty$,
		$V^H_\infty = \lim_{t \to \infty} V^H_t$ exists \(\P\)-a.s.,
		
		\item[\textup{(iii)}]
		$\P(V^H_T\ge0)=1$ and $\P(V^H_T>0)>0$.
	\end{enumerate}
	We say that the financial market $(\mathbb B,\S)$ satisfies \emph{NA} if there is no strategy realizing arbitrage.
\end{definition}

We remark that, in continuous-time models, it is necessary to consider admissible strategies, as non-admissible arbitrages (i.e., strategies satisfying only (ii)--(iii)) exist practically in any interesting model (e.g., in the classical Black-Scholes model).

\begin{definition}[NUPBR]\label{def:100224a2}
The financial market \((\mathbb{B}, \S)\) satisfies \emph{NUPBR} if the set $K_1$ is bounded in \(\P\)-probability, i.e., 
	\[
	\lim_{m \to \infty} \sup_{V \in K_1} \P (V > m) = 0.
	\]
\end{definition}

\begin{definition}[NFLVR]\label{def:100224a3}
We say that \emph{NFLVR} holds in the market \((\mathbb{B}, \S)\) if
		\[
		\overline{C} \cap L^\infty_+ = \{0\}, 
		\]
		where \(\overline{C}\) denotes the closure of \(C\) in \(L^\infty\) w.r.t. the norm topology and \(L^\infty_+\) denotes the cone of nonnegative random variables in \(L^\infty\).
\end{definition}

\begin{remark}
For the sake of comparison between NFLVR and NA, it is worth noting that NA is, clearly, equivalent to $C \cap L^\infty_+ = \{0\}$.
\end{remark}

In the following we recall the fundamental stochastic characterizations of the previous no-arbitrage concepts. 

\begin{definition}
We call a probability measure \(\Q\) on \((\Omega, \mathcal{F})\)
an \emph{absolutely continuous local martingale measure}
(resp., an \emph{equivalent local martingale measure})
for the market \((\mathbb{B}, \S)\) if \(\Q\ll \P\) (resp., \(\Q\sim \P\))
and \(\S\) is an \(\mathbf{F}\)-\(\Q\)-local martingale.
We use the abbreviation \emph{ACLMM} (resp., \emph{ELMM}) in the following.
\end{definition}

\begin{agreement}\label{agr:030724a1}
To clarify our terminology, for an adapted c\`adl\`ag process $M=(M_t)_{t\in[0,T]}$ (where $M_0$ is allowed to be non-integrable), we say that $M$ is an \emph{$\mathbf F$-$\P$-local martingale} when $M-M_0$ is an $\mathbf F$-$\P$-local martingale starting from zero.
Furthermore, under a \emph{localizing sequence} for $M$ we understand any sequence $(\tau_n)_{n=1}^\infty$ of $\mathbf F$-stopping times such that $\tau_n\nearrow\infty$ $\P$-a.s., $n\to\infty$, and all stopped processes $M^{\tau_n}-M_0$ are $\mathbf F$-$\P$-martingales.
\end{agreement}

\begin{definition}\label{def:050224a1}
We say that a process \(Z = (Z_t)_{t \in [0, T]}\) is a
\emph{strict martingale density (SMD)}
for the market \((\mathbb{B}, \S)\) if it is a strictly positive c\`adl\`ag
\(\mathbf{F}\)-adapted process with \(\Z_0 = 1\) such that \(\Z\) and \(\Z\S\) are \(\mathbf{F}\)-\(\P\)-local martingales.
\end{definition}

In accordance with Agreement~\ref{agr:050224a1}, in case of an infinite time horizon $T=\infty$, Definition~\ref{def:050224a1} does not rely on the terminal value $\Z_\infty$. In particular, $\Z_\infty$ is not asked to be strictly positive, see, however, Theorem~\ref{theo: FTAP} and Remark~\ref{rem:080224a1} below.

\begin{remark}
	By virtue of \cite[Proposition~III.3.8]{JS}, the density process of an ELMM is an~SMD. 
\end{remark}

The following result is the seminal fundamental theorem of asset pricing for NFLVR by Delbaen and Schachermayer~\cite{DS}.
We emphasize that it holds in this form both for $T<\infty$ and~$T=\infty$.

\begin{theorem}[FTAP for NFLVR]\label{theo: FTAP NFLVR}
For a financial market \((\mathbb{B}, \S)\), 
$$
\text{NFLVR}
\quad\Longleftrightarrow\quad
\text{there exists an ELMM}.
$$
\end{theorem}

Next, we discuss the NA condition.
A complete stochastic characterization of NA for the case of a finite time horizon within a Brownian setting was established by Levental and Skorohod~\cite{LevSko1995}.
A necessary condition for NA for general continuous prices processes was given by Delbaen and Schachermayer \cite{DS1995}. We recall it in the following theorem.

\begin{theorem}\label{theo: FTAP NA}
For a financial market \((\mathbb{B}, \S)\),
$$
\text{NA}
\quad\Longrightarrow\quad
\text{there exists an ACLMM}.
$$
\end{theorem}

We, again, emphasize that Theorem~\ref{theo: FTAP NA} holds in this form both for $T<\infty$ and $T=\infty$.
For a finite time horizon, building upon \cite{DS1995}, Kabanov and Stricker \cite{KabStr2005} and Strasser \cite{Strasser2005} provided even necessary and sufficient conditions for NA. We recall them in the next theorem.

\begin{theorem}[FTAP for NA, $T<\infty$]\label{th:210324a1}
Consider a financial market $(\mathbb B,\S)$ with $T<\infty$.
The following are equivalent:
\begin{enumerate}
\item[\textup{(i)}]
The market satisfies NA.

\item[\textup{(ii)}]
For every $\mathbf F$-stopping time $\sigma\le T$ there exists an ACLMM $\siQ$ for the market $(\siB,\siS)$ with
$$
\siQ\sim\P\text{ on }\cF_\sigma,
$$
where $\siB\triangleq(\Omega,\cF,(\sicF_t)_{t\in[0,T]},\P)$,
$\sicF_t\triangleq\cF_{(\sigma+t)\wedge T}$,
$\siS\triangleq(\siS_t)_{t\in[0,T]}$,
$\siS_t\triangleq\S_{(\sigma+t)\wedge T}$.
\end{enumerate}
\end{theorem}

We proceed with a characterization of NUPBR.
For a finite time horizon \(T < \infty\),
the following theorem was proved by Choulli and Stricker \cite{ChoulliStricker}.
In case of the infinite time horizon \(T = \infty\),
the result can be deduced from a more general theorem by Karatzas and Kardaras \cite{karatzaskardaras07} combined with a statement from Hulley and Schweizer \cite{HS10} (the latter being applied for each finite time horizon).
For completeness, we outline a short argument, using terminology from \cite{HS10,karatzaskardaras07}.

\begin{theorem}[FTAP for NUPBR]\label{theo: FTAP}
	For a financial market \((\mathbb{B}, \S)\), the following are equivalent:
	\begin{enumerate}
		\item[\textup{(i)}] The market satisfies the NUPBR condition.
		\item[\textup{(ii)}] There exists a SMD \(Z\) for the market and, if \(T = \infty\), then \(\P\)-a.s. \(Z_\infty \triangleq \lim_{t \to \infty} Z_t > 0\).\footnote{The limit \(\lim_{t \to \infty} Z_t\) exists a.s. by the supermartingale convergence theorem.}
	\end{enumerate}
\end{theorem}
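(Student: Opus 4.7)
The case $T < \infty$ is precisely the FTAP of Choulli and Stricker~\cite{ChoulliStricker}, so the plan is to focus on the infinite-horizon case $T = \infty$, treating the two implications separately.

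For the implication (ii) $\Rightarrow$ (i), the plan is to exploit the deflator property of $Z$. Given a $1$-admissible $H$ for which $W_\infty := \lim_{t \to \infty} (1 + \int_0^t H_s\, d\S_s)$ exists $\P$-a.s., set $W_t := 1 + \int_0^t H_s\, d\S_s$. Integration by parts, combined with the local-martingale property of both $Z$ and $Z\S$, shows that $ZW = Z + \int_0^\cdot W_{s-}\, dZ_s + \int_0^\cdot H_s\, d(Z\S - \int \S\, dZ)_s$ is a local martingale; since it is nonnegative, it is a supermartingale. Supermartingale convergence then yields $\E[Z_\infty W_\infty] \leq 1$. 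A two-step estimate --- first choose $n$ with $\P(Z_\infty < 1/n) < \epsilon/2$, which is possible since $Z_\infty > 0$ a.s., then take $m > 2n/\epsilon$ and apply Markov's inequality to $Z_\infty W_\infty$ on $\{Z_\infty \geq 1/n\}$ --- yields $\sup_{H} \P(W_\infty > m) < \epsilon$, so $K_1$ is bounded in probability.

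For the converse (i) $\Rightarrow$ (ii), the plan is to invoke the more general infinite-horizon FTAP of Karatzas and Kardaras~\cite{karatzaskardaras07}, which characterizes NUPBR on $[0, \infty)$ through the existence of an equivalent supermartingale deflator $Y$ that is strictly positive on $[0, \infty]$. For continuous $\S$, NUPBR is equivalent to the classical structure condition $\S = \S_0 + M + \int \lambda\, d\langle M \rangle$ with $\int_0^\infty \lambda_s^2\, d\langle M \rangle_s < \infty$ a.s., so a candidate SMD can be constructed explicitly as $Z = \mathcal{E}(-\int_0^\cdot \lambda_s\, dM_s)$; this $Z$ is a strictly positive local martingale such that $Z\S$ is also a local martingale. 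The positivity $Z_\infty > 0$ a.s.\ is then inherited from the positivity at infinity of the Karatzas--Kardaras deflator $Y$ via comparison, using that $Y/Z$ is a nonnegative supermartingale.

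The main obstacle is this last step: ensuring strict positivity of the deflator \emph{at} $\infty$ and transporting it to the SMD. On an infinite horizon a nonnegative supermartingale may lose mass at infinity, and the crucial point is that our formulation of NUPBR on $[0, \infty)$ in (i) --- which requires the a.s.\ existence of the terminal wealth $W_\infty$ for every $1$-admissible $H$ --- is strictly stronger than NUPBR on each finite subhorizon $[0, t]$. It is precisely this strengthening that rules out mass loss of the deflator at infinity, distinguishing the infinite-horizon theorem from a mere limit of the Choulli--Stricker result.
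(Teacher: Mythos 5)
Your proposal is correct, and for the finite horizon and for the implication (i) \(\Rightarrow\) (ii) on \([0,\infty)\) it follows essentially the same route as the paper: Choulli--Stricker for \(T<\infty\), and Karatzas--Kardaras' Theorem~4.12 combined with the identification of the minimal martingale density \(\widehat Z=\mathcal E(-\int\lambda\,dM)\) as the SMD for \(T=\infty\). Your positivity argument at infinity --- \(Y/Z\) is a nonnegative supermartingale for any supermartingale deflator \(Y\), hence converges to a finite limit, and \(Y_\infty>0\) forces \(Z_\infty>0\) --- is just a rephrasing of the paper's step ``\(W^\rho=1/\widehat Z\) and \(W^\rho_\infty<\infty\) imply \(\widehat Z_\infty>0\)''; note that the supermartingale property of \(Y/Z\) is exactly the num\'eraire property of \(1/\widehat Z\), i.e., it still rests on the Hulley--Schweizer identification, so you should cite that explicitly rather than treat it as free. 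Where you genuinely diverge is the implication (ii) \(\Rightarrow\) (i) for \(T=\infty\): the paper simply cites \cite[Theorem~4.12]{karatzaskardaras07}, whereas you give a self-contained deflator argument (\(ZW\) is a nonnegative local martingale, hence a supermartingale, so \(\E[Z_\infty W_\infty]\le 1\), and the two-step Markov estimate using \(\P(Z_\infty<1/n)\to 0\) yields uniform boundedness in probability of \(K_1\)). This buys independence from the heavy machinery of \cite{karatzaskardaras07} for that direction and makes transparent exactly where the hypothesis \(Z_\infty>0\) enters; the cost is that you must justify that \(\int H\,d\bigl(Z\S-\int \S_{s-}\,dZ_s\bigr)\) is a well-defined local martingale (Ansel--Stricker, since \(ZW\ge 0\)), and your displayed integration-by-parts identity should start from \(Z_0W_0=1\) rather than ``\(Z\)''. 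Two further small imprecisions worth fixing: the ``classical'' structure condition obtained from NUPBR gives \(\int_0^t\lambda_s^2\,d\langle M\rangle_s<\infty\) for finite \(t\) only; the finiteness at \(t=\infty\) is not an input but is equivalent to the conclusion \(Z_\infty>0\) you are proving, so it should not be asserted up front.
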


\begin{proof}
For \(T < \infty\), the theorem is a direct consequence of \cite[Theorem 2.9]{ChoulliStricker}. We now discuss the case \(T = \infty\). The implication (ii) \(\Rightarrow\) (i) follows directly from the implication (2) $\Rightarrow$ (3) in \cite[Theorem 4.12]{karatzaskardaras07}. Suppose that (i) holds.
Then, by the implication (3) $\Rightarrow$ (1) in \cite[Theorem~4.12]{karatzaskardaras07}, the num{\'e}raire portfolio \(\rho\) exists and its value process \(V^\rho\) satisfies \(\P\)-a.s. \(V^\rho_\infty < \infty\). As \(S\) has continuous paths, it follows from \cite[Theorem~7]{HS10} applied for each finite time horizon that \(V^\rho = 1/\widehat{Z}\), where \(\widehat{Z}\) denotes the \emph{minimal martingale density} as given on p.~42 in \cite{HS10}. The process \(\widehat{Z}\) is a SMD and \(\P\)-a.s. \(\widehat{Z}_\infty > 0\) follows from the fact that \(\P\)-a.s. \(V^\rho_\infty < \infty\). Consequently, (ii) holds.
\end{proof}

The literature contains also other important characterizations of NUPBR.
One of them is the famous {\em structure condition} that we recall in the following theorem
for the case $T<\infty$,
see \cite[Theorem~1, Proposition~2]{Schweizer1995}, \cite[Theorem 2.9]{ChoulliStricker}
or \cite[Theorem~7]{HS10}
(cf. also \cite[Remark~1, p.~42]{HS10} for comments on the terminology).

\begin{theorem}[Structure condition, $T<\infty$]\label{theo: SC}
Take a financial market \((\mathbb{B}, S)\) with \(T < \infty\) and let \(S = S_0 + M + A\),
$M_0=A_0=0$,
be the canonical decomposition of the
continuous
semimartingale~\(S\)
($M$ is the local martingale part, $A$ is the finite variation part of~$S$).
Then, NUPBR holds if and only if there exists a real-valued predictable process \(\lambda\) such that a.s.
	\[
	A = \int_0^\cdot \lambda_s\, d \langle M, M\rangle_s
	\quad\text{and}\quad
	\int_0^T \lambda^2_s\, d \langle M, M\rangle_s < \infty.
	\] 
\end{theorem}

\begin{remark}
The structure condition shows that NUPBR for finite time horizons is fully determined by the local semimartingale characteristics of the asset price process. As pointed out in \cite[Example 4.7]{karatzaskardaras07}, this is not the case for NA and NFLVR, being one of the fundamental differences between these notions. 
\end{remark}

We, finally, recall the following result, which follows,
both for $T<\infty$ and for $T=\infty$,
from Corollary~3.8 in \cite{DS}, see also \cite[Theorem 1.3]{DS1995}.

\begin{theorem}\label{th:080224a1}
For a financial market \((\mathbb{B}, \S)\),
$$
\text{NFLVR}
\quad\Longleftrightarrow\quad
\text{NA and NUPBR}.
$$
\end{theorem}

\begin{remark}[Infinite time horizon vs. all finite time horizons]\label{rem:080224a1}
As a referee kindly pointed out, next to classical notions for the infinite time horizon \(T = \infty\)
(we call them ``global'' versions in this remark),
the literature also contains ``local'' versions, meaning that the no-arbitrage condition holds for \emph{all finite} time horizons.
For instance, in the context of NUPBR such a concept is used by
Kardaras \cite{Kardaras2014} and in the recent monograph \cite{karatzaskardaras} by Karatzas and Kardaras.
On the contrary, our terminology in the context of NUPBR for $T=\infty$ is consistent with Karatzas and Kardaras \cite{karatzaskardaras07}.
For a more detailed discussion of the difference between the global and local versions of NUPBR we refer to B{\'a}lint and Schweizer \cite{balint_schweizer_2020,balint_schweizer}.

Evidently, for NA, NUPBR and NFLVR, their global versions entail their localizations. The converse is not true as the counterexample from \cite[Remark~5.3]{balint_schweizer_2020} illustrates.
The difference is also visible from the general perspective of the above theorems. For example, the difference between

(a)
NUPBR for $T=\infty$ and

(b)
NUPBR for all finite time horizons

\noindent
can be understood through condition~(ii) of Theorem~\ref{theo: FTAP} because the requirement $Z_\infty>0$ is needed for~(a) but not for~(b), 
see also the comment after Definition~\ref{def:050224a1}. 
For further discussions in this direction, we also refer to Example~\ref{ex:100224a1}, the subsequent discussion and Examples~\ref{ex:140324a2} and~\ref{ex:140324a1} below.
\end{remark}

\section{Setting and main results}\label{sec:setting}

In the following, we investigate a financial market driven by a regular continuous strong Markov process $\Y$.
For brevity, we use the term \emph{general diffusion} (and sometimes simply \emph{diffusion}) as a synonym of ``regular continuous strong Markov process''.
A quite complete overview of the theory for general diffusions can be found in the monograph \cite{itokean74} by It\^{o} and McKean.
Shorter introductions are given in \cite{breiman1968probability,kallenberg,RY,RW2}.
For a condensed overview we refer either to Chapter~2 of the book \cite{borodin_salminen} by Borodin and Salminen or to Section~2.2 of our previous paper \cite{criensurusov22}. 

As the concepts of scale and speed are crucial for our main results, we recall some facts about these concept without going too much into detail.
We take as state space \(J \subset \bR\) a bounded or unbounded, closed, open or half-open interval. A scale function is a strictly increasing continuous function $\s\colon J\to\bR$
and a speed measure is a measure $\m$ on $(J,\mathcal B(J))$ that satisfies
$\m([a,b])\in(0,\infty)$ for all $a<b$ in $J^\circ$, where \(J^\circ\) denotes the interior of~\(J\). Set
$$
l\triangleq\inf J\in[-\infty,\infty)
\quad\text{and}\quad
r\triangleq\sup J\in(-\infty,\infty].
$$
The values $\s(l)$ and $\s(r)$ are defined by continuity (they can be infinite).
We also remark that the speed measure can be infinite near $l$ and $r$, and that the values $\m(\{l\})$ and \(\m (\{r\})\) can be anything in $[0,\infty]$ provided $l\in J$ and $r\in J$, respectively.

We are in a position to explain our financial framework.
Recall that we always consider a
(finite or infinite)
deterministic time horizon \(T \in (0, \infty]\) and that we use Agreement~\ref{agr:050224a1}.

Let \(\mathbb{B} = (\Omega, \cF, \mathbf{F} = (\cF_t)_{t \in [0, T]}, \P)\) be a filtered probability space with a right-continuous filtration that supports a regular continuous strong Markov process \(\Y = (\Y_t)_{t \in [0, T]}\) with state space \(J\), scale function \(\s\), speed measure \(\m\) and deterministic starting value \(x_0 \in J^\circ\). 
In the above context, the strong Markov property refers to the filtration~\(\mathbf{F}\).

We also assume that \(\Y\) is a semimartingale on \(\mathbb{B}\).\footnote{This is not automatically true in the general diffusion setting, as, for instance, if \(B\) is a Brownian motion, then $\sqrt{|B|}$ is \emph{not} a semimartingale. From the viewpoint of mathematical finance, this assumption is very natural. Lastly, we stress that the semimartingale property of \(\Y\) is solely a property of the scale function \(\s\). For a detailed discussion we refer to \cite[Section~5]{CinJPrSha}.}
Clearly, \((\mathbb{B}, \Y)\) is a financial market in the sense defined in Section~\ref{sec:no-arbitrage}, which allows us to investigate NA, NUPBR and NFLVR for this market.

\medskip
In order to formulate our results we first introduce several conditions.
We start by recalling Feller's test for explosions (\cite[Proposition 16.43]{breiman1968probability}).
It states that a
(finite or infinite)
boundary point $b\in\{l,r\}$ is accessible for the diffusion $\Y$ (that is, \(b\in J\)) if and only if
\begin{equation}\label{eq:170323a1}
|\s (b)| < \infty \quad \text{ and } \quad \int_B | \s (b) - \s (x) | \m (dx) < \infty
\end{equation}
for some (equivalently, for every) open interval \(B \subsetneq J^\circ\) with \(b\) as endpoint.
Consequently, $b\in\{l,r\}$ is inaccessible for the diffusion $Y$ (that is, \(b\notin J\)) if and only if either
\begin{equation}\label{eq:170323a2}
|\s (b)| = \infty
\end{equation}
or
\begin{equation}\label{eq:170323a3}
|\s (b)| < \infty \quad \text{ and } \quad \int_B | \s (b) - \s (x) | \m (dx) = \infty
\end{equation}
for some (equivalently, for every) open interval \(B \subsetneq J^\circ\) with \(b\) as endpoint.

\begin{remark}\label{rem:170323a1}
What is hidden in our requirement $J\subset\bR$ is that
we need to specify the scale function $\s$ and the speed measure $\m$ in such a way that either \eqref{eq:170323a2} or~\eqref{eq:170323a3} hold for infinite boundary points of~$J$.\footnote{\label{ft:040324a1}A more general setting would be to consider intervals $J\subset[-\infty,\infty]$ as state spaces, i.e., to allow also infinite boundaries to be accessible.
While this is natural when studying general diffusions (e.g., think about the SDE $d\Y_t=\Y_t^2\,dt+dW_t$, driven by a Brownian motion $W$, in which case $\infty$ is an accessible boundary), we need the setting to be compatible with financial modeling, which means that $\Y$ should be a semimartingale. This already forces infinite boundary points to be inaccessible and we come to the necessity to require $J\subset\bR$.}
\end{remark}

Another condition we need is the following.

\begin{condition}\label{cond:050324a1}
There exists a Borel function \(\beta \colon J^\circ \to \bR\) such that
	\begin{align} 
	\qquad\beta^2 \in L^1_\textup{loc}(J^\circ), \label{eq: nflvr1}
	\end{align}
	and, up to increasing affine transformations,
	\begin{align} \label{eq: nflvrS}
	\s (x) = \int^x \exp \Big\{ \int^y \beta (z) \,dz \Big\} \, dy, \quad x \in J^\circ.
	\end{align}
\end{condition}

\begin{condition}\label{cond:170623a1}
Condition~\ref{cond:050324a1} holds and every finite boundary point $b\in\{l,r\}\cap\bR$ is either inaccessible or absorbing for~$\Y$.
\end{condition}

\begin{remark}
Phrased differently, the last requirement of Condition~\ref{cond:170623a1} means that every finite boundary point $b\in\{l,r\}\cap\bR$ is neither instantaneously reflecting nor slowly reflecting for~$Y$.
We will see below that each of the no-arbitrage notions NA, NUPBR and NFLVR excludes any kind of reflecting boundaries. We refer to \cite{BucknerDowdHulley2024} for a recent discussion of reflecting boundaries in the context of weak notions of arbitrage.
Notice that infinite boundary points are inaccessible in our setting (recall Footnote~\ref{ft:040324a1}). Hence, they need not to be treated explicitly in Condition~\ref{cond:170623a1}.
\end{remark}

In case Condition~\ref{cond:170623a1} holds, for a finite boundary point $b\in\{l,r\}\cap\bR$, the following two integrability conditions will be of fundamental importance in our main results:
\begin{align}
&\int_B |x - b| [\beta (x)]^2\,dx<\infty,
\label{eq: b good}\\[1mm]
&\int_B |x - b| \s' (x)\,\m(dx)=\infty,
\label{eq: very good add cond}
\end{align}
for some (equivalently, for every) open interval \(B \subsetneq J^\circ\) with \(b\) as endpoint
(this refers both to~\eqref{eq: b good} and to~\eqref{eq: very good add cond}).

To give an idea of the above conditions, let us discuss the above conditions for the important example of It\^o diffusion market models.

\begin{example}[It\^{o} diffusion market]\label{ex:090224a1}
Recall that $J^\circ=(l,r)\subset\bR$.
Take two Borel functions $\mu \colon (l, r) \to \bR$ and $\sigma \colon (l, r) \to \bR$ satisfying the so-called Engelbert--Schmidt conditions
\begin{equation} \label{eq: ESC}
	\begin{split}
\sigma(x)\ne0\;\;\forall \, x\in J^\circ,
\\
\frac1{\sigma^2},\frac{\mu}{\sigma^2}\in L^1_\textup{loc}(J^\circ).
\end{split}
\end{equation}
It is well-known that the (Brownian) SDE
\begin{equation}\label{eq:090224a1}
d\Y_t=\mu(\Y_t)\,dt+\sigma(\Y_t)\,dW_t,\quad\Y_0=x_0 \in J^\circ,
\end{equation}
has a unique in law weak solution that possibly reaches the boundary points \(\{l, r\}\) in finite time (see \cite{ES1991} or \cite[Theorem~5.5.15 and Section~5.5.C]{KaraShre}).
We stipulate that the solution process gets absorbed in the boundaries that are reached in finite time.
Whether a boundary point is accessible or inaccessible is determined
via $\mu$ and $\sigma$
by Feller's test for explosion as given in \cite[Theorem~5.5.29]{KaraShre}. 
Notice that we need to specify $\mu$ and $\sigma$ in such a way that infinite boundaries are inaccessible (recall Remark~\ref{rem:170323a1}).

A solution process $\Y$ is a regular continuous strong Markov process with scale function
\begin{equation}\label{eq:090224a2}
\s (x) = \int^x \exp \Big\{ -\int^y \frac{2\mu (z)}{\sigma^2 (z)}\, dz \Big\} \, dy, \quad x \in J^\circ, 
\end{equation}
and speed measure 
\begin{equation}\label{eq:090224a3}
\m(dx)=\frac{dx}{\s' (x)\sigma^2(x)} \text{ on } \mathcal{B} (J^\circ), \quad \m (\{b\}) = \infty \text{ for an accessible boundary point \(b\)}.
\end{equation}
With these objects at hand, we remark that Feller's test from \cite[Theorem~5.5.29]{KaraShre} coincides with
\eqref{eq:170323a1}--\eqref{eq:170323a3} above.

Let us now comment on the above conditions.
It is evident that the scale function \(\s\) is of the form~\eqref{eq: nflvrS} with \(\beta = - 2\mu / \sigma^2\). Consequently, Condition~\ref{cond:050324a1} holds if and only if \(\mu^2 / \sigma^4 \in L^1_\textup{loc} (J^\circ)\). Furthermore, Condition~\ref{cond:170623a1} coincides with Condition~\ref{cond:050324a1}
because accessible boundaries are stipulated to be absorbing.

In the presence of accessible boundaries, solution processes to \eqref{eq:090224a1} are not necessarily semimartingales.
Indeed, the SDE~\eqref{eq:090224a1} drives the process $\Y$ only till an accessible boundary is hit (cf. \cite[Definition 5.5.20]{KaraShre}),
and at this time the semimartingale property can get lost,
see \cite[Section~4]{MU2015} for counterexamples.
In this regard, \cite[Corollary 3.6]{MU2015} provides necessary and sufficient condition for the semimartingale property of \(\Y\) that only depends on the coefficients \(\mu\) and \(\sigma\). For the reader's convenience, we recall this result:
$\Y$ is a semimartingale if and only if
	\begin{enumerate}
		\item[\textup{(i)}]
		the infinite boundary points of $J^\circ$ are inaccessible and,
		
		\item[\textup{(ii)}]
		for every \emph{accessible} boundary point $b\in\{l,r\}\cap\bR$, it holds
		\begin{equation}\label{eq:140324a1}
			\int_B \frac{\big|(\s(x)-\s(b))\mu(x)\big|}{\s'(x)\sigma^2(x)}\,dx<\infty
		\end{equation}
		for some (equivalently, for every) open interval $B\subsetneq J^\circ$ with $b$ as endpoint.
	\end{enumerate}
We further recall that a sufficient condition for~\eqref{eq:140324a1} is that, for every \emph{accessible} boundary point $b\in\{l,r\}\cap\bR$, it holds
\begin{equation}\label{eq:140324a2}
	\text{either }\mu\ge0\;\mu_L\text{-a.e. on }B
	\text{ or }\mu\le0\;\mu_L\text{-a.e. on }B
\end{equation}
for a sufficiently small open interval $B\subsetneq J^\circ$ with $b$ as endpoint, where $\mu_L$ denotes the Lebesgue measure (\cite[Corollary 3.11]{MU2015}).
\end{example}

At this point, we stress that the results we are going to present in this paper apply to all general diffusions which is a much richer class than the class of It\^o diffusions.
In the following examples, we only mention some striking effects that are also included in our general diffusion framework.

\begin{example}[General diffusion market with a sticky point] \label{ex: sticky}
	Another interesting class of general diffusions are the ones with stickiness. The most prominent example is sticky Brownian motion\footnote{More precisely,
		Brownian motion with state space $\bR$ and sticky at zero.},
	which is the (unique in law) solution \(\Y\) to the system 
	\[
	d \Y_t = \1_{\{ \Y_t \not = 0\}} d W_t, \quad \1_{\{\Y_t = 0\}} dt = \rho\, d L^0_t (\Y),
	\]
	where \(\rho > 0\) is a so-called stickiness parameter and \(L^0 (\Y)\) is the (right-continuous) semimartingale local time of the solution \(\Y\) in zero. For a discussion of this representation we refer to the paper \cite{EngPes}.

	The sticky Brownian motion is a general diffusion on natural scale
	with state space $\bR$
	and speed measure 
	\[
	\m (dx) = dx  + \rho \, \delta_0 (dx).	
	\]
	At this point, we notice that the sticky Brownian motion cannot be realized as a solution to an SDE as in Example~\ref{ex:090224a1} because the speed measure in Example~\ref{ex:090224a1} is always absolutely continuous w.r.t. the Lebesgue measure.
It is worth noting that a.a. paths of $Y$ spend positive Lebesgue time in zero without having intervals of zeros.
\end{example}

\begin{example}[General diffusion market with a countable dense set of sticky points] \label{ex:260624a1}
A rather ``extreme'' version of Example~\ref{ex: sticky} is due to Feller and McKean (see \cite[Section 2.12]{freedman} for more comments).
Let \(D = \{d_1, d_2, \dots \}\) be a countable dense subset of \(\bR\) and let \(\m\) be a measure on \((\bR, \mathcal{B}(\bR))\) that is concentrated on \(D\) and such that \(\m (\{d_k\}) > 0\) and \(\sum_{k = 1}^\infty \m (\{d_k\}) < \infty\). Then, \(\m\) is a valid speed measure.
Let $x_0\in\bR$ and let $\Y$ be a general diffusion on natural scale with state space $\bR$, speed measure \(\m\) and starting value \(x_0\).
Like in the previous example, $Y$ cannot be realized as a solution to an SDE as in Example~\ref{ex:090224a1}.
While this process is continuous, it has quite peculiar paths in the sense that it spends positive Lebesgue time in every point $d_k$ (without intervals of constancy) and zero Lebesgue time in every point in $\bR\setminus D$. Moreover,
one can prove that \(\P(\Y_t \in D) = 1\) for all \(t>0\),
cf. \cite[Lemma 2.144]{freedman} or \cite[Section 4.11]{itokean74}.
\end{example} 

\begin{example}[General diffusion market with skewness] \label{ex: skew}
	Another interesting class are diffusions with skewness. The most basic example is the skew Brownian motion\footnote{More precisely, Brownian motion with state space \(\bR\) and skew at zero.},
which is a solution process \(\Y\) of the equation
	\[
	d \Y_t = d W_t + (2 \alpha - 1) d \ell^0_t (\Y), 
	\] 
	where \(\alpha \in (0,1) \setminus \{\frac{1}{2}\}\) is the so-called skewness parameter and \(\ell^0 (\Y)\) is the symmetric semimartingale local time of \(\Y\) in zero. 
	
	It is well-known
	(see \cite{HS}, \cite[Appendix 1.12]{borodin_salminen} or \cite[Exercise X.2.24]{RY})
	that \(\Y\) is a general diffusion with state space \(\bR\), scale function 
	\[
	\s (x) = \begin{cases} (1 - \alpha) x, & x \geq 0, \\ \alpha x, & x < 0,\end{cases} 
	\] 
	and speed measure
	\[
	\m (dx) = \frac{dx}{v_\alpha (x)}
	\qquad\text{with}\qquad
	v_\alpha (x) = \begin{cases} 1 - \alpha, & x \geq 0, \\ \alpha, & x < 0.\end{cases}
	\]
As the scale function is not continuously differentiable, the skew Brownian motion cannot be realized as a solution to an SDE as in Example~\ref{ex:090224a1}.
Finally, we remark
that Condition~\ref{cond:050324a1} is violated for the skew Brownian motion model. 
\end{example}

\subsection{Main results: finite time horizon}
In this subsection, we provide deterministic characterizations for NA, NUPBR and NFLVR in the case $T<\infty$. Evidently, by virtue of Theorem~\ref{th:080224a1}, the main work lies in characterizations of NA and NUPBR, which are given in the following two theorems.

\begin{theorem}\label{th:090224a1}
Assume that $T<\infty$.
The financial market \((\mathbb{B}, \Y)\) satisfies NA if and only if Condition~\ref{cond:170623a1} holds and, for every $b\in\{l,r\}\cap\bR$, at least one of conditions \eqref{eq: b good}--\eqref{eq: very good add cond} is satisfied.
\end{theorem}

\begin{theorem}\label{th:090224a2}
Assume that $T<\infty$.
The financial market \((\mathbb{B}, \Y)\) satisfies NUPBR if and only if Condition~\ref{cond:170623a1} holds and, for every $b\in\{l,r\}\cap\bR$, at least one of conditions \textup{(a)--(b)} below is satisfied:
\begin{enumerate}
\item[\textup{(a)}]
condition~\eqref{eq: b good} holds;
\item[\textup{(b)}]
the boundary point $b$ is inaccessible for~$\Y$.
\end{enumerate}
\end{theorem}

Combining these results with Theorem~\ref{th:080224a1} gives us the following characterization of NFLVR.
A version of this result for a canonical diffusion setting is given by \cite[Theorem~3.5, Remark~3.10]{criensurusov22}.
For the It\^o diffusion market from Example~\ref{ex:090224a1} with $J^\circ=(0,\infty)$,
a deterministic characterization of NFLVR in terms of the drift and volatility coefficients was established in \cite[Theorem 3.1]{MU12b}.

\begin{corollary}\label{cor:090224a1}
Assume that $T<\infty$.
The financial market \((\mathbb{B}, \Y)\) satisfies NFLVR if and only if Condition~\ref{cond:170623a1} holds and, for every $b\in\{l,r\}\cap\bR$, at least one of conditions \textup{(A)--(B)} below is satisfied:
\begin{enumerate}
\item[\textup{(A)}]
condition~\eqref{eq: b good} holds;
\item[\textup{(B)}]
the boundary point $b$ is inaccessible for $\Y$ and \eqref{eq: very good add cond} holds.
\end{enumerate}
\end{corollary}

Evidently, the above characterizations of NA, NUPBR and NFLVR do not depend on the time horizon $T\in(0,\infty)$. Hence, we also have the following:

\begin{corollary}
If the financial market $(\mathbb B,\Y)$
satisfies NA (resp., NUPBR; resp., NFLVR) for \emph{some} $T\in(0,\infty)$,
then it
satisfies NA (resp., NUPBR; resp., NFLVR) for \emph{all} $T\in(0,\infty)$.
\end{corollary}

To apply the above characterizations in practice, one often has to understand the finiteness or infiniteness of several (deterministic) integrals.
The following lemma shows some interdependencies between the involved integrals, which are useful for verifying certain conditions in specific situations
(see Examples~\ref{ex:090224a2} and~\ref{ex:090224a3} below).

\begin{lemma}\label{lem:170623a1}
Assume that Condition~\ref{cond:050324a1} holds and let \(b \in \{l, r\} \cap \bR\).
\begin{enumerate}
\item[\textup{(i)}]
If $|\s(b)|=\infty$, then \eqref{eq: b good} for $b$ is violated.

\item[\textup{(ii)}]
If \eqref{eq: b good} holds for $b$, then $|\s(b)|<\infty$.

\item[\textup{(iii)}]
Suppose that one of the conditions \textup{(iii.a)--(iii.b)} below is satisfied:
\begin{enumerate}
\item[\textup{(iii.a)}]
the boundary point $b$ is accessible for $\Y$ and \eqref{eq: very good add cond} holds;
\item[\textup{(iii.b)}]
the boundary point $b$ is inaccessible for $\Y$ and \eqref{eq: very good add cond} is violated.
\end{enumerate}
Then, \eqref{eq: b good} for $b$ is violated.

\item[\textup{(iv)}]
If \eqref{eq: b good} holds for $b$, then one of the conditions \textup{(iv.a)--(iv.b)} below is satisfied:
\begin{enumerate}
\item[\textup{(iv.a)}]
the boundary point $b$ is inaccessible for $\Y$ and \eqref{eq: very good add cond} holds;
\item[\textup{(iv.b)}]
the boundary point $b$ is accessible for $\Y$ and \eqref{eq: very good add cond} is violated.
\end{enumerate}
\end{enumerate}
\end{lemma}

\begin{example}\label{ex:090224a2}
In the setting of Example~\ref{ex:090224a1}, we take $x_0\in J^\circ=(0,\infty)$, $\mu(x)=1/x$ and $\sigma\equiv1$.
This means that $\Y$ is a Bessel process of dimension $3$ started at $x_0$.
This is a famous example for that
Delbaen and Schachermayer \cite{DS1995a} showed the existence of arbitrage (with admissible strategies).
Moreover, Karatzas and Kardaras \cite[Example 4.6]{karatzaskardaras07} constructed an arbitrage in closed form in this example.
To provide another perspective, we discuss how our theorems apply.

As $\mu^2 / \sigma^4 \in L^1_\textup{loc} (J^\circ)$,
Condition~\ref{cond:170623a1} is satisfied in this example (recall the discussion in Example~\ref{ex:090224a1}).
To apply our results, we need to verify \eqref{eq: b good} and~\eqref{eq: very good add cond} for finite boundary points (i.e., only for $b=0$ here) and to check whether $0$ is accessible for $\Y$.
Straightforward calculations reveal that $\s(0)=-\infty$ (hence $0$ is inaccessible for $\Y$) and that \eqref{eq: very good add cond} with $b=0$ is violated.
Further calculations are not needed: the fact that \eqref{eq: b good} with $b=0$ is violated now follows from Lemma~\ref{lem:170623a1}.

Summing up, for every finite time horizon, in this example NUPBR holds, while NA and NFLVR are violated.
\end{example}

\begin{example}\label{ex:090224a3}
In the setting of Example~\ref{ex:090224a1}, we take $x_0\in J^\circ=(0,\infty)$, $\mu\equiv-1$ and $\sigma(x)=x$.
As \eqref{eq:140324a2} is satisfied, $\Y$ is a semimartingale (cf. the discussion around~\eqref{eq:140324a1}).

As $\mu^2 / \sigma^4 \in L^1_\textup{loc} (J^\circ)$,
Condition~\ref{cond:170623a1} is satisfied.
Straightforward calculations reveal that the origin is accessible for $\Y$ and that \eqref{eq: very good add cond} with $b=0$ holds.
Again, Lemma~\ref{lem:170623a1} implies that \eqref{eq: b good} with $b=0$ is violated.

Summing up, for every finite time horizon, in this example NA holds, while NUPBR and NFLVR are violated.
\end{example}

\begin{discussion}[Relations between the no-arbitrage notions]\label{disc:090224a1}
Examples \ref{ex:090224a2} and~\ref{ex:090224a3} show that the notions NA and NUPBR are \emph{in a general position} (that is, neither implies the other), while their relation to NFLVR is given in Theorem~\ref{th:080224a1}.
In general, this is well understood.
On the contrary, the following observation seems to be new.
There are very natural classes of general diffusion markets, where NA and NUPBR are \emph{not} in a general position (that is, one of them implies the other).
Fix some $T<\infty$ and consider the following subsettings of our setting.

\smallskip
\emph{Subsetting~1:} $J=\bR$.
Here, we have
$$
\text{NFLVR}
\;\;\Longleftrightarrow\;\;
\text{NA}
\;\;\Longleftrightarrow\;\;
\text{NUPBR}
\;\;\Longleftrightarrow\;\;
\text{Condition~\ref{cond:050324a1}}.
$$

\smallskip
\emph{Subsetting~2:} $J\subsetneq\bR$ and all finite boundary points of $J$ are inaccessible for $Y$.
Here, we have
$$
\text{NFLVR}
\;\;\Longleftrightarrow\;\;
\text{NA}
\;\;\Longrightarrow\;\;
\text{NUPBR}
\;\;\Longleftrightarrow\;\;
\text{Condition~\ref{cond:050324a1}},
$$
but NUPBR does not imply NA, as illustrated by Example~\ref{ex:090224a2}.

\smallskip 
\emph{Subsetting~3:} $J\subsetneq\bR$ and all finite boundary points of $J$ are accessible for $Y$.
Here, we have
$$
\text{NFLVR}
\;\;\Longleftrightarrow\;\;
\text{NUPBR}
\;\;\Longrightarrow\;\;
\text{NA}
\;\;\Longrightarrow\;\;
\text{Condition~\ref{cond:170623a1}},
$$
but NA does not imply NUPBR as illustrated by Example~\ref{ex:090224a3}.
Moreover, in Subsetting~3, Condition~\ref{cond:170623a1} does not imply NA (see Example~\ref{ex:090224a4} below).
\end{discussion}

\begin{example}\label{ex:090224a4}
Motivated by the previous discussion, let us also construct an example within Subsetting~3 of Discussion~\ref{disc:090224a1}, where Condition~\ref{cond:170623a1} is satisfied but NA fails.

To this end, in the setting of Example~\ref{ex:090224a1}, we take $\delta\in(0,2)$, $x_0\in J^\circ=(0,\infty)$, $\mu\equiv\delta$ and $\sigma(x)=2\sqrt x$.
It is well-known that the origin is accessible for~$Y$ (one could simply verify~\eqref{eq:170323a1} with $b=0$),
so we are in Subsetting~3 of Discussion~\ref{disc:090224a1}.\footnote{This process $Y$ is a squared Bessel process of dimension $\delta$ started at $x_0$ and absorbed in the origin (the absorbing boundary condition comes from the setting of Example~\ref{ex:090224a1}).}
Condition~\ref{cond:170623a1} is clearly satisfied,
while straightforward calculations show that both \eqref{eq: b good} and \eqref{eq: very good add cond} with $b=0$ are violated.
By Theorem~\ref{th:090224a1}, NA fails on every finite time horizon.
\end{example}

\begin{example}
For the skew market model from Example~\ref{ex: skew}, NA, NUPBR and NFLVR all fail on every finite time horizon,
because Condition~\ref{cond:050324a1} is violated.
\end{example}

\subsection{Main results: infinite time horizon}\label{subsec:mr_infinite}
Next, we consider the case $T=\infty$.
First, suppose that $Y$ is on natural scale (i.e., $\s=\on{id}$, up to an increasing affine transformation).
Then, there is nothing to study:
\begin{itemize}
\item
If every finite boundary point $b\in\{l,r\}\cap\bR$ is either inaccessible or absorbing for $Y$, then $Y$ is an \(\mathbf{F}\)-\(\P\)-local martingale.
This means that $\P$ is an ELMM.
Hence, NFLVR, NA and NUPBR hold (see Theorems \ref{theo: FTAP NFLVR} and~\ref{th:080224a1}).

\item
If, on the contrary, $Y$ has an (instantaneously or slowly) reflecting finite boundary point $b\in\{l,r\}\cap\bR$, then NA and NUPBR, hence also NFLVR, are violated (see Lemmata \ref{lem:060324a2} and~\ref{lem:060324a4} below).
\end{itemize}
Therefore, below we only consider the case that $Y$ is \emph{not} on natural scale.

\begin{theorem}\label{th:100224a1}
Suppose that $T=\infty$ and that $Y$ is \emph{not} on natural scale.
The financial market \((\mathbb{B}, \Y)\) satisfies NA if and only if Condition~\ref{cond:170623a1} holds and, for every $b\in\{l,r\}$,
one of the following conditions \textup{(a)--(b)} is satisfied:
\begin{enumerate}
\item[\textup{(a)}]
$|b|<\infty$ and \eqref{eq: b good} holds;
\item[\textup{(b)}]
$|b|=\infty$ and the other boundary point $b^*$ satisfies~\textup{(a)}.
\end{enumerate}
\end{theorem}

The reason for excluding the case where $Y$ is on natural scale from this characterization (and from the following ones) is that,
in this case with $J=\bR$,
neither (a) nor (b) is satisfied, while, as discussed above, NFLVR, NA and NUPBR hold.

\begin{theorem}\label{th:100224a2}
Suppose that $T=\infty$ and that $Y$ is \emph{not} on natural scale.
The financial market \((\mathbb{B}, \Y)\) satisfies NUPBR if and only if Condition~\ref{cond:170623a1} holds and, for every $b\in\{l,r\}$,
one of the following conditions \textup{(A)--(B)} is satisfied:
\begin{enumerate}
\item[\textup{(A)}]
$|b|<\infty$ and \eqref{eq: b good} holds;
\item[\textup{(B)}]
$|\s(b)|=\infty$ and the other boundary point $b^*$ satisfies~\textup{(A)}.
\end{enumerate}
\end{theorem}

In connection with the formulation of Theorem~\ref{th:100224a2}, we remark that, under Condition~\ref{cond:170623a1}, a boundary point $b\in\{l,r\}$ cannot satisfy (A) and (B) simultaneously.
While this is not easy to see from (A) and (B) directly, this fact follows from Lemma~\ref{lem:170623a1}:
if $|b|<\infty$ and $|\s(b)|=\infty$, then \eqref{eq: b good} is violated.

The above theorems together with Theorem~\ref{th:080224a1} give us the following characterization of NFLVR on the infinite time horizon.
A version of this result for a canonical diffusion setting is given by \cite[Theorem~3.8, Remark~3.10]{criensurusov22}.
For the It\^o diffusion market from Example~\ref{ex:090224a1}
with $J^\circ=(0,\infty)$,
a deterministic characterization in terms of the drift and volatility coefficients was established in \cite[Theorem 3.5]{MU12b}.

\begin{corollary}\label{cor:100224a1}
Suppose that $T=\infty$ and that $Y$ is \emph{not} on natural scale.
The financial market \((\mathbb{B}, \Y)\) satisfies NFLVR if and only if Condition~\ref{cond:170623a1} holds and, for every $b\in\{l,r\}$,
one of the following conditions \textup{(I)--(II)} is satisfied:
\begin{enumerate}
\item[\textup{(I)}]
$|b|<\infty$ and \eqref{eq: b good} holds;
\item[\textup{(II)}]
$|b|=\infty$, $|\s(b)|=\infty$ and the other boundary point $b^*$ satisfies~\textup{(I)}.
\end{enumerate}
\end{corollary}

We highlight the following technical difference between the characterizations on a finite time horizon and those on the infinite time horizon.
While in the former the conditions additional to Condition~\ref{cond:170623a1} are imposed only on finite boundary points $b\in\{l,r\}\cap\bR$,
it is important that in the latter characterizations the conditions additional to Condition~\ref{cond:170623a1} are imposed on all boundary points $b\in\{l,r\}$.

An immediate consequence of the above characterizations is that NA or NUPBR (or NFLVR) can only hold on the infinite time horizon for a setting \emph{not} on natural scale when at least one boundary point is finite. 

\begin{corollary}\label{cor:100224a2}
Suppose that $T=\infty$,
that $\Y$ is \emph{not} on natural scale
and that $J=\bR$.
Then NA and NUPBR (hence also NFLVR) are violated.
\end{corollary}

\begin{example}\label{ex:100224a1}
Let $J=\bR$ and $\Y$ be a Brownian motion with linear drift, i.e., 
$\Y_t=W_t+t$.
By Corollary~\ref{cor:100224a2}, NA and NUPBR (hence also NFLVR) are violated on the infinite time horizon.
On the other hand, NFLVR (hence also NA and NUPBR) hold on every finite time horizon (one may construct an ELMM via Girsanov's theorem or, alternatively, 
observe that we are in Subsetting~1 of Discussion~\ref{disc:090224a1}, while Condition~\ref{cond:050324a1} is trivially satisfied).
\end{example}

A continuous semimartingale $(S_t)_{t\in\bR_+}$ is said to be \emph{closable} if the limit $S_\infty\triangleq\lim_{t\to\infty}S_t$ exists finitely a.s. and $S=(S_t)_{t\in[0,\infty]}$ is a \emph{semimartingale on $[0,\infty]$} in the sense that there is a decomposition $S=S_0+M+A$, $M_0=A_0=0$, with $M=(M_t)_{t\in[0,\infty]}$ being a \emph{continuous local martingale on $[0,\infty]$}
(i.e., there exists a localizing sequence $(\tau_n)_{n = 1}^\infty$ of stopping times with $\{\tau_n=\infty\}\nearrow\Omega$ a.s.)
and $A=(A_t)_{t\in[0,\infty]}$ being a continuous adapted process that has finite variation on the closed interval $[0,\infty]$.

In Remark~\ref{rem:080224a1}, we already discussed the difference between local and global versions of no arbitrage. We saw that in general for each of the notions NA, NUPBR and NFLVR, their local version is strictly weaker than their global version. Inspecting the counterexamples from \cite[Remark~5.3]{balint_schweizer_2020} and Example~\ref{ex:100224a1}, it turns out that the asset price processes are not closable. 
Indeed, the process \(Y\) from Example~\ref{ex:100224a1} is not closable, because
$$
W_t + t = t\, \Big (\frac{W_t}{t} + 1\Big) \to \infty\quad\text{a.s.}
$$
by the strong law of large numbers for Brownian motion. It is natural to ask whether all feasible counterexamples fail to be closable. In other words,
for each no-arbitrage notion $\text{NN}\in\{\text{NA},\text{NUPBR},\text{NFLVR}\}$,
we ask whether
\begin{center}
    NN holds for all finite time horizons and \(Y\) is closable
    \quad \(\overset{?\,}{\Longrightarrow}\) \quad
    NN holds for \(T =\infty\).
\end{center}
It turns out that this implication is {\em not true} and we provide a counterexample in Example~\ref{ex:140324a1} below.
While the process \(Y\) from Example~\ref{ex:140324a1} will have state space $(0,\infty)$, it is also natural to ask whether we could give a counterexample with state space $[0,\infty)$.
Our characterization results imply that this is impossible for NUPBR and NFLVR but possible for NA, as shown in the following example.

\begin{example}\label{ex:140324a2}
	In the setting of Example~\ref{ex:090224a1} we take $x_0\in J^\circ=(0,\infty)$, $\mu\equiv-1$ and $\sigma(x)=x$.
	In other words, we revisit Example~\ref{ex:090224a3}, where we observed that NA holds on every finite time horizon.
	Recalling that \eqref{eq: b good} with $b=0$ is violated, we infer from Theorem~\ref{th:100224a1} that NA fails on the infinite time horizon.

It remains to explain that $Y$ is closable.
As the origin is accessible for $Y$ and \(\s (\infty) = \infty\),
a.s. \(Y\) hits the origin in finite time and gets absorbed there.
Consequently, \(Y_\infty = 0\) a.s.
and the semimartingale property of \((Y_t)_{t \in \bR_+}\) transfers directly to \((Y_t)_{t \in [0, \infty]}\). We conclude that \(Y\) is closable.
\end{example}

\subsection{Generalization: integrated diffusion markets}
In this subsection, we discuss a natural extension of our previous results to so-called {\em integrated diffusion markets}. 
We again consider a finite or infinite deterministic time horizon $T\in(0,\infty]$ and use Agreement~\ref{agr:050224a1}.
Let \(\sigma \in L (\mathbb{B}, \Y)\) be a \emph{non-vanishing} process in the sense that $\P$-a.s. \(\sigma_t \ne 0\) for all $t\in[0,T]$.
We set
\begin{equation}\label{eq:100224a1}
\S_t \triangleq \S_0 + \int_0^t \sigma_s \,d \Y_s, \quad t \in [0, T],
\end{equation}
where \(\S_0\colon\Omega\to\bR\) is an $\cF_0$-measurable initial value.
It follows directly from the Definitions \ref{def:100224a1}--\ref{def:100224a3} that,
for each no-arbitrage notion $\text{NN}\in\{\text{NA},\text{NUPBR},\text{NFLVR}\}$,
$$
\text{NN in the market }(\mathbb B,\S)
\;\;\Longleftrightarrow\;\;
\text{NN in the market }(\mathbb B,\Y).
$$
In this sense, our results also provide deterministic characterizations of NA, NUPBR and NFLVR in the market $(\mathbb B,\S)$ both for finite and for infinite time horizon.

Introducing a non-vanishing volatility process \(\sigma\) covers classical exponential models very naturally.
Namely, if \(\S = \S_0\, \mathcal{E} (\Y)\) is the stochastic exponential of a general diffusion semimartingale \(\Y\) with a strictly positive $\cF_0$-measurable initial value \(\S_0\colon\Omega\to(0,\infty)\), then
\[
\S_t = \S_0 + \int_0^t \S_s \,d \Y_s, \quad t \in [0, T],
\]		
is a particular case of~\eqref{eq:100224a1} (with $\sigma=\S$, which is strictly positive, hence non-vanishing, as needed).

\begin{example}\label{ex:140324a3}
Let $\Y$ have state space $J=\bR$.
Let $\S=\S_0\,\mathcal E(\Y)$ (a widespread approach) or, more generally, let $\S$ be given by~\eqref{eq:100224a1}.
For the financial market $(\mathbb B,\S)$, we can immediately state the following.
\begin{itemize}
\item
If $T<\infty$, then
$$
\text{NFLVR}
\;\;\Longleftrightarrow\;\;
\text{NA}
\;\;\Longleftrightarrow\;\;
\text{NUPBR}
\;\;\Longleftrightarrow\;\;
\text{Condition~\ref{cond:050324a1}}
$$
(see Subsetting~1 of Discussion~\ref{disc:090224a1}).

\item
If $T=\infty$, then \(\on{NN} \in \{ \on{NA}, \on{NUPBR}, \on{NFLVR}\}\) holds if and only if \(\Y\) is on natural scale, see Corollary~\ref{cor:100224a2} and the beginning of Section~\ref{subsec:mr_infinite}.
\end{itemize}
\end{example}

\begin{example}\label{ex:140324a1}
Let $J=(0,\infty)$ and $\Y$ be the stochastic exponential of a Brownian motion with drift, i.e., 
\begin{equation}\label{eq:140324a3}
d\Y_t= \Y_t (a\,dt+ dW_t),\quad\Y_0=x_0\in(0,\infty),
\end{equation}
where $a\in\bR\setminus\{0\}$.
Example~\ref{ex:140324a3} applied to $(W_t+at)_{t \in [0, T]}$ implies that the notions NA, NUPBR and NFLVR hold on every finite time horizon and are violated on the infinite time horizon.

In relation with the discussion after Example~\ref{ex:100224a1}, we show that $\Y$ is a closable semimartingale whenever $a<1/2$.
Take such an $a\in(-\infty,1/2)\setminus\{0\}$.
Notice that the canonical decomposition $\Y=x_0+A+M$, $A_0=M_0=0$, is provided in~\eqref{eq:140324a3} ($A$ is the finite variation part, $M$ is the local martingale part of $\Y$).
From the formula
$\Y_t=x_0\exp\{(a-1/2)t+W_t\}$ it is immediate that $\Y_\infty\triangleq\lim_{t\to\infty}\Y_t=0$ a.s. and
$\int_0^\infty |a|\Y_s\,ds<\infty$ a.s.
The latter means that $A$ has finite variation on the whole time interval $[0,\infty]$.
In particular, we have finite limits $A_\infty\triangleq\lim_{t\to\infty}A_t$ and $M_\infty\triangleq\lim_{t\to\infty}M_t$.
Finally, the fact that $M$ is a continuous local martingale on $[0,\infty]$ follows from the fact that $M$ is a continuous local martingale on $\bR_+$ with a finite limit $M_\infty$.
Indeed, the sequence of stopping times $\tau_n\triangleq\inf\{t\geq 0\colon |M_t|\ge n\}$, $n\in\mathbb N$, is localizing for $M$ with $\{\tau_n=\infty\}\nearrow\Omega$ a.s.
\end{example}

\subsection{Existence of an EMM}
We complete our presentation with a characterization of a no-arbitrage notion that is even stronger than NFLVR.
Under different names, Sin \cite{sin}, Yan \cite{yan} and Cherny \cite{cherny} introduced some strengthenings of NFLVR. For each of their notions, they proved equivalence to the existence of an {\em equivalent martingale measure (EMM)}, i.e., an equivalent measure that turns the asset price process into a
{\em uniformly integrable}\footnote{Every martingale on a time interval closed from the right is uniformly integrable, so, in the case $T<\infty$, uniform integrability is a void requirement. On the contrary, when $T=\infty$, uniform integrability is essential in the no-arbitrage notion of \cite{cherny}. We also notice that \cite{sin} and \cite{yan} consider only the case $T<\infty$.}
martingale.
Deterministic characterizations of the existence of EMMs in the It\^o diffusion framework from Example~\ref{ex:090224a1}
with $J^\circ=(0,\infty)$
can be found in Section~3.2 of \cite{MU12b}.

We consider a general financial market \((\mathbb{B}, Y)\) with finite or infinite deterministic time horizon $T\in(0,\infty]$ and use Agreement~\ref{agr:050224a1}.

\begin{theorem} \label{theo: NGA}
	Let $T<\infty$.
	There exists an EMM for the market \((\mathbb{B}, Y)\), i.e., a probability measure \(\Q \sim \P\) such that \(Y\) is an \(\mathbf{F}\)-\(\Q\)-martingale, if and only if \((\mathbb{B}, Y)\) satisfies NFLVR and,
	for every infinite boundary point \(b \in \{l,r\}\setminus\bR\),
	\[
	\int_B | x | \s' (x) \m (dx) = \infty
	\]
	holds for some (equivalently, for every) open interval \(B \subsetneq J^\circ\) with \(b\) as endpoint.
\end{theorem}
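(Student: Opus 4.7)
The plan is to combine the NFLVR characterization of \Cref{theo: FTAP NFLVR} with Kotani's theorem~\cite{kotani} on the true-martingale property of natural-scale diffusions. First, by \Cref{theo: FTAP NFLVR}, NFLVR is equivalent to the existence of an ELMM~$\Q$ for $(\mathbb{B}, Y)$. Under such a $\Q$, the process $Y$ is a continuous $\mathbf{A}$-$\Q$-local martingale, hence on natural scale under~$\Q$. Using the change-of-measure techniques of \cite{criensurusov22}, $Y$ retains the regular strong Markov property under $\Q$, and its $\Q$-speed measure takes the form $\m^\Q(dx) = \s'(x)\,\m(dx)$. In particular, the $\Q$-law of $Y$ is determined by $\s$ and $\m$ alone, so whether $Y$ is a true martingale under an ELMM is a deterministic property of $(\s,\m)$.

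Next, I would invoke Kotani's theorem: a natural-scale diffusion $X$ on $J$ with speed measure $\mu$, starting at an interior point, is a true martingale on $[0,T]$ for finite~$T$ if and only if, at every infinite boundary point $b\in\{l,r\}\setminus\mathbb{R}$,
\[
\int_B |x|\,\mu(dx) = \infty
\]
for some (equivalently, for every) open interval $B\subsetneq J^\circ$ with $b$ as endpoint. Applied to $X = Y$ under $\Q$ with $\mu = \m^\Q = \s'\m$, this yields exactly the integrability condition in the statement of the theorem.

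Combining these two steps closes both directions of the equivalence. If NFLVR and the integrability condition both hold, then any ELMM~$\Q$ supplied by \Cref{theo: FTAP NFLVR} makes $Y$ a true $\Q$-martingale by Kotani's theorem, so $\Q$ is an EMM. Conversely, if an EMM~$\Q$ exists, then $\Q$ is in particular an ELMM (so NFLVR holds) and $Y$ is a true $\Q$-martingale, whence Kotani's theorem forces the integrability condition. The main obstacle is to rigorously establish both the formula $\m^\Q = \s'\m$ and the preservation of the regular strong Markov property of $Y$ under the change of measure to~$\Q$ in our general Markovian-filtration setup; both points should follow from the machinery developed in \cite{criensurusov22}, after verifying that Kotani's theorem (originally stated in a canonical diffusion framework) applies verbatim once the $\Q$-law of $Y$ has been identified with the natural-scale diffusion with speed measure $\s'\m$.
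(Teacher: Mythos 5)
Your proposal is correct and follows essentially the same route as the paper: identify the law of \(Y\) under any ELMM with the law \(\Q_{x_0}\) of the natural-scale diffusion with speed measure \(\s'\,d\m\) (Lemma~\ref{lem: ELMM canonical setting}) and then apply Kotani's theorem to that diffusion. The ``main obstacle'' you flag is resolved in the paper not by showing that \(Y\) retains the strong Markov property under \(\Q\) w.r.t.\ \(\mathbf{A}\), but by pushing forward to the canonical market (where Lemma~\ref{lem: ELMM canonical setting} applies) and then transferring the true martingale property back to the general filtration via the Markov property under \(\P\) (Lemma~\ref{lem: markov martingale}, as in Method~2 of Section~\ref{sec: pf NFLVR; (ii)}).
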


As for the notions NA, NUPBR and NFLVR, the above characterization of the existence of an EMM does not depend on the time horizon $T\in(0,\infty)$. Hence, we also have the following:

\begin{corollary}
If there exists an EMM for the market $(\mathbb B,\Y)$
for \emph{some} $T\in(0,\infty)$,
then there exists an EMM for this market
for \emph{all} $T\in(0,\infty)$.
\end{corollary}

\begin{theorem} \label{theo: NGA2}
	Let $T=\infty$.
	There exists an EMM for the market \((\mathbb{B}, Y)\), i.e., a probability measure \(\Q \sim \P\) such that \(Y\) is a uniformly integrable \(\mathbf{F}\)-\(\Q\)-martingale, if and only if NFLVR holds and both boundaries \(l\) and \(r\) are finite, i.e., \(l, r \in \bR\).
\end{theorem}

It is interesting to note that in our setting an EMM for the infinite time horizon \(T = \infty\) can only exist if the state space \(J\) is bounded.

\section{Outline and comments on the proofs} \label{sec: outline proof}

Before we prove our main theorems in the upcoming sections, we comment on proof tactics. Let us start with the setting of Example~\ref{ex:090224a1}, i.e., we assume that
\begin{equation} \label{eq: SDE}
	\begin{split}
	d Y_t &= \mu (Y_t) dt + \sigma (Y_t) d W_t, \quad t < \zeta (Y) \triangleq \inf \{t \in [0, T] \colon Y_t \not \in J^\circ \}, \\
	Y_0 &= x_0, 
\end{split}
\end{equation}
where \(\mu, \sigma \colon J^\circ \to \bR\) satisfy the Engelbert--Schmidt conditions from \eqref{eq: ESC}. 
To understand NUPBR in such a market we need to understand when the process 
\begin{align*}
	Z_t &\triangleq \begin{cases}\exp \big\{ \int_0^t \theta_s d W_s - \frac{1}{2} \int_0^t \theta^2_s ds \big\}, & t < \zeta (Y), \\ 
	 \liminf_{s \nearrow \zeta (Y)} Z_s,& t \geq \zeta (Y), \end{cases}
\end{align*}
with
\[
\theta_t \triangleq - \Big(\frac{\mu}{\sigma}\Big)(Y_t),
\]
defines a positive local martingale. We stress that this question consists of two parts, its well-definedness as a local martingale and the strict positivity.
Related to the structure condition from Theorem~\ref{theo: SC}, it is essentially known that these properties hold if and only if a.s. 
\[
\int_0^{T \wedge \zeta (Y)} \theta^2_s \, ds < \infty. 
\]
For the It\^o diffusion setting, the finiteness of such an additive functional can be characterized via the coefficients \(\mu\) and \(\sigma\), see \cite{MU12ECP} for details.
The question of NA has a similar flavor.
Essentially, one needs to understand when the above process \(Z\) is a true martingale but it does not need to be strictly positive anymore.
A delicate point here is that, in the case where $Z$ is a strict local martingale,
it could a priori be the case that there is an ACLMM with a different density process, cf.
\cite{DS1998counter} or \cite[Chapter 10]{DS2006} for a counterexample illustrating this point.
As a by-product of our arguments, we establish that such counterexamples are impossible in our setting, even when the underlying filtration is larger than the one generated by $\Y$.
The key ingredient for this fact is the strong Markov property of \(Y\) w.r.t. the underlying filtration.
Additional technicalities arise from the fact that the existence of an ACLMM is only a necessary condition for NA (recall Theorem~\ref{theo: FTAP NA})
and that we actually have to investigate the existence of ACLMMs for a family of shifted market models.

\smallskip
A crucial difference between the It\^o diffusion and our general diffusion setting is that the semimartingale decomposition for the It\^o diffusion setting is explicitly given from the outset (which allows a direct application of Theorem~\ref{theo: SC}, for instance).
In particular, this gives access to a tractable representation of a candidate for the minimal SMD.
To see how we can still benefit from the above consideration, let us reformulate the formula for \(Z\) in terms of the natural scale transformation of the It\^o diffusion \(Y\). 
Recall that the scale function of the above It\^o diffusion is given by 
\begin{align*}
		\s (x) = \int^x \exp \Big\{ -\int^y \frac{2\mu (z)}{\sigma^2 (z)}\, dz \Big\} \, dy, \quad x \in J^\circ.
\end{align*}
It follows easily from the generalized It\^o formula (\cite[Theorem~IV.45.1]{RW2} or \cite[Lemma B.23]{criensurusov22}) that \(U \triangleq \s (Y)\) satisfies the SDE
\[
d U_t = \tilde{\sigma} (U_t) d W_t, \quad t < \zeta (Y), 
\]
where
\begin{align*}
	\tilde{\sigma} (x) \triangleq \begin{cases} \s' ( \s^{-1} (x)) \sigma (\s^{-1} (x)), & x \in \s (J^\circ), \\ 0, & x \not \in \s (J^\circ), \end{cases} 
\end{align*}
cf., for example, \cite[Proposition~5.5.13]{KaraShre}. Now, for \(t < \zeta (Y)\), we may write 
\begin{align*}
	\int_0^t \theta_s d W_s &= \int_0^t \frac{\theta_s}{\tilde{\sigma} (U_s)} d U_s 
	\\&= - \int_0^t \frac{\mu (\s^{-1} (U_s)) }{\sigma^2 (\s^{-1} (U_s)) \s' (\s^{-1} (U_s))} d U_s 
	\\&= \int_0^t  \frac{\s'' (\s^{-1} (U_s))}{2\,  (\s' (\s^{-1} (U_s)))^2} d U_s.
\end{align*}
This formula does not rely on the fact that \(Y\) has SDE dynamics. Instead, the important point is that the scale function \(\s\) is continuously differentiable with a strictly positive absolutely continuous derivative. In fact, this is precisely the structure that is given by~\eqref{eq: nflvrS}.

In summary, if we prove that both NUPBR and NA force the scale function to have a representation of
the type~\eqref{eq: nflvrS}, we could investigate the candidate density \(Z\) as in the It\^o diffusion case. To pitch the key ideas of the proofs for this representation, in case NUPBR holds, we can use a local change of measure up to a positive predictable time to deduce the
structure~\eqref{eq: nflvrS} from some deep results on the separating time for general diffusions that we recently proved in our paper \cite{criensurusov22}. When NA holds, the existence of an ACLMM, which is guaranteed by Theorem~\ref{theo: FTAP NA}, allows us again to reduce our question to one solved in \cite{criensurusov22}.
Of course, after \eqref{eq: nflvrS} is established,
the program outlined above  still requires a careful analysis of certain properties of general diffusions. 

\smallskip
We end this section with comments on the relation of our work to the recent paper \cite{desmettre}.
Let \(Y\) be a general diffusion with scale function \(\s\) and speed measure \(\m\), and take a function \(\varphi\) of the form 
\begin{align} \label{eq: varphi in domain}
\varphi (x) = c_1 + \int_\xi^x \Big( c_2 + \int_\xi^w 2 g (u) \m (du) \Big) \s (dw), 
\end{align} 
where \(c_1, c_2 \in \bR\) and \(g \colon J \to \bR\) is continuous with limits at the endpoints of \(J\). We presume that \(\varphi\) is strictly positive on \(J^\circ\). 
The main result from \cite{desmettre} provides necessary and sufficient conditions for the local martingale 
\begin{equation}\label{eq:270724a2}
 Z^*=\varphi (Y) \exp \Big\{ - \int_0^\cdot \frac{\mathcal{G} \varphi (Y_s)}{\varphi (Y_s)} ds \Big\}
\end{equation}
($\mathcal G = \frac{1}{2} \frac{d}{d\m} \frac{d}{d\s}$ denotes the infinitesimal generator of $Y$),
extended continuously in case the integral diverges, to be a true martingale.
In turn, by virtue of \cite[Theorem 2.7]{desmettre}, broadly speaking, these are necessary and sufficient conditions that allow us to make a change of measure from \(Y\) to a diffusion with scale \(\varphi^{-2} d \s\) and speed \(\varphi^2 d \m\). 
It is tempting to ask whether one can apply this result to deduce necessary and sufficient conditions for NFLVR.\footnote{That is, the idea is to search for a candidate density process of the form~\eqref{eq:270724a2}.
In this realm, it is worth mentioning that local martingales of the form~\eqref{eq:270724a2} are also studied in  \cite{Cetin2018,Cetin2024}.}
The answer to this question is in general negative. To understand this, we anticipate Lemma~\ref{lem:050324a1} below, which shows that we would need to apply the result with \(\varphi \equiv \sqrt{\s'}\). However, \(\sqrt{\s'}\) is not necessarily of the form~\eqref{eq: varphi in domain}. In fact, it often has much less regularity. For example, if \(Y\) has the SDE dynamics~\eqref{eq: SDE}, then 
\[
\sqrt{\s'} = \exp \Big\{ - \int^\cdot \frac{\mu (z)}{\sigma^2(z)} dz \Big\}
\]
is, in general, only absolutely continuous,
while functions of the form~\eqref{eq: varphi in domain} are continuously differentiable with absolutely continuous derivatives (use \eqref{eq:090224a2} and~\eqref{eq:090224a3}).
In other words, in this way we cannot gain full generality even in the case when $Y$ has SDE dynamics~\eqref{eq: SDE}.

\smallskip
Last, we provide an overview on the content of Section~\ref{sec:proofs} that presents the proofs of our main results.
Section~\ref{sec: preparation} provides important preliminary lemmata.
In particular, there we show that NUPBR and NA can be transferred to the canonical diffusion setting and that
these notions
are, up to a slightly smaller time horizon, uniform in the initial value of the diffusion. This last observation is very important because it allows us to establish the necessity of our deterministic characterizations through local considerations.
We further investigate the existence of a candidate SMD and show that any SMD or ACLMM can be locally related to a diffusion on natural scale.
These preparations suffice for proving our characterizations of NUPBR (but not yet for NA):
in Section~\ref{sec: pf lem:170623a1}, we prove Lemma~\ref{lem:170623a1};
the proof of Theorem~\ref{th:090224a2} is given in Section~\ref{sec: pf th:090224a2};
Theorem~\ref{th:100224a2} is proved in Section~\ref{sec: pf th:100224a2}.
Section~\ref{sec: more lemma} contains further preparatory results.
More specifically, there we lift the martingale property to a bigger filtration and investigate the existence of a candidate ACLMM.
Building upon the latter together with the lemmata from Section~\ref{sec: preparation}, we are able to characterize NA:
Theorems~\ref{th:090224a1} and \ref{th:100224a1} are proved in Section~\ref{sec: pf th:090224a1, th:100224a1}.
Although the characterizations of NFLVR immediately follow from the characterizations of NUPBR and NA, we present a direct alternative approach in Section~\ref{sec: alternative NFKVR}.
Finally, Section~\ref{sec: theo: NGA, theo: NGA2} provides the proofs of Theorems \ref{theo: NGA} and~\ref{theo: NGA2}.

\section{Proofs of main results}\label{sec:proofs}

In the proofs we need to use many fine results on diffusions and continuous semimartingales.
For the reader's convenience, in addition to original or textbook references, we also
provide references to the appendices of \cite{criensurusov22}, where most of the used results are collected.

Before we start our program, we introduce the so-called {\em canonical setting}.
Let \(\canOmega\) be the space of continuous functions \([0, T] \to J\) (recall Agreement~\ref{agr:050224a1}).
The coordinate process on \(\canOmega\) is denoted by \(\canY\), i.e., \(\canY_t (\omega) = \omega (t)\). Finally, we define
\begin{itemize}
\item
\(\canF \triangleq \sigma (\canY_t, t \in [0, T])\),

\item
$\canF_t^o \triangleq \sigma (\canY_s, s \in [0, t])$, $t \in [0, T]$,

\item
\(\canF_t \triangleq \canF^o_{t +}\), \(t \in [0, T)\), and

\item
\(\canF_{T} \triangleq \canF\).
\end{itemize}
For each point \(x \in J\), let \(\P_x\) be the law of the general diffusion~\(\Y\) with scale function $\s$, speed measure~$\m$ and initial value \(x\).
We set \(\mathbb{C}_x \triangleq (\canOmega, \canF, \canbfF = (\canF_t)_{t \in [0, T]}, \P_{x})\).
In addition to that, sometimes we also use the notation $\canbfF^o=(\canF^o_t)_{t\in[0,T]}$.
We also set
$$
T_a\triangleq\inf\{t\in[0,T] \colon \canY_t=a\}
$$
with the usual convention $\inf\emptyset\triangleq\infty$ ($a$ can be anything in $[-\infty,\infty]$) and
$$
\zeta\triangleq T_l\wedge T_r.
$$
Notice that $T_a$ and $\zeta$ are $\canbfF^o$-stopping times.

\medskip
In the following, we work partly with the space $\mathbb{B} = (\Omega, \cF, \mathbf{F} = (\cF_t)_{t \in [0, T]}, \P)$ from Section~\ref{sec:setting} and partly with the canonical setting.
In this sense, the ``canonical analog'' of $\mathbb B$ is the space $\mathbb C_{x_0}$, while the coordinate process $\canY$ seen under $\P_{x_0}$
(which is a measure on $(\canOmega,\canF)$)
is the ``canonical analog'' of the diffusion $\Y$ seen under $\P$
(which is a measure on $(\Omega,\cF)$).
On the other hand, there is no analog of the filtration $\mathbf F$ in the canonical setting.
The filtration $\canbfF$ is the ``canonical analog'' of the right-continuous filtration $\mathbf F^\Y=(\cF^\Y_t)_{t\in[0,T]}$ generated by $Y$.
We, finally, remark that we do not need new notation for the analogs of $T_a$ and $\zeta$ on $\mathbb B$, as we can always write $T_a(\Y)$ and $\zeta(\Y)$ for this purpose.

\subsection{Preliminary lemmata} \label{sec: preparation}
This section builds the foundation for the strategy of proof that was outlined in Section~\ref{sec: outline proof} above. First, we show that Condition~\ref{cond:170623a1} allows us to define a candidate for an SMD and the density of an ACLMM. Afterwards, we collect some preparations that are needed to establish Condition~\ref{cond:170623a1} under NA or NUPBR.

\smallskip 
The following lemma provides, under Condition~\ref{cond:170623a1}, an important step towards the construction of an SMD.
It is only a ``step'' because, even on a finite time horizon, we need more than just Condition~\ref{cond:170623a1} for the existence of an SMD.

\begin{lemma}[Step towards SMD]\label{lem:050324a3}
	Assume Condition~\ref{cond:170623a1} and consider the market $(\mathbb B,\Y)$ with the infinite time horizon $T=\infty$.
	Then there exists a continuous nonnegative process $\Z=(\Z_t)_{t\in\bR_+}$ with $\Z_0=1$, strictly positive on the stochastic interval $[0,\zeta(\Y))$ and stopped at the time $\zeta(\Y)$, such that $\Z$ and $\Z\Y$ are $\mathbf F$-$\P$-local martingales.
\end{lemma}

\begin{proof}
	The process \(\U \triangleq \s (\Y)\) is a regular diffusion on natural scale (\cite[Theorem~V.46.12]{RW2} or \cite[Lemma~B.3]{criensurusov22}).
	As all accessible boundaries of \(\Y\) are absorbing (this a part of Condition~\ref{cond:170623a1}), the process \(\U\) is a continuous local martingale by \cite[Corollary~V.46.15]{RW2} (or \cite[Lemma~B.2]{criensurusov22}).
	To keep our notation simple, we write \(\q \triangleq \s^{-1}\). 
	Recall that the scale function \(\s\) is given by formula~\eqref{eq: nflvrS}.
	Hence, \(\q\) is continuously differentiable with a strictly positive absolutely continuous derivative \(\q' = 1/ \s' (\q)\) on $s(J^\circ)$.
	Further, as \(\beta^2 \in L^1_\textup{loc} (J^\circ)\) by~\eqref{eq: nflvr1}, we get that
	\begin{align} \label{eq:110224a1}
		\big[ \beta (\q) \q'  \big]^2 \in L^1_\textup{loc} (\s (J^\circ) ).
	\end{align}
	For a while we will work with processes defined on the stochastic interval $[0,\zeta(\Y))$.
	Define the process $\theta$ by the formula
	\[
	\theta_t \triangleq \tfrac12\beta(\q(\U_t))\q'(\U_t), \quad t\in[0,\zeta(\Y)).
	\]
	Using the occupation time formula for continuous semimartingales (\cite[Theorem~IV.45.1]{RW2} or \cite[Lemma~B.23]{criensurusov22}) together with~\eqref{eq:110224a1}, we obtain \(\P\)-a.s.
	\begin{align} \label{eq: identity occ time formula}
		\int_0^t \theta^2_s\,d \langle \U, \U\rangle_s
		=
		\int \frac{1}{4} \big(\beta(\q(x))\q'(x)\big)^2 L^x_t (\U)\,dx < \infty, \quad t\in[0, \zeta(\Y)).
	\end{align}
	At this point, we also use that \(x \mapsto L^x_t (U)\) is a.s. c\`adl\`ag and compactly supported (see \cite[Corollary~29.18]{kallenberg}). Thanks to \eqref{eq: identity occ time formula}, 
	we can define
	\begin{align} \label{eq: main Z}
		\Z_t \triangleq \begin{cases}
			\exp \big \{\int_0^t \theta_s\,d\U_s - \frac{1}{2} \int_0^t \theta_s^2\,d \langle \U, \U\rangle_s \big \}, & t\in[0, \zeta(\Y)),
			\\
			\liminf_{s \nearrow \zeta (Y)} \Z_s, & t \in [\zeta(\Y), \infty)
		\end{cases}
	\end{align}
	(notice that $Z$ is strictly positive on $[0,\zeta(\Y))$ and stopped at $\zeta(\Y)$).
	It is well-known
	(cf., for instance, \cite[Lemma~12.43]{Jacod} or \cite[Proposition~A.4]{CTR})
	that \(\Z\) is a continuous local martingale on the whole time interval $\bR_+$.
	
	It remains to show that $\Z\Y$ ($\equiv\Z\q (\U)$) is a local martingale on $\bR_+$.
	Recall that \(\mu_L\) denotes the Lebesgue measure and observe that the second derivative \(\q''\) of \(\q\) exists \(\mu_L\)-a.e. by the absolute continuity of \(\q'\).
	Using~\eqref{eq: nflvrS}, we obtain by a straightforward calculation that
	\begin{equation}\label{eq:110224a2}
		\q'' + \beta(\q) (\q')^2 = 0\quad\mu_L\text{-a.e.}
	\end{equation}
	Using integration by parts and the generalized It\^{o} formula (\cite[Theorem~IV.45.1]{RW2} or \cite[Lemma~B.23]{criensurusov22}), we obtain \(\P\)-a.s., for all \(t \in [0, \zeta(\Y))\),
	\begin{align*}
		d \Z_t \q (\U_t) &= \q (\U_t) d \Z_t + \Z_t d \q (\U_t) +  d \langle \Z, \q (\U) \rangle_t
		\\&= \q (\U_t) d \Z_t + \Z_t d \q (\U_t) +  \Z_t \q' (\U_t) \theta_t d \langle \U, \U\rangle_t
		\\&= \q (\U_t) d \Z_t + \Z_t \q' (\U_t) d\U_t
		+ \tfrac12 \Z_t \Big( \q''(\U_t) + \beta(\q(U_t)) \big(\q'(U_t)\big)^2 \Big) d \langle \U, \U\rangle_t
		\\&= \q (\U_t) d \Z_t + \Z_t \q' (\U_t) d\U_t,
	\end{align*}
	where the last equality follows from~\eqref{eq:110224a2} via the occupation time formula (\cite[Theorem~IV.45.1]{RW2} or \cite[Lemma~B.23]{criensurusov22}).
	Using that \(U\) and \(\Z\) are local martingales stopped at $\zeta(\Y)$, the above formula
	(and recalling \cite[Lemma~12.43]{Jacod} or \cite[Proposition~A.4]{CTR})
	shows that 
	\[
	\Z_t \q (U_t) = \begin{cases} \Z_t \q (U_t), & t \in [0, \zeta(\Y)), \\ 
		\lim_{s \nearrow \zeta (Y)} \Z_s \q (U_s), & t \in [\zeta(\Y), \infty), \end{cases}
	\]
	is a local martingale on $\bR_+$.
	This concludes the proof.
\end{proof}

\begin{remark}\label{rem:050324a1}
	For later reference, with $Z$ as in the above proof, we observe that
	\begin{itemize}
		\item
		$\P$-a.s. on $\{\int_0^{\zeta(\Y)} \theta_s^2\,d\langle U,U\rangle_s=\infty\}$
		we have
		$Z_{\zeta(\Y)}=0$,
		
		\item
		$\P$-a.s. on $\{\int_0^{\zeta(\Y)} \theta_s^2\,d\langle U,U\rangle_s<\infty\}$
		we have
		$Z_{\zeta(\Y)}>0$
	\end{itemize}
	(e.g., see \cite[Exercise~V.1.18]{RY}).
\end{remark}

In the remainder of this section, we provide some preparatory lemmata for the proof that NA or NUPBR imply Condition~\ref{cond:170623a1}.
	
\smallskip 
We need the following result by Bruggeman and Ruf, see \cite[Corollary 1.2]{bruggeman}.
As a referee kindly pointed out, a related result has first appeared in Watanabe's Appendix II from \cite{kotaniwatanabe}. We refer to~\cite{bruggeman} for further discussions.

\begin{lemma}[Diffusion hitting times have full support]\label{lem:BR}
Consider the canonical setting $\mathbb C_x$ with the infinite time horizon $T=\infty$, where
$x\in J$ is \emph{not} an absorbing boundary point.
Let $y\in J\setminus\{x\}$ and $U$ be any nonempty open subset of $(0,\infty)$.
Then, $\P_x(T_y\in U)>0$.
\end{lemma}

Next,
we recall a result by Kardaras and Ruf \cite{kardarasruf}, which shows that NUPBR is stable under filtration shrinkage.

\begin{lemma}[Stability of NUPBR under filtration shrinkage]\label{lem:060324a1}
If NUPBR holds in the market $(\mathbb B,\Y)$ for a finite time horizon $T<\infty$, then NUPBR holds also in the market $(\mathbb B^\Y,\Y)$ for the same time horizon $T$, where $\mathbb{B}^\Y \triangleq (\Omega, \cF, \mathbf{F}^\Y, \P)$.
\end{lemma}

We remark that this does not directly follow from the definition of NUPBR.
A very delicate point is that, in general, the set of trading strategies for the smaller filtration is \emph{not} contained in the set of trading strategies for the larger filtration.
But it turns out that, if NUPBR is satisfied on the larger filtration, then the inclusion mentioned in the previous sentence holds (for a continuous price process, which is the case in our setting).
We refer to \cite[Remark 2.4]{kardarasruf} for more detail.

Now, we discuss the analogous stability result for NA.

\begin{lemma}[Stability of NA under filtration shrinkage]\label{lem:210324a1}
If NA holds in the market $(\mathbb B,\Y)$ for a finite time horizon $T<\infty$, then NA holds also in the market $(\mathbb B^\Y,\Y)$ for the same time horizon~$T$.
\end{lemma}

It is worth noting that, because of the delicate point described after Lemma~\ref{lem:060324a1} or, in more detail, in \cite[Remark 2.4]{kardarasruf},
the claim of Lemma~\ref{lem:210324a1} looks a priori rather unclear.
In our setting, the claim follows from the Kabanov--Stricker and Strasser characterization of NA that was recalled in Theorem~\ref{th:210324a1}.

\begin{proof}[Proof of Lemma~\ref{lem:210324a1}]
The claim in Lemma~\ref{lem:210324a1} follows from Theorem~\ref{th:210324a1} together with the fact that a \emph{continuous} local martingale (say, $M$) w.r.t. some filtration (say, $\mathbf G$) remains a local martingale w.r.t. any filtration $\mathbf H$ that lies between the natural filtration of $M$ and the filtration~$\mathbf G$. 
\end{proof}

The following two lemmata relate the no-arbitrage notions NUPBR and NA to the last requirement in Condition~\ref{cond:170623a1}.

\begin{lemma}[Under NUPBR accessible boundaries are absorbing]\label{lem:060324a2}
If NUPBR holds in the market $(\mathbb B,\Y)$ for a finite time horizon $T<\infty$, then every $b\in\{l,r\}\cap\bR$ is either inaccessible or absorbing for~$\Y$.
\end{lemma}

\begin{proof}
Thanks to Theorem~\ref{theo: FTAP}, there exists an SMD \(\Z\).
For any \(b \in \{l, r\} \cap \bR\), the process \(\Z | \Y - b |\) is a nonnegative \(\mathbf{F}\)-\(\P\)-local martingale and hence, a nonnegative \(\mathbf{F}\)-\(\P\)-supermartingale.
Assume that $b$ is accessible.
By Lemma~\ref{lem:BR},
with positive probability, $\Y$ hits $b$ prior to time $T$.
As nonnegative supermartingales cannot resurrect from zero (\cite[Lemma~III.3.6]{JS}), \(b\) has to be absorbing.
\end{proof}

\begin{lemma}[Under NA accessible boundaries are absorbing]\label{lem:060324a4}
If NA holds in the market $(\mathbb B,\Y)$ for a finite time horizon $T<\infty$, then every $b\in\{l,r\}\cap\bR$ is either inaccessible or absorbing for~$\Y$.
\end{lemma}

\begin{proof}
Assume for contradiction that the right boundary point $r$ is finite and (instantaneously or slowly) reflecting.
Let \(\{ \P_x \colon x \in J\}\) be the diffusion with scale function \(\s\) and speed measure \(\m\) on the canonical space (cf. \cite[Definition~V.45.1]{RW2}).
Fix \(l<y_0 < z_0 < r\) and set 
\[
\tau \triangleq \inf \{t > T_r \colon \canY_t = y_0 \}, \quad \rho \triangleq \inf \{ t > \tau \colon \canY_t = z_0\}. 
\]
Lemma~\ref{lem:BR} yields that 
$\P_{x_0}(T_r<T/2)>0$,
$\P_r(T_{y_0}<T/2)>0$
and
$\P_{y_0}(T_{z_0}>T)>0$.
By the strong Markov property, using the notation $\theta_{T_r}$ for the shift operator (cf. \cite[Section III.3]{RY}),
we first compute
\begin{align*} 
\P_{x_0} ( \tau < T)
&= \P_{x_0} (T_r < T , T_{y_0} (\theta_{T_r}) + T_r < T) 
\\&\ge \P_{x_0} (T_r < T / 2, T_{y_0} (\theta_{T_r}) < T/2) 
\\&=
\E^{\P_{x_0}} \big[ \1_{\{T_r < T/2\}} \P_r (T_{y_0} < T/2) \big]
\\&=
\P_{x_0}(T_r<T/2)
\P_r(T_{y_0}<T/2)
>0
\end{align*}
and then, similarly,
$$
\P_{x_0} ( \tau < T, \rho > T )
\ge
\E^{\P_{x_0}} \big[ \1_{\{\tau < T\}} \P_{y_0} (T_{z_0} > T) \big]
=
\P_{x_0} ( \tau < T)
\P_{y_0}(T_{z_0}>T)
>0.
$$
Hence,
\[
\P (\tau (Y) < T, \rho (Y) > T) > 0.
\]
Now, consider the strategy \(H \triangleq - \1_{(T_r (\Y)\wedge T , T]}\). For all \(t \in [0, T]\), we have 
\begin{align*}
	V^H_t = \Y_{t \wedge T_{r} (Y)} - \Y_t \geq 0.
\end{align*}
Moreover, on \(\{ \tau (Y) < T, \rho (Y) > T\}\), 
\[
V^H_T = \Y_{T \wedge T_r (\Y)} - \Y_T = r - \Y_T \geq r - z_0 > 0. 
\]
This shows that we can realize arbitrage via the
admissible
strategy \(H\). Consequently, \(r\) must be accessible or absorbing. 
The left boundary point $l$ can be treated in a similar way.
\end{proof}

The next two lemmata discuss the role of initial values for the no-arbitrage notions.
This is naturally done in the canonical setting because there we have $\P_x$ for all $x\in J$.
It is worth emphasizing that passing from some $\P_x$ to another $\P_y$ changes the initial value, while the scale function and the speed measure remain unchanged.
We emphasize that \({\P_x \circ (\canY + y - x)^{-1} \ne \P_y}\)
unless the underlying diffusion is a process with stationary independent increments
(see \cite[Theorem 11.10]{kallenberg}).
In our general diffusion setting the latter can be only a scaled Brownian motion with drift
(see \cite[Theorem 14.4]{kallenberg}).
This means that changing the initial values while keeping scale and speed is indeed a complicated transformation that requires a thorough discussion.

\begin{lemma}[Changing initial values under NUPBR]\label{prop: equivalence initial value}
	Take an initial value \(x_0 \in J^\circ\) and assume that the market \((\mathbb{C}_{x_0}, \canY)\)  satisfies NUPBR for a finite time horizon \(T<\infty\).
	Then, for every \(y_0 \in J^\circ\) and every \(\varepsilon \in (0, T)\), the market \((\mathbb{C}_{y_0}, \canY)\) satisfies NUPBR for the time horizon \(T - \varepsilon\). 
\end{lemma}

\begin{proof}
In this proof we work on the canonical space $\mathbb C_{x_0}$ with time horizon $T$.
In particular, the notation $\canF_t$ means the respective $\sigma$-field on the space with time horizon $T$.
The transition to the canonical space with time horizon $T-\varepsilon$ happens in the very end
(directly after~\eqref{eq: shift}).
This does not create notational ambiguities.

Take some \(y_0 \in J^\circ \setminus \{x_0\}\) and \(\varepsilon \in (0, T)\).
Thanks to Theorem~\ref{theo: FTAP}, there exists a strictly positive \(\canbfF\)-\(\P_{x_0}\)-local martingale \(\canZ\) with \(\canZ_0 = 1\) such that \(\canZ\,\canY\) is an \(\canbfF\)-\(\P_{x_0}\)-local martingale.
 By \cite[Theorem 10.16]{Jacod}, the time-changed processes \(\canZ_{\cdot + T_{y_0} \wedge \varepsilon}\) and \(\canZ_{\cdot + T_{y_0} \wedge \varepsilon} \canY_{\cdot + T_{y_0} \wedge \varepsilon}\) are
 \(\mathbf{G}\)-\(\P_{x_0}\)-local martingales on the time interval \([0, T - \varepsilon]\), where \(\mathbf{G} = (\mathcal{G}_t)_{t \in [0, T - \varepsilon]}\) with \(\mathcal{G}_t \triangleq \canF_{t + T_{y_0} \wedge \varepsilon}\).
Here and below we understand expressions like
$t+T_{y_0}\wedge\varepsilon$
as
$t+(T_{y_0}\wedge\varepsilon)$.
By Lemma~\ref{lem:BR},
\(\P_{x_0} (T_{y_0} \leq \varepsilon) > 0\). Thus, we may define a probability measure \(\Q\) by 
 \[
 \Q(G) \triangleq \P_{x_0} ( G \mid  T_{y_0} \leq \varepsilon) = \frac{ \P_{x_0} (G \cap \{T_{y_0} \leq \varepsilon\})}{\P_{x_0} (T_{y_0} \leq \varepsilon) }, \quad G \in \canF.
 \]
 As \(\{T_{y_0} \leq \varepsilon\} \in \mathcal{G}_0 = \canF_{T_{y_0} \wedge \varepsilon}\), the \(\mathbf{G}\)-\(\P_{x_0}\)-local martingale property is not affected by a change from \(\P_{x_0}\) to the conditional probability \(\Q\). 
Consequently, \(\canZ_{\cdot + T_{y_0} \wedge \varepsilon} / \canZ_{T_{y_0} \wedge \varepsilon}\) is an SMD for the market \(((\canOmega, \canF, \mathbf{G},\Q), \canY_{\cdot + T_{y_0} \wedge \varepsilon})\)
with time horizon $T-\varepsilon$
and, thanks again to Theorem~\ref{theo: FTAP}, it satisfies the NUPBR condition.
By Lemma~\ref{lem:060324a1}, also the market 
\begin{equation}\label{eq:150724a1}
\big((\canOmega, \canF, \mathbf{F}^{\canY_{\cdot + T_{y_0} \wedge \varepsilon}},\Q), \canY_{\cdot + T_{y_0} \wedge \varepsilon}\big)
\end{equation}
with time horizon $T-\varepsilon$
satisfies NUPBR.
Now, consider a bounded $\canF ^o_{T-\varepsilon}$-measurable path functional
$F\colon C([0,T-\varepsilon],J)\to\bR$.
By the strong Markov property of the family $(\P_x)_{x\in J}$,
$$
\E^{\P_{x_0}}\big[F(\canY_{\cdot+T_{y_0}\wedge\varepsilon})\mid\canF_{T_{y_0}\wedge\varepsilon}\big]
=
\E^{\P_{y_0}}\big[F(\canY)\big]
\quad\P_{x_0}\text{-a.s. on }\{T_{y_0}\le\varepsilon\}
$$
(see \cite[Theorem~III.3.1]{RY}).
Multiplying both sides with $\1_{\{T_{y_0}\le\varepsilon\}}$ and computing the expectation under $\P_{x_0}$ yields
$$
\E^\Q\big[F(\canY_{\cdot+T_{y_0}\wedge\varepsilon})\big]
=
\E^{\P_{y_0}}\big[F(\canY)\big],
$$
which implies
\begin{align} \label{eq: shift}
\Q \circ \canY_{\cdot + T_{y_0} \wedge \varepsilon}^{-1} = \P_{y_0}
\text{ on }\canF^o_{T - \varepsilon}.
\end{align}
As market~\eqref{eq:150724a1} satisfies NUPBR
for the time horizon \(T - \varepsilon\),
we deduce from \cite[Proposition~10.38~(b)]{Jacod} that
the market \((\mathbb{C}_{y_0}, \canY)\) satisfies NUPBR
for the time horizon \(T - \varepsilon\).
The proof is complete.
\end{proof}

\begin{lemma}[Changing initial values under NA]\label{lem:060324a5}
	Take an initial value \(x_0 \in J^\circ\) and assume that the market \((\mathbb{C}_{x_0}, \canY)\)  satisfies NA for a finite time horizon \(T<\infty\).
	Then, for every \(y_0 \in J^\circ\) and every \(\varepsilon \in (0, T)\), the market \((\mathbb{C}_{y_0}, \canY)\) satisfies NA for the time horizon \(T - \varepsilon\). 
\end{lemma}

\begin{proof}
In this proof we partly need to work on the canonical space with time horizon~$T$ and partly on the canonical space with time horizon $T-\varepsilon$.
To ease our presentation, in this proof we understand
the path space \(\canOmega\) as the canonical space \(C(\bR_+; J)\) for the {\em infinite} time horizon that covers all time horizons simultaneously. Further, \(\canY\) denotes the coordinate process and \(\canbfF = (\canF^o_{t +})_{t \geq 0}\) denotes the right-continuous canonical filtration on the path space \(\canOmega = C (\bR_+; J)\). Finally, for any \(T > 0\), we set \(\canF^T_s \triangleq \canF_s \cap \canF^o_T\) and \(\canbfF^T \triangleq (\canF^T_t)_{t \geq 0}\).

	Take \(x_0, y_0 \in J^\circ, T < \infty\) and \(\varepsilon \in (0, T)\), and suppose that NA holds for the market \(((\canOmega, \canF, \canbfF^T, \P_{x_0}), (\canY)_{t \in [0, T]})\) and the time horizon \(T\).
With \(\P \equiv \P_{x_0}\), notice that $\P(T_{y_0}\le\varepsilon)>0$ by Lemma~\ref{lem:BR},
and define 
	\[
	\P^* \triangleq \P ( \theta_{T_{y_0}}^{-1} (\, \cdot \, ) \mid T_{y_0} \leq \varepsilon).
	\]
	Thanks to~\eqref{eq: shift},
	we have $\P^*=\P_{y_0}$ on $\canF^o_{T-\varepsilon}$. Consequently, the claim of the lemma follows once we show that NA holds for the market \(((\canOmega, \canF, \canbfF^{T- \varepsilon}, \P^*), (\canY)_{t \in [0, T - \varepsilon]})\). 
	Our tactic is to use Theorem~\ref{th:210324a1}. 
	Let \(\sigma\) be an \(\canbfF^{T - \varepsilon}\)-stopping time such that \(\sigma \leq T - \varepsilon\). We set 
	\[
	\rho \triangleq \sigma (\theta_{T_{y_0} \wedge \varepsilon}) + T_{y_0} \wedge \varepsilon,
	\]
	where $\theta_{T_{y_0} \wedge \varepsilon} \colon \canOmega \to \canOmega$ denotes the shift operator (see \cite[Section III.3]{RY}).
	Notice that \(\rho\) is an \(\canbfF^T\)-stopping time by \cite[Proposition 11.8]{kallenberg}. Thanks to Theorem~\ref{th:210324a1}, there exists a probability measure \(\Q \equiv {^\rho}\Q\) such that \(\Q \sim \P\) on \(\canF_\rho^T\), \(\Q \ll \P\) on
\(\canF^o_T\)
and such that \((\canY_{t \vee \rho} - \canY_\rho)_{t \in [0, T]}\) is an \(\canbfF^T\)-\(\Q\)-local martingale.
	Clearly, we have 
	\[
	\{ T_{y_0} \leq \varepsilon \} \in \canF_{T_{y_0} \wedge \varepsilon}^o \subset \canF_\rho^T.
	\]
As \(\P (T_{y_0} \leq \varepsilon ) > 0\) and \(\Q \sim \P\) on \(\canF_\rho^T\), it also holds that \(\Q (T_{y_0} \leq \varepsilon) > 0\). We can, therefore,
define the probability measure
	\begin{align*}
		\Q^* &\triangleq \Q ( \theta_{T_{y_0}}^{-1} (\, \cdot \, ) \mid T_{y_0} \leq \varepsilon)
	\end{align*}
on $(\canOmega,\canF)$.
	Noting that \(\theta^{-1}_{T_{y_0} \wedge \varepsilon} (\canF_\sigma^{T - \varepsilon}) \subset \canF_\rho^T\), \(\Q^* \sim \P^*\) on \(\canF_\sigma^{T - \varepsilon}\) follows from \(\P \sim \Q\) on \(\canF^T_\rho\). Furthermore, it is clear that \(\Q^* \ll \P^*\) on \(\canF^o_{T-\varepsilon}\).
We now show that \((\canY_{t \vee \sigma} - \canY_\sigma)_{t \in [0, T - \varepsilon]}\) is an $\canbfF^{T - \varepsilon}$-\(\Q^*\)-local martingale.
For \(N \in\mathbb N\), set 
	\begin{align*}
	\tau \equiv \tau_N &\triangleq \inf \{ t \geq 0 \colon |\canY_{t \vee \sigma} - \canY_\sigma| \geq N\}, \\
	\tau^* \equiv \tau^*_N &\triangleq \tau (\theta_{T_{y_0} \wedge \varepsilon}) + T_{y_0} \wedge \varepsilon = \inf \{ t \geq 0 \colon |\canY_{(t + T_{y_0} \wedge \varepsilon) \vee \rho} - \canY_{\rho}| \geq N \} + T_{y_0} \wedge \varepsilon.
	\end{align*} 
	For \(0 \leq s < t \leq T - \varepsilon\) and \(A \in \canF_s^{T - \varepsilon}\), we have 
	\begin{align*}
		\E^{\Q^*} \Big[ \1_A (\canY_{t \wedge \tau \vee \sigma} - \canY_\sigma) \Big] &= \E^{\Q} \Big[ \1_{\theta^{-1}_{T_{y_0}} (A)} (\canY_{(t + T_{y_0}) \wedge \tau^* \vee \rho} - \canY_{\rho}) \mid T_{y_0} \leq \varepsilon \Big]
			\\&= \E^{\Q} \Big[ \1_{\theta^{-1}_{T_{y_0}} (A)} (\canY_{(s + T_{y_0}) \wedge \tau^* \vee \rho} - \canY_{\rho}) \mid T_{y_0} \leq \varepsilon \Big]
			\\&= \E^{\Q^*} \Big[ \1_A (\canY_{s \wedge \tau \vee \sigma} - \canY_\sigma) \Big]
	\end{align*}
by the optional stopping theorem. Here, we used that \((t + T_{y_0} \wedge \varepsilon) \wedge \tau^*\) and \((s + T_{y_0} \wedge \varepsilon) \wedge \tau^*\) are \(\canbfF^T\)-stopping times\footnote{To see this, notice that 
	\[
	\gamma \triangleq (t + T_{y_0} \wedge \varepsilon) \wedge \tau^* = (\tau \wedge t\hspace{0.02cm}) (\theta_{T_{y_0} \wedge \varepsilon}) + T_{y_0} \wedge \varepsilon. 
	\] 
	For \(r \in [0, T)\), \(\{\gamma \leq r\} \in \canF_r\) follows from \cite[Proposition~11.8]{kallenberg}, and \(\{\gamma \leq T\} = \canOmega \in \canF^o_T\).},
\(\theta^{-1}_{T_{y_0}} (A) \cap \{T_{y_0} \leq \varepsilon\} \in \canF_{s + T_{y_0} \wedge \varepsilon}^T\) and that \(\canY_{(\, \cdot\, + T_{y_0}) \wedge \tau^* \vee \rho} - \canY_\rho\) is bounded by the definition of \(\tau^*\). 
Because \(\tau_N \nearrow \infty\) as \(N \nearrow \infty\), we proved that \((\canY_{t \vee \sigma} - \canY_\sigma)_{t \in [0, T - \varepsilon]}\) is an $\canbfF^{T - \varepsilon}$-\(\Q^*\)-local martingale. 
Finally, we deduce from Theorem~\ref{th:210324a1} that NA holds for the market \(((\canOmega, \canF, \canbfF^{T- \varepsilon}, \P^*), (\canY)_{t \in [0, T - \varepsilon]})\) and consequently,
the proof is complete.
\end{proof}

Next, we derive another useful technical result (Lemma~\ref{lem:050324a1} below).
To this end, we first consider an open interval $I\subset\bR$ and recall (\cite[Proposition 5.1]{CinJPrSha}) that the following are equivalent:
\begin{enumerate}
	\item[(a)]
	$f\colon I\to\bR$ is the difference of two convex functions $I\to\bR$;
	
	\item[(b)]
	$f\colon I\to\bR$ is continuous and has a right-continuous right-hand derivative $f'_+$ on $I$, and $f'_+$ has locally finite variation on~$I$;
	
	\item[(c)]
	$f\colon I\to\bR$ is continuous and has a left-continuous left-hand derivative $f'_-$ on $I$, and $f'_-$ has locally finite variation on~$I$.
\end{enumerate}

\begin{lemma}\label{lem:050324a2}
Let $f\colon J^\circ\to\bR$ be strictly increasing and the difference of two convex functions on $J^\circ$.
Define the measure $\tm$ on $J^\circ$ by the formula $d\tm\triangleq f'_+\,d\m$.
Then, $\tm$ is a valid speed measure on $J^\circ$.
\end{lemma}

\begin{proof}
We need to check that, for all $a<b$ in $J^\circ$, $\tm([a,b])\in(0,\infty)$.
As $f'_+$ is bounded on $[a,b]$ and $\m$ is a speed measure on $J^\circ$, we immediately get that $\tm([a,b])<\infty$.
For the strict positivity, notice that there is some $c\in(a,b)$ such that $f'_+(c)>0$ (otherwise $f$ would be constant on $[a,b]$).
By the right-continuity of $f'_+$, there is some $d\in(c,b)$ such that $f'_+>0$ on $[c,d]$, hence $\tm([a,b])\ge\tm([c,d])>0$.
Thus, $\tm$ is a valid speed measure.
\end{proof}

\begin{lemma}[Technical lemma]\label{lem:050324a1}
Assume that the scale function $\s$ is the difference of two convex functions on $J^\circ$.
We work on the filtered probability space $\mathbb C_{x_0}$ with the infinite time horizon.
Let $\xi$ be an $\canbfF$-predictable time that is
strictly positive (i.e., $\xi(\omega)>0$ for all $\omega\in\canOmega$)
and let $\Q$ be a probability measure on $(\canOmega,\canF)$ such that $\Q\ll\P_{x_0}$ on $\canF^o_\xi$ and $\canY_{\cdot\wedge\xi}$ is an $\canbfF$-$\Q$-local martingale.
Then,
$$
\Q=\tP_{x_0}\text{ on }\canF^o_\xi,
$$
where $\tP_{x_0}$ is the law of the diffusion
on natural scale
started in $x_0$
with the interior of the state space $J^\circ$,
the speed measure $\s'_+\,d\m$ on $J^\circ$
and the boundary points $l$ and $r$ being absorbing whenever they are accessible.
\end{lemma}

To see that the \(\sigma\)-field \(\canF^o_\xi\) in the statement of Lemma~\ref{lem:050324a1} is well-defined, recall from \cite[III.2.36]{JS} that strictly positive \(\canbfF\)-predictable times are \(\canbfF^o\)-stopping times. Further, recall that $\s'_+\,d\m$ is a valid speed measure on $J^\circ$ by Lemma~\ref{lem:050324a2}.

\begin{proof}
Take a strictly decreasing sequence $(y_n)_{n\in\mathbb N}\subset J^\circ$ and a strictly increasing sequence $(z_n)_{n\in\mathbb N}\subset J^\circ$ with $y_1<x_0<z_1$ and $y_n\searrow l$, $z_n\nearrow r$.
Fix some $n\in\mathbb N$ and define the $\canbfF^o$-stopping time \(\eta_n \triangleq \xi \wedge T_{y_n} \wedge T_{z_n}\).
Set $\tP^n_{x_0}\triangleq\tP_{x_0}\circ\canY_{\cdot\wedge T_{y_n}\wedge T_{z_n}}^{-1}$.
In other words, $\tP^n_{x_0}$ is the law of the diffusion
on natural scale
started in $x_0$
with the state space $[y_n,z_n]$, speed measure $\s'_+\,d\m$ on $(y_n,z_n)$
and absorbing boundaries $y_n$ and $z_n$.
Take a function \(f \in C ([y_n, z_n]; \bR)\) such that \(f\) is the difference of two convex functions on \((y_n, z_n)\) and \(d f'_+ = 2 g \s'_+\,d \m\) on \((y_n, z_n)\) for some \(g \in C([y_n, z_n]; \bR)\) with \(g (y_n) = g (z_n) = 0\). Here, \(d f'_+\) denotes the signed measure induced by the right-hand derivative \(f'_+\) of \(f\) via the formula
\[
d f'_+ ( (a, b] ) = f'_+ (b) - f'_+ (a), \quad a < b \text{ in } (y_n, z_n).
\]
Let \(\{L^x_t (\canY) \colon (t,x) \in [0, T] \times \bR\}\) be the (continuous in time and right-continuous in space)
semimartingale local time of the coordinate process \(\canY\) under \(\P_{x_0}\).
We compute that \(\P_{x_0}\)-a.s., for all \(t \in [0, \eta_n)\),
\begin{equation} \label{eq: iden identification}
\begin{split}
f (\canY_t) &= f(x_0)+\int_0^{t} f'_- (\canY_s) \,d \canY_s + \frac{1}{2} \int_{y_n}^{z_n} L^x_t (\canY) 2g (x) \s'_+ (x)\,\m (dx)
\\
&= 
f(x_0)+\int_0^{t} f'_- (\canY_s) \,d \canY_s + \int_{y_n}^{z_n} L^{\s(x)}_t (\s (\canY)) g (x)\,\m (dx)
\\
&=
f(x_0)+\int_0^{t} f'_- (\canY_s) \,d \canY_s + \int_{\s(y_n)}^{\s (z_n)} L^{x}_t (\s (\canY)) g (\s^{-1} (x))\,\m \circ \s^{-1} (dx)
\\
&=
f(x_0)+\int_0^{t} f'_- (\canY_s) \,d \canY_s + \int_0^{t} g (\s^{-1} (\s(\canY_s))) \,ds, 
\\
&=
f(x_0)+\int_0^{t} f'_- (\canY_s) \,d \canY_s + \int_0^{t} g (\canY_s)\,ds, 
\end{split}
\end{equation}
where we use the generalized It\^o formula (\cite[Theorem~IV.45.1]{RW2} or \cite[Lemma B.23]{criensurusov22}) in the first and second line and \cite[Exercise~VII.3.18]{RY} (\cite[Lemma B.3]{criensurusov22}) and \cite[Theorem~V.49.1]{RW2} (\cite[Lemma~B.14]{criensurusov22}) in the fourth line. 
As \(\mathsf{Q} \ll \P_{x_0}\) on $\canF^o_\xi$, by hypothesis, and $\eta_n\le\xi$, the identities in~\eqref{eq: iden identification} also holds up to a \(\mathsf{Q}\)-null set.
Thanks to~\eqref{eq: iden identification}, which holds on the predictable stochastic interval $[0, \eta_n)$,
and the fact that $\canY$ is a continuous \(\canbfF\)-\(\mathsf{Q}\)-local martingale on $[0,\eta_n)$, we conclude that the process
\begin{equation}\label{eq:150324a1}
f (\canY_{\cdot\wedge\eta_n}) - f (x_0) - \int_0^{\cdot\wedge\eta_n} g (\canY_s) \,ds
\end{equation}
is also a continuous \(\canbfF\)-\(\mathsf{Q}\)-local martingale on $[0,\eta_n)$.
As the process in~\eqref{eq:150324a1} is bounded on any finite time interval ($f$ and $g$ are bounded), \cite[Proposition 5.9]{Jacod} yields that it is an \(\canbfF\)-\(\mathsf{Q}\)-martingale for the whole time index set $\bR_+$.
By \cite[Lemmata~A.4,~A.5]{criensurusov22}, we get
$\mathsf{Q} = \tP^n_{x_0} \text{ on } \canF^o_{\eta_n}$.
Galmarino's test in the form of \cite[Exercise I.4.21]{RY} show that
\[\canY_{\cdot\wedge T_{y_n}\wedge T_{z_n}}^{-1}(\canF^o_{\eta_n})=\canF^o_{\eta_n}.\]
It follows that
\begin{equation}\label{eq:150324a2}
\Q=\tP_{x_0}\text{ on }\canF^o_{\eta_n}.
\end{equation}
As $\canY_{\cdot\wedge\xi}$ is an $\canbfF$-$\Q$-local martingale, it follows that, if $\Q(T_l<\xi)>0$ (resp., $\Q(T_r<\xi)>0$),
then, on the event $\{T_l<\xi\}$ (resp., $\{T_r<\xi\}$),
$\canY$ stays in $l$ (resp., in $r$) on $[T_l,\xi]$ (resp., $[T_r,\xi]$).
Together with the fact that $l$ and $r$ are inaccessible or absorbing for $\canY$ under $\tP_{x_0}$ (by the definition of $\tP_{x_0}$), this yields that
\begin{equation}\label{eq:150324a3}
(\Q+\tP_{x_0})\text{-a.e. }
\bigvee_{n\in\mathbb N}\canF^o_{\eta_n}=\canF^o_\xi.
\end{equation}
The statement now follows from \eqref{eq:150324a2} and~\eqref{eq:150324a3}.
The proof is complete.
\end{proof}

Lastly, we provide a step towards Condition~\ref{cond:050324a1}.
It is instructive to compare the next result to Lemma~\ref{lem:050324a3}.

\begin{lemma}[Step towards Condition~\ref{cond:050324a1}]\label{lem:060324a3}
Take a finite time horizon $T<\infty$.

\begin{enumerate}
	\item[\textup{(i)}] 
For some $x\in J^\circ$, consider the market $(\mathbb C_x,\canY)$ and assume that there exists a nonnegative c\`adl\`ag $\canbfF$-adapted process $\canZ=(\canZ_t)_{t\in[0,T]}$ with $\canZ_0=1$ such that $\canZ$ and $\canZ\,\canY$ are $\canbfF$-$\P_x$-local martingales.
Then, Condition~\ref{cond:050324a1} holds in a sufficiently small open neighborhood of the point~$x$.
\item[\textup{(ii)}]If the assumption in part~(i) holds for every $x\in J^\circ$, then Condition~\ref{cond:050324a1} is satisfied.
\end{enumerate}
\end{lemma}

\begin{proof}
(i)
Let \((S_n)_{n = 1}^\infty\) be an \(\canbfF\)-\(\P_{x}\)-localizing sequence for the local martingale \(\canZ\).
As \(\P_{x}\)-a.s. \(S_n \to \infty\), there exists an \(N \in \mathbb{N}\) such that \(\P_{x}(S_N > 0) > 0\) and, by Blumenthal's zero-one law (\cite[Lemma~4 on p. 106]{freedman} or \cite[Lemma~B.1]{criensurusov22}), it even holds that \(\P_{x} (S_N > 0) = 1\).
Furthermore, by Meyer's theorem on predictability (see \cite[Proposition~4]{chungwalsh} and \cite[Lemma~I.2.17]{JS}, or \cite[Lemma~B.20]{criensurusov22}), there exists an \(\canbfF\)-predictable time \(\xi\) such that \(\P_{x}\)-a.s. \(S_N = \xi\).
We may take \(\xi > 0\) identically. 
Define the probability measure \(\mathsf{Q}\) via the Radon--Nikodym density \(d \Q / d \P_{x} \triangleq \canZ_{T \wedge \xi}\).
As \(\canZ\,\canY\) is an \(\canbfF\)-\(\P_{x}\)-local martingale, \cite[Proposition III.3.8]{JS} yields that \(\canV \triangleq \canY_{\cdot \wedge \xi}\) is an \(\canbfF\)-\(\mathsf{Q}\)-local martingale.
We define the time-change
\[
L (t) \triangleq \inf \{s \in [0, T]\colon \langle \canV, \canV \rangle_s > t\} \wedge T, \quad t \in \bR_+,
\]
where \(\langle \canV, \canV \rangle\) denotes the \(\canbfF\)-\(\mathsf{Q}\)-quadratic variation process of \(\canV\).
Without loss of generality we assume that, under $\P_x$, the process $\canV$ is absorbed at the boundaries $l$ and $r$ whenever it hits a boundary point (if it is not the case, replace $\xi$ with $\xi\wedge\zeta$).
Then, the space-transformed process \(\s(\canV)\) is an \(\canbfF\)-\(\P_{x}\)-local martingale (\cite[Corollary~V.46.15]{RW2} or \cite[Lemma B.2]{criensurusov22}).
Hence, by Girsanov's theorem, the process \(\s (\canV)\) is an \(\canbfF\)-\(\mathsf{Q}\)-semimartingale.
Using \cite[Corollary~10.12, Theorem~10.16]{Jacod} (or \cite[Lemma B.26]{criensurusov22}), we observe that the time-changed process
\(
\s (\canV_{L})
\)
is an \((\canF_{L (t)})_{t \geq 0}\)-\(\mathsf{Q}\)-semimartingale.
By the Doeblin, Dambis, Dubins--Schwarz theorem (\cite[Theorem~V.1.7]{RY}), \(\canV_L\) is a standard \((\canF_{L(t)})_{t \geq 0}\)-\(\mathsf{Q}\)-Brownian motion stopped at \(\langle \canV, \canV\rangle_{T}\).
It follows from the fact that continuous local martingales and their quadratic variation processes have the same intervals of constancy (\cite[Proposition~IV.1.13]{RY}) and \cite[Corollary~5, Fact 6 on p.~107, Lemma~12 on p.~109]{freedman} (or \cite[Lemma~B.5]{criensurusov22}) that \(\mathsf{Q}\)-a.s. \(\langle \canV, \canV\rangle_T = \langle \canY, \canY\rangle_{\xi \wedge T} > 0\) and \(L (0) = 0\).
Now, we deduce from \cite[Theorem~B.28]{criensurusov22} that, in a sufficiently small open neighborhood $(\alpha,\beta)$ of $x$, the scale function \(\s\) is the difference of two convex functions.
That is, we can apply Lemma~\ref{lem:050324a1} for the diffusion $\canY_{\cdot\wedge T_{\alpha}\wedge T_{\beta}}$ and the strictly positive $\canbfF$-predictable time $\eta\triangleq T\wedge\xi\wedge T_{\alpha}\wedge T_{\beta}$ concluding that
$\Q=\tP_x$ on $\canF^o_\eta$,
where $\tP_{x}$ is the law of the diffusion
on natural scale
started in $x$
with \((\alpha, \beta)\) as interior of the state space, speed measure $\s'_+\,d\m$ on $(\alpha,\beta)$
and the boundary points $\alpha$ and $\beta$ being absorbing whenever they are accessible.
As $\eta>0$, we have $\canF_0\subset\canF_{\eta-}\subset\canF^o_\eta$ (see \cite[III.2.36]{JS} for the second inclusion).
Thus,
$$
\tP_x\ll\P_x\quad\text{on}\quad\canF_0.
$$
Now, by \cite[Corollary 2.17]{criensurusov22}, there exists a sufficiently small open neighborhood \(U (x)\subset(\alpha,\beta)\) of \(x\) and a Borel function \(\beta \colon U (x) \to \bR\) such that \(\beta \in L^2(U(x))\),
the scale function $\s$ is continuously differentiable in $U(x)$ with an absolutely continuous derivative $\s'$ and
\[
d \s' (a) = \s' (a) \beta (a)\,da \quad\text{on}\quad U(x).
\]
This means that Condition~\ref{cond:050324a1} is satisfied in this neighborhood $U(x)$.

(ii) The second part follows from the first one and the fact that Condition~\ref{cond:050324a1} is a local property.
\end{proof}

\subsection{Proof of Lemma~\protect\ref{lem:170623a1}} \label{sec: pf lem:170623a1}

For the reader's convenience, we repeat the formulation.

\medskip\noindent
\textbf{Lemma~\ref{lem:170623a1}.}
{\em Assume that Condition~\ref{cond:050324a1} holds and let \(b \in \{l, r\} \cap \bR\).
\begin{enumerate}
\item[\textup{(i)}]
If $|\s(b)|=\infty$, then \eqref{eq: b good} for $b$ is violated.

\item[\textup{(ii)}]
If \eqref{eq: b good} holds for $b$, then $|\s(b)|<\infty$.

\item[\textup{(iii)}]
Suppose that one of the conditions \textup{(iii.a)--(iii.b)} below is satisfied:
\begin{enumerate}
\item[\textup{(iii.a)}]
the boundary point $b$ is accessible for $\Y$ and \eqref{eq: very good add cond} holds;
\item[\textup{(iii.b)}]
the boundary point $b$ is inaccessible for $\Y$ and \eqref{eq: very good add cond} is violated.
\end{enumerate}
Then, \eqref{eq: b good} for $b$ is violated.

\item[\textup{(iv)}]
If \eqref{eq: b good} holds for $b$, then one of the conditions \textup{(iv.a)--(iv.b)} below is satisfied:
\begin{enumerate}
\item[\textup{(iv.a)}]
the boundary point $b$ is inaccessible for $\Y$ and \eqref{eq: very good add cond} holds;
\item[\textup{(iv.b)}]
the boundary point $b$ is accessible for $\Y$ and \eqref{eq: very good add cond} is violated.
\end{enumerate}
\end{enumerate} }

\begin{proof}
We first observe that the statements in (i) and~(ii) are equivalent.
Also (iii) and~(iv) are equivalent.
Therefore, it is enough to prove (i) and~(iv), which is done below.

Essentially, the claims can be deduced from observations made in our previous paper~\cite{criensurusov22}.
We explain the details, where we use the terminology introduced in \cite{criensurusov22}.
To this end, we assume Condition~\ref{cond:050324a1} and consider the market $(\mathbb B,\Y)$ with the infinite time horizon.
In this proof, we do not work with $\Y$ but $\Y_{\cdot\wedge\zeta(\Y)}$,
which is the diffusion with the starting point $x_0\in J^\circ$, state space~$J$, scale function $\s$ and speed measure $\m$ on $J^\circ$.\footnote{If we compare the characteristics of the diffusions $\Y$ and $\Y_{\cdot\wedge\zeta(\Y)}$, only the values of the speed measures at the accessible boundaries can differ.}
In addition, we consider (possibly, on another probability space) a diffusion $\wtY$ on natural scale started in $x_0$ with the interior of the state space $J^\circ$, speed measure $\s'\,d\m$ on $J^\circ$ and boundary points $l$ and $r$ being absorbing whenever they are accessible.
(Notice that the derivative $\s'$ is well-defined by Condition~\ref{cond:050324a1} and that $\s'\,d\m$ is a valid speed measure on $J^\circ$ by Lemma~\ref{lem:050324a2}.)
We make the following observations.
\begin{itemize}
\item
Condition~\ref{cond:050324a1} means that all points from $J^\circ$ are \emph{good} for the diffusions $\Y_{\cdot\wedge\zeta(\Y)}$ and $\wtY$ in the sense of \cite[Definition 2.5]{criensurusov22}.

\item
Under Condition~\ref{cond:050324a1},
a finite boundary point $b\in\{l,r\}\cap\bR$ satisfies~\eqref{eq: b good}
if and only if
it is \emph{half-good} for $\Y_{\cdot\wedge\zeta(\Y)}$ and $\wtY$ in the sense of \cite[Definition 2.9]{criensurusov22}
(or, equivalently,
\(b\) is \emph{good} for $\Y_{\cdot\wedge\zeta(\Y)}$ and $\wtY$ in the sense of \cite[Definition 2.11]{criensurusov22}).
The equivalence in the brackets holds because both $\Y_{\cdot\wedge\zeta(\Y)}$ and $\wtY$ are absorbed at their accessible boundaries.

\item
A finite boundary point $b\in\{l,r\}\cap\bR$ satisfies~\eqref{eq: very good add cond}
if and only if
it is inaccessible for $\wtY$ (apply~\eqref{eq:170323a3} for the natural scale and the speed measure $\s'\,d\m$).

\item
Clearly, a boundary point $b\in\{l,r\}$ is inaccessible for $Y$
if and only if
it is inaccessible for $\Y_{\cdot\wedge\zeta(\Y)}$.
\end{itemize}
Now, the statement~(i) immediately follows from \cite[Discussion 2.14 (iii)]{criensurusov22}
and the statement~(iv) follows from \cite[Lemma 2.10]{criensurusov22}.
\end{proof}

\subsection{Proof of Theorem~\protect\ref{th:090224a2}} \label{sec: pf th:090224a2}

For the reader's convenience, we repeat the formulation.

\medskip\noindent
\textbf{Theorem~\ref{th:090224a2}.}
{\em 
Assume that $T<\infty$.
The financial market \((\mathbb{B}, \Y)\) satisfies NUPBR if and only if Condition~\ref{cond:170623a1} holds and, for every $b\in\{l,r\}\cap\bR$, at least one of conditions \textup{(a)--(b)} below is satisfied:
\begin{enumerate}
\item[\textup{(a)}]
condition~\eqref{eq: b good} holds;
\item[\textup{(b)}]
the boundary point $b$ is inaccessible for~$\Y$.
\end{enumerate}}

\begin{proof}[Proof of ``Deterministic conditions $\Rightarrow$ NUPBR'']
We work in the market $(\mathbb B,\Y)$ with a finite time horizon $T<\infty$ and use the notations from the proof of Lemma~\ref{lem:050324a3}.
By Lemma~\ref{lem:050324a3} and Remark~\ref{rem:050324a1}, it remains to show that the deterministic conditions in Theorem~\ref{th:090224a2} imply
\begin{align}\label{eq:110224a3}
\P \Big( \zeta(\Y) \leq T, \int_0^{\zeta(\Y)} \theta^2_s\,d \langle \U,\U\rangle_s = \infty \Big) = 0.
\end{align}
We distinguish three cases.

First, if both boundaries \(l\) and \(r\) are inaccessible for \(Y\), then \(\P(\zeta(\Y) \leq T) = 0\) and \eqref{eq:110224a3} holds.

Second, suppose that one of the boundaries \(l\) or \(r\) is inaccessible, while the other one, call it~\(b\), is a  finite accessible boundary such that \eqref{eq: b good} holds.
In particular, \(|\s (b)| < \infty\) by Feller's test for explosions (recall~\eqref{eq:170323a1}).
Then, \(\P\)-a.s.
\[
\zeta(\Y) = T_{\s(b)} (\U) = \inf \{ t \in [0, T] \colon\U_t = \s (b) \}.
\]
Recalling that \(\U = \s (\Y)\) is a regular diffusion on natural scale, it follows from \cite[Lemma B.18]{criensurusov22} that 
\begin{align*}
\int_{\s (B)}
|s(b)&-x| \big(\beta(\q(x)) \q'(x)\big)^2 dx<\infty
\\&\Longrightarrow\;
\int_0^{T_{\s(b)} (\U)} \theta^2_s \, d \langle\U,\U \rangle_s < \infty
\;\;\P\text{-a.s. on } \{T_{\s (b)}(\U) \leq T\}, 
\end{align*}
where \(B \subsetneq J^\circ\) is an open interval with \(b\) as endpoint.
Notice that
\begin{equation} \label{eq: short computation}
\int_{\s (B)}
|s(b)-x| \big(\beta(\q(x)) \q'(x)\big)^2\,dx
=
\int_B
\frac{| \s(b) - \s(y)|\big(\beta (y)\big)^2}{\s' (y)}\,dy.
\end{equation}
From the equivalences (24) \(\Leftrightarrow\) (25) and (26) \(\Leftrightarrow\) (27) in \cite{MU12} (apply \cite{MU12} with functions \(\mu \equiv 0\), \(\sigma \equiv 1\) and \(b = - \beta/2\)), we deduce that
\begin{equation} \label{eq: def good points}
\int_B
\frac{| \s(b) - \s(y)|\big(\beta (y)\big)^2}{\s' (y)}\,dy<\infty
\quad\Longleftrightarrow\quad
\int_B
| b - y | \big( \beta (y) \big)^2\,dy < \infty. 
\end{equation}
Since the right-hand side coincides with~\eqref{eq: b good}, we conclude that \eqref{eq:110224a3} holds.

Finally, the case where both \(l\) and \(r\) are finite accessible boundaries  such that \eqref{eq: b good} holds for both of them, follows along the same lines.
\end{proof}

\begin{proof}[Proof of ``NUPBR $\Rightarrow$ deterministic conditions'']
Next, we assume NUPBR in the market $(\mathbb B,\Y)$ with a finite time horizon $T<\infty$.
Lemma~\ref{lem:060324a1} implies NUPBR in the market $(\mathbb B^\Y,\Y)$.
Then, NUPBR holds in the canonical setup \((\mathbb{C}_{x_0}, \canY)\) with the same time horizon~$T$ (\cite[Proposition~10.38~(b)]{Jacod}).
Take any $T'\in(0,T)$.
By Lemma~\ref{prop: equivalence initial value}, NUPBR holds in all markets \((\mathbb{C}_{y_0}, \canY)\), $y_0\in J^\circ$, with the time horizon $T'$.
Theorem~\ref{theo: FTAP} yields that there exist SMDs in all these markets.
Therefore, we can apply Lemma~\ref{lem:060324a3}~(ii) and obtain that Condition~\ref{cond:050324a1} is satisfied.
Lemma~\ref{lem:060324a2} upgrades Condition~\ref{cond:050324a1} to Condition~\ref{cond:170623a1}.

It remains to establish the properties of the boundary points.
We return to the (non-canonical) market $(\mathbb B,\Y)$.
Let $\Y=x_0+M+A$ be the canonical decomposition (\cite[Definition I.4.22]{JS}) of the continuous $\mathbf F$-$\P$-semimartingale $\Y$,
i.e.,
$M$ is a continuous $\mathbf F$-$\P$-local martingale and $A$ is a continuous $\mathbf F$-adapted process of finite variation,
$M_0=A_0=0$.
As the market $(\mathbb B,\Y)$ satisfies NUPBR, the structure condition as recalled in Theorem~\ref{theo: SC} holds, i.e.,
there exists an $\mathbf F$-predictable process $\lambda=(\lambda_t)_{t\in[0,T]}$ such that
\begin{equation}\label{eq:060324a1}
A=\int_0^\cdot \lambda_s\,d\langle M,M\rangle_s
\quad\text{and}\quad
\int_0^T \lambda_s^2\,d\langle M,M\rangle_s<\infty
\quad\P\text{-a.s.}
\end{equation}
In the following we use the notations from the proof of Lemma~\ref{lem:050324a3}.
We can do this because we already established Condition~\ref{cond:170623a1}, which was assumed in Lemma~\ref{lem:050324a3}.
We want to make the above canonical decomposition of $\Y$ ($\equiv\q(\U)$) more explicit.
By the generalized It\^o formula (\cite[Theorem~IV.45.1]{RW2} or \cite[Lemma B.23]{criensurusov22}), $\P$-a.s. on the stochastic interval $[0,\zeta(\Y)\wedge T)$, it holds that
\begin{equation}\label{eq:060324a2}
d\q(\U_t)=\q'(\U_t)\,d\U_t+\frac12\q''(\U_t)\,d\langle\U,\U\rangle_t.
\end{equation}
We can state~\eqref{eq:060324a2} only on $[0,\zeta(\Y)\wedge T)$, because the function $\q$ is continuously differentiable with an absolutely continuous derivative on $\s(J^\circ)$ only.
But on the event $\{\zeta(\Y)<T\}$ nothing interesting happens after $\zeta(\Y)$, as $\Y$ is stopped at time $\zeta(\Y)$ (hence so are the processes $M$ and~$A$).
As $\U$ is a continuous $\mathbf F$-$\P$-local martingale, we obtain from \eqref{eq:060324a2} that 
\[
d A_t = \frac{1}{2} \q'' (U_t) \, d \langle U, U \rangle_t, \quad d M_t = \q' (U_t) \, d U_t, \quad d \langle M, M \rangle_t = ( \q' (U_t) )^2 \, d \langle U, U \rangle_t.
\]
Using the structure condition~\eqref{eq:060324a1} yields that 
\begin{align*}
\frac{1}{2} \q'' (U_t) \, d \langle U, U \rangle_t = \lambda_t ( \q' (U_t) )^2 d \langle U, U \rangle_t,
\end{align*}
and integrating \(1 / ( \q' (U) )^2\) against both sides shows that 
\[
\frac{1}{2} \frac{\q'' (U_t)}{(\q' (U_t))^2} \, d \langle U, U \rangle_t = \lambda_t\, d \langle U, U\rangle_t.
\]
Putting these pieces together, we obtain that \(\P\)-a.s.
\begin{align*}
    \infty > \int_0^{\zeta(\Y)\wedge T} \lambda^2_s \, d \langle M, M\rangle_s &= \frac{1}{2}\int_0^{\zeta(\Y)\wedge T} \lambda_s \, \q'' (U_s)  \, d \langle U, U\rangle_s 
    \\&= \frac{1}{4} \int_0^{\zeta(\Y)\wedge T} \Big( \frac{\q'' (U_s)}{\q' (U_s)} \Big)^2 d \langle U, U \rangle_s.
\end{align*}
By the occupation time formula (\cite[Theorem~IV.45.1]{RW2} or \cite[Lemma~B.23]{criensurusov22}) and~\eqref{eq:110224a2}, \(\P\)-a.s.
	\begin{align*}
\int_0^{\zeta(\Y)\wedge T} \Big( \frac{\q'' (U_s)}{\q' (U_s)} \Big)^2 d \langle U, U \rangle_s &= \int \Big( \frac{\q'' (x)}{\q' (x)} \Big)^2  L_{\zeta(\Y)\wedge T}^x (U) \, dx 
	\\&= \int \Big( \beta (\q (x)) \q' (x) \Big)^2 L^x_{\zeta(\Y)\wedge T} (\U)\, dx 
	\\&= \int_0^{\zeta(\Y)\wedge T} \Big(
	\beta\big(\q(\U_s)\big)\q'(\U_s)
	\Big)^2
	\,d\langle\U,\U\rangle_s.
	\end{align*} 
Hence, we also have
\begin{equation}\label{eq:060324a3}
\int_0^{\zeta(\Y)\wedge T}
\Big(
\beta\big(\q(\U_s)\big)\q'(\U_s)
\Big)^2
\,d\langle\U,\U\rangle_s<\infty\quad\P\text{-a.s.}
\end{equation}

We need to prove that, if \(b \in \{l, r\}\cap \bR\) is accessible, then it satisfies~\eqref{eq: b good}.
To this end, take an accessible \(b \in \{l, r\}\cap \bR\) and assume for contradiction that \eqref{eq: b good} does not hold for the boundary point $b$.
It follows from \eqref{eq: short computation} and~\eqref{eq: def good points} that
$$
\int_{\s (B)}
|s(b)-x| \big(\beta(\q(x)) \q'(x)\big)^2\,dx=\infty
$$
for any open interval $B\subsetneq J^\circ$ with $b$ as endpoint.
\cite[Lemma B.18]{criensurusov22} implies
\begin{equation}\label{eq:060324a4}
\int_0^{T_{\s(b)}(\U)}
\Big(
\beta\big(\q(\U_s)\big)\q'(\U_s)
\Big)^2
\,d\langle\U,\U\rangle_s=\infty\quad\P\text{-a.s. on }\{T_{\s(b)}(\U)\le T\}.
\end{equation}
As $T_{\s(b)}(\U)=T_b(\Y)$ and $b$ is accessible for $\Y$, we get $\P(T_{\s(b)}(\U)\le T)>0$
from Lemma~\ref{lem:BR}.
Together with the fact that on $\{T_{\s(b)}(\U)\le T\}$ we have $\zeta(\Y)=T_{\s(b)}(\U)$, we get that \eqref{eq:060324a3} and~\eqref{eq:060324a4} contradict each other.
This concludes the proof.
\end{proof}

\subsection{Proof of Theorem~\protect\ref{th:100224a2}} \label{sec: pf th:100224a2}

For the reader's convenience, we repeat the formulation.

\medskip\noindent
\textbf{Theorem~\ref{th:100224a2}.}
{\em 
Suppose that $T=\infty$ and that $Y$ is \emph{not} on natural scale.
The financial market \((\mathbb{B}, \Y)\) satisfies NUPBR if and only if Condition~\ref{cond:170623a1} holds and, for every $b\in\{l,r\}$,
one of the following conditions \textup{(A)--(B)} is satisfied:
\begin{enumerate}
\item[\textup{(A)}]
$|b|<\infty$ and \eqref{eq: b good} holds;
\item[\textup{(B)}]
$|\s(b)|=\infty$ and the other boundary point $b^*$ satisfies~\textup{(A)}.
\end{enumerate}}

\begin{proof}[Proof of ``Deterministic conditions $\Rightarrow$ NUPBR'']
We work in the market $(\mathbb B,\Y)$ with the infinite time horizon $T=\infty$ and use the notations from the proof of Lemma~\ref{lem:050324a3}.
First, assume that one of the boundaries, call it $b$, satisfies~(B).
Then, the other boundary point $b^*$ satisfies~(A), i.e., $|b^*|<\infty$ and \eqref{eq: b good} holds for $b^*$.
Lemma~\ref{lem:170623a1} yields $|\s(b^*)|<\infty$.
As $|\s(b)|=\infty$ and $|\s(b^*)|<\infty$, we obtain, by \cite[Proposition~5.5.22]{KaraShre} (or \cite[Lemma B.7]{criensurusov22}), that $\P(\lim_{t\nearrow\zeta(\Y)}\Y_t=b^*)=1$ (regardless of whether $\zeta(\Y)$ is finite or infinite).
Now, the same calculations as in the proof of Theorem~\ref{th:090224a2},
which involve \cite[Lemma B.18]{criensurusov22},
\eqref{eq: short computation}
and~\eqref{eq: def good points},
yield that
\begin{equation}\label{eq:210324a1}
\int_0^{\zeta(\Y)}
\theta_s^2\,d\langle\U,\U\rangle_s
<\infty
\quad\P\text{-a.s.}
\end{equation}
Similarly, we obtain~\eqref{eq:210324a1} in the remaining case where both boundary points satisfy~(A).
NUPBR for $T=\infty$ now follows from \eqref{eq:210324a1},
Lemma~\ref{lem:050324a3},
Remark~\ref{rem:050324a1}
and
Theorem~\ref{theo: FTAP}.
\end{proof}

\begin{proof}[Proof of ``NUPBR $\Rightarrow$ deterministic conditions'']
As NUPBR for the infinite time horizon $T=\infty$ implies NUPBR for every finite time horizon, we obtain Condition~\ref{cond:170623a1} by Theorem~\ref{th:090224a2}.
It remains to establish (A) or~(B) for every $b\in\{l,r\}$.
As in the proof of Theorem~\ref{theo: FTAP}, it follows from
\cite[Theorem 4.12]{karatzaskardaras07} and \cite[Theorem~7]{HS10}
that the \emph{minimal martingale density}
$\widehat Z$ (see p.~42 in \cite{HS10}) is an SMD such that $\P$-a.s. $\widehat Z_\infty>0$.
In particular, 
$$
\widehat{Z} = \mathcal{E} \Big( - \int_0^{\cdot}
\frac{\q'' (\U_s)}{2 \q' (\U_s)} \, dU_s \Big)
\quad\text{on }[0,\zeta(\Y)), 
$$
see the proof of Theorem~\ref{th:090224a2} for details.
As nonnegative supermartingales cannot resurrect from zero (\cite[Lemma~III.3.6]{JS}),
$\widehat Z_\infty>0$ $\P$-a.s. implies $\widehat Z_{\zeta(\Y)}>0$ $\P$-a.s.,
hence
$$
\int_0^{\zeta(\Y)}\Big(\frac{\q''(\U_s)}{\q'(\U_s)}\Big)^2\,d\langle\U,\U\rangle_s<\infty\quad\P\text{-a.s.}
$$
or, equivalently,
\begin{equation}\label{eq:210324a2}
\int_0^{\zeta(\Y)}
\Big(
\beta\big(\q(\U_s)\big)\q'(\U_s)
\Big)^2
\,d\langle\U,\U\rangle_s<\infty\quad\P\text{-a.s.}
\end{equation}
(see~\eqref{eq:060324a3}).
Notice that our assumption that $\Y$ is \emph{not} on natural scale entails that \[\mu_L(\beta (\q) \q' \ne 0) > 0.\]
If $\U$ were recurrent, then \cite[Lemma B.19]{criensurusov22} would imply \(\P\)-a.s. $\zeta(\Y)=\infty$ and
\[\int_0^\infty
(\beta(\q(\U_s))\q'(\U_s))^2\,d\langle\U,\U\rangle_s=\infty,
\]
which contradicts~\eqref{eq:210324a2}.
Therefore, $\U$ is not recurrent.
Then, by \cite[Lemmata B.7, B.9]{criensurusov22},
$$
A_l\sqcup A_r
\triangleq
\big\{
\lim\nolimits_{t\nearrow\zeta(\Y)}\U_t=\s(l)
\big\}
\sqcup
\big\{
\lim\nolimits_{t\nearrow\zeta(\Y)}\U_t=\s(r)
\big\}
=
\Omega
\quad\P\text{-a.s.}
$$
and, for $b\in\{l,r\}$,
$\P(A_b)=0$ whenever $|\s(b)|=\infty$,
while $\P(A_b)>0$ whenever $|\s(b)|<\infty$.
It follows that, if there is a boundary point $b$ with $|\s(b)|=\infty$,
then the other boundary point $b^*$ satisfies $|\s(b^*)|<\infty$.

It remains to prove that, if $b\in\{l,r\}$ satisfies $|\s(b)|<\infty$, then $|b|<\infty$ and \eqref{eq: b good} holds for $b$.
To this end, let $b\in\{l,r\}$ be such that $|\s(b)|<\infty$.
By \cite[Lemma B.18]{criensurusov22}, \eqref{eq:210324a2} implies that
$$
\int_{\s (B)}
|s(b)-x| \big(\beta(\q(x)) \q'(x)\big)^2\,dx<\infty,
$$
where $B\subsetneq J^\circ$ is an open interval with $b$ as endpoint.
It follows from~\eqref{eq: short computation} that
\begin{equation}\label{eq:210324a3}
\int_B
\frac{| \s(b) - \s(y)|\big(\beta (y)\big)^2}{\s' (y)}\,dy<\infty.
\end{equation}
At this place it is tempting to apply~\eqref{eq: def good points},
but, contrary to the situation in the proof of Theorem~\ref{th:090224a2}, we do not yet know that $|b|<\infty$.
So, we again need to consult the equivalencies
(24) \(\Leftrightarrow\) (25) and (26) \(\Leftrightarrow\) (27) in \cite{MU12} (apply \cite{MU12} with functions \(\mu \equiv 0\), \(\sigma \equiv 1\) and \(b = - \beta/2\)) and, in our present situation, we deduce that
both $|\s(b)|<\infty$ and~\eqref{eq:210324a3}
are equivalent to
both $|b|<\infty$ and~\eqref{eq: b good}.
This concludes the proof.
\end{proof}

\subsection{Further lemmata} \label{sec: more lemma}

In order to prove the remaining Theorems \ref{th:090224a1} and~\ref{th:100224a1}, we prepare the following results.

As parts of our work are done in the canonical setting,
we sometimes arrive at (local) martingales on the canonical space,
which are naturally translated to $\mathbf F^\Y$-(local) martingales on the space $\mathbb B$.
The next lemma provides an instrument how to lift the $\mathbf F^\Y$-(local) martingale property to the $\mathbf F$-(local) martingale property.
This is a standalone result in the sense that it does not rely on anything else from Section~\ref{sec:proofs}.

For this result, we need only the Markov property w.r.t. the bigger filtration (not necessarily the strong one).
In order to emphasize this difference from our setting ($\Y$ is strong Markov w.r.t.~$\mathbf F$),
we use a different notation for the filtration and for the main process in Lemma~\ref{lem: markov martingale}.

\begin{lemma}[Lifting the martingale property to a bigger filtration]\label{lem: markov martingale}
Let \(M = (M_t)_{t \in\bR_+}\) be a real-valued continuous (not necessarily strong) Markov process w.r.t. a right-continuous filtration
\(\mathbf{G} = (\mathcal{G}_t)_{t \in\bR_+}\)
on some probability space $(\Omega,\cF,\P)$.
By $\mathbf F^M=(\cF^M_t)_{t\in\bR_+}$ we denote the right-continuous filtration generated by \(M\).
Further, let \(\canU = (\canU_t)_{t \in \bR_+}\) be a c\`adl\`ag \(\canbfF\)-adapted process on the canonical space \((\canOmega, \canF)\) (defined with \(J = \bR\)).
\begin{enumerate}
\item[\textup{(i)}]
If $\canU(M)$ is an $\mathbf F^M$-martingale, then it is a $\mathbf G$-martingale.
\item[\textup{(ii)}] If $\canU(M)$ is an $\mathbf F^M$-local martingale, then \(\canU(M)\) is a $\mathbf G$-local martingale.
\end{enumerate}
\end{lemma}

\begin{proof}
(i)
First, recall the classical fact that, as \(M\) is a Markov process w.r.t. \(\mathbf{G}\), it is also a Markov process w.r.t. \(\mathbf{F}^M\).
	Take times \(s_1 < \dots < s_n \leq s \leq t_1 < \dots < t_m \leq t\) in $\bR_+$ and bounded Borel functions \(f \colon \bR^n \to \bR\) and \(g \colon \bR^m \to \bR\). Then, by the Markov property of \(M\), we get a.s.
	\begin{align*}
		\E\big[ f (M_{s_1}, \dots, M_{s_n}) g (M_{t_1}, \dots, M_{t_m}) | \mathcal{G}_s \big] 
		&= f (M_{s_1}, \dots, M_{s_n}) \E\big[ g (M_{t_1}, \dots, M_{t_m}) | \mathcal{G}_s \big] 
		\\&= f (M_{s_1}, \dots, M_{s_n}) \E\big[ g (M_{t_1}, \dots, M_{t_m}) | M_s \big]
		\\&= f (M_{s_1}, \dots, M_{s_n}) \E\big[ g (M_{t_1}, \dots, M_{t_m}) | \mathcal{F}^M_s \big]
		\\&=  \E\big[ f (M_{s_1}, \dots, M_{s_n}) g (M_{t_1}, \dots, M_{t_m}) | \mathcal{F}^M_s \big].
	\end{align*}
By a monotone class argument, it follows that a.s. 
\[
\E\big[ \canV (M) | \mathcal{G}_s \big] = \E\big[ \canV (M) | \mathcal{F}^M_s \big]
\]
for all \(\canF\)-measurable random variables $\canV$ such that \(\canV (M)\) is integrable. 
This equality immediately implies the claim in~(i).

\smallskip
(ii)
Let \((\rho_n)_{n = 1}^\infty\) be an \(\mathbf{F}^M\)-localizing sequence for the \(\mathbf{F}^M\)-local martingale property of \(\canU(M)\). By virtue of \cite[Proposition~10.35~(a)]{Jacod}, there exists a sequence \((\tau_n)_{n = 1}^\infty\) of \(\canbfF\)-stopping times on the canonical space such that \(\rho_n = \tau_n (M)\) for all \(n \in \mathbb{N}\). By definition of a localizing sequence, for all \(n \in \mathbb{N}\), the process 
$\canU_{\cdot\wedge\tau_n(M)}(M)-\canU_0(M)$ is an $\mathbf F^M$-martingale.
Applying part (i) to the c\`adl\`ag $\canbfF$-adapted processes
$\canU_{\cdot\wedge\tau_n}-\canU_0$
on the canonical space, we obtain that $\canU_{\cdot\wedge\tau_n(M)}(M)-\canU_0(M)$ is a $\mathbf G$-martingale,
which means that $\canU(M)$ is a $\mathbf G$-local martingale.
\end{proof}

In the next lemma (Lemma~\ref{lem:130224a1}) we need the following conditions.

\smallskip\noindent
\textbf{Condition~A.}
For every finite boundary point $b\in\{l,r\}\cap\bR$, at least one of conditions \eqref{eq: b good}--\eqref{eq: very good add cond} is satisfied.

\smallskip\noindent
\textbf{Condition~B.}
For every boundary point $b\in\{l,r\}$,
one of the following conditions \textup{(a)--(b)} is satisfied:
\begin{enumerate}
\item[\textup{(a)}]
$|b|<\infty$ and \eqref{eq: b good} holds;
\item[\textup{(b)}]
$|b|=\infty$ and the other boundary point $b^*$ satisfies~\textup{(a)}.
\end{enumerate}

\smallskip\noindent
Notice that the Conditions A and~B make sense under Condition~\ref{cond:170623a1}.
The latter is explicitly assumed in the next lemma, which provides an important step towards constructing a (density of an) ACLMM.

\begin{lemma}[Step towards ACLMM]\label{lem:130224a1}
Assume Condition~\ref{cond:170623a1} and consider the market $(\mathbb B,\Y)$ with the infinite time horizon $T=\infty$.
Further assume the previously introduced Condition~A (resp., Condition~B).
Then, there exists a nonnegative c\`adl\`ag
$$
\mathbf F\text{-}\P\text
{-martingale (resp., uniformly integrable }
\mathbf F\text{-}\P\text{-martingale)}
$$
$Z=(Z_t)_{t\in\bR_+}$ with $Z_0=1$, strictly positive on the stochastic interval $[0,\zeta(\Y))$, such that $ZY$ is an $\mathbf F$-$\P$-local martingale.
\end{lemma}

\begin{remark}\label{rem:180324a1}
Below we provide two proofs of Lemma~\ref{lem:130224a1}.
The second one is more involved but in return it establishes more:
namely, that the process $\Z$ constructed in the proof of Lemma~\ref{lem:050324a3} (see~\eqref{eq: main Z}) has the properties claimed in Lemma~\ref{lem:130224a1}.
The process $\Z$ as defined in~\eqref{eq: main Z} is closely related to the \emph{minimal martingale density} that was already used in the proof of Theorem~\ref{theo: FTAP}.\footnote{Notice, however, the difference: the process $\Z$ given in~\eqref{eq: main Z} is not necessarily strictly positive.
The reason is that it is constructed under Condition~\ref{cond:170623a1}, which is too weak to guarantee strict positivity.}
Thus, the message is that, for our general diffusion markets,
under the assumptions of Lemma~\ref{lem:130224a1},
the ``generalized minimal martingale density'' $\Z$ of~\eqref{eq: main Z} is always a martingale (resp., a uniformly integrable martingale).
It is instructive to compare this comment with the counterexample in \cite{DS1998counter} or \cite[Chapter 10]{DS2006}.
\end{remark}

\begin{proof}[First proof of Lemma~\ref{lem:130224a1}]
For a while we work in the canonical setting.
In addition to the measure $\P_{x_0}$ on $(\canOmega,\canF)$,
we consider the law $\tP_{x_0}$ of the diffusion on natural scale
started in $x_0$ with the interior of the state space $J^\circ$, speed measure $\s'\,d\m$ on $J^\circ$
and boundary points $l$ and $r$ being absorbing whenever they are accessible.
Let $S$ be the
\emph{separating time for $\P_{x_0}$ and $\tP_{x_0}$}
(see \cite[Section~2]{chernyurusov04},
\cite[Section~2]{chernyurusov06} or
\cite[Section 2.1]{criensurusov22}).
In our present context, Condition~\ref{cond:050324a1} means that all points from $J^\circ$ are \emph{good} in the sense of \cite[Definition 2.5]{criensurusov22}.
By \cite[Theorem 2.13]{criensurusov22}, it follows that
$$
S\ge\zeta\;\;\P_{x_0},\tP_{x_0}\text{-a.s.},
$$
which implies, for every $t\in\bR_+$,
\begin{equation}\label{eq:190324a1}
\tP_{x_0}\sim\P_{x_0}\text{ on }\canF_t\cap\{t<\zeta\}.
\end{equation}
Moreover, by \cite[Corollary 2.15]{criensurusov22}
(resp., \cite[Corollary 2.16]{criensurusov22}),
we obtain that, under Condition~A (resp., Condition~B),
\begin{equation}\label{eq:190324a2}
\tP_{x_0}\ll_{\mathrm{loc}}\P_{x_0}
\quad
\text{(resp., }
\tP_{x_0}\ll\P_{x_0}
\text{)}.
\end{equation}
Let $\wtZ=(\wtZ_t)_{t\in\bR_+}$ be the density process of $\tP_{x_0}$ w.r.t. $\P_{x_0}$ (note that we are assuming at least Condition~A in this lemma).
It follows from \eqref{eq:190324a1} and~\eqref{eq:190324a2} that $\wtZ$ is a nonnegative c\`adl\`ag $\canbfF$-$\P_{x_0}$-martingale
(uniformly integrable under Condition~B),
strictly positive on $[0,\zeta)$.
Furthermore, Blumenthal's zero-one law (\cite[Lemma~4, p.~106]{freedman} or \cite[Lemma B.1]{criensurusov22}) yields that $\P_{x_0}$-a.s. $\wtZ_0=1$.

Next, as $\canY$ is continuous, the sequence
$$
\sigma_n\triangleq\inf\{t\geq 0 \colon |\canY_t|\ge n\},
\quad
n\in\mathbb N,\;
n>|x_0|,
$$
localizes its \(\canbfF\)-\(\tilde{\P}_{x_0}\)-martingale property. 
As identically \(\lim_{n \to \infty}\sigma_n = \infty\), part (c) of \cite[Proposition III.3.8]{JS} yields that $\wtZ\,\canY$
is an $\canbfF$-$\P_{x_0}$-local martingale.
By \cite[Theorem~10.37]{Jacod},
$\Z \triangleq\wtZ(\Y)$ is an $\mathbf F^\Y$-$\P$-martingale
(uniformly integrable under Condition~B)
and $ZY$ is an $\mathbf F^\Y$-$\P$-local martingale.
Finally, applying Lemma~\ref{lem: markov martingale} twice, first with
$M\triangleq\Y$ and
$\canU\triangleq\wtZ$,
then with $M\triangleq\Y$ and
$\canU\triangleq\wtZ\,\canY$,
completes the proof.
\end{proof}

\begin{proof}[Second proof of Lemma~\ref{lem:130224a1}]
In the following, we will prove more than what is claimed in Lemma~\ref{lem:130224a1}.
Namely, under the assumptions from Lemma~\ref{lem:130224a1}, we show that the process $\Z$ as constructed in the proof of Lemma~\ref{lem:050324a3} (see~\eqref{eq: main Z}) has the properties described in Lemma~\ref{lem:130224a1}, see
Remark~\ref{rem:180324a1} for an interpretation of this fact in the general context of mathematical finance.

Thus, we assume Condition~\ref{cond:170623a1} and use the notation from the proof of Lemma~\ref{lem:050324a3}.
In particular, we recall that the process $\Z=(\Z_t)_{t\in\bR_+}$ constructed there is a nonnegative $\mathbf F$-$\P$-local martingale with $\Z_0=1$,
$\Z$ is strictly positive on $[0,\zeta(\Y))$
and $\Z\Y$ is an $\mathbf F$-$\P$-local martingale.
As a nonnegative $\mathbf F$-$\P$-local martingale, $\Z$ is an $\mathbf F$-$\P$-supermartingale and $\P$-a.s. has a limit $\Z_\infty\triangleq\lim_{t\to\infty}\Z_t$.
Therefore, it suffices to prove that, under Condition~A,
$\E^\P[\Z_t]=1$ for all $t\in\bR_+$,
while, under Condition~B,
$\E^\P[\Z_\infty]=1$.

To this end, we turn to the canonical setting and consider the market $(\mathbb C_{x_0},\canY)$ with the infinite time horizon $T=\infty$.
We define the processes $\canU$, $\cantheta$ and $\canZ$ through $\canY$ in the same way, as $\U$, $\theta$ and $\Z$ are defined through $\Y$ in the proof of Lemma~\ref{lem:050324a3}.
Notice that the integrals in~\eqref{eq: main Z} could also be understood w.r.t. the filtration $\mathbf F^\Y$ (on $\mathbb B$) and would define the same process $\Z$.
Therefore, by \cite[Proposition 10.38~(b)]{Jacod}, the law of the process $\Z$ under $\P$ coincides with the law of $\canZ$ under $\P_{x_0}$.
In summary, it suffices to prove that, under Condition~A,
$\E^{\P_{x_0}}[\canZ_t]=1$ for all $t\in\bR_+$,
while, under Condition~B,
$\E^{\P_{x_0}}[\canZ_\infty]=1$.

For every $n\in\mathbb N$, we set
$$
\tau_n'\triangleq\inf
\Big\{
t\in[0,\zeta) \colon
\int_0^t
\cantheta_s^2\,d\langle\canU,\canU\rangle_s\ge n
\Big\},
$$
which is an $\canbfF^{\P_{x_0}}$-stopping time,
where $\canbfF^{\P_{x_0}}$ is the filtration $\canbfF$ completed with all $\P_{x_0}$-null sets.
Applying the occupation time formula as in~\eqref{eq: identity occ time formula} yields that
\begin{equation}\label{eq:230324a-1}
\int_0^t
\cantheta_s^2\,d\langle\canU,\canU\rangle_s<\infty
\quad\P_{x_0}\text{-a.s. for }t\in[0,\zeta)
\end{equation}
(alternatively, we can infer this from~\eqref{eq: identity occ time formula} via \cite[Proposition 10.38~(b)]{Jacod}).
In particular, $\P_{x_0}$-a.s. $\tau_n'>0$.
By Meyer's theorem on predictability (see \cite[Proposition~4]{chungwalsh} and \cite[Lemma~I.2.17]{JS}, or \cite[Lemma~B.20]{criensurusov22}), we conclude that, for every $n\in\mathbb N$, there exists an $\canbfF$-predictable time $\tau_n$ with $\tau_n(\omega)>0$ for all $\omega\in\canOmega$ such that
\begin{equation}\label{eq:230324a0}
\tau_n=\inf
\Big\{
t\in[0,\zeta) \colon
\int_0^t
\cantheta_s^2\,d\langle\canU,\canU\rangle_s\ge n
\Big\}
\quad\P_{x_0}\text{-a.s.}
\end{equation}
Thanks to Novikov's condition (\cite[Proposition~VIII.1.15]{RY}),
\begin{equation}\label{eq:230324a1}
\canZ_{\cdot\wedge\tau_n}
\text{ is a nonnegative uniformly integrable }\canbfF\text{-}\P_{x_0}\text{-martingale with }
\canZ_0=1.
\end{equation}
This enables us to define the probability measure $\Q_n$ on $(\canOmega,\canF)$ via the Radon-Nikodym density $d\Q_n/d\P_{x_0}\triangleq\canZ_{\tau_n}$.
As $\Z\Y$ is an $\mathbf F^\Y$-adapted continuous $\mathbf F$-$\P$-local martingale (on $\mathbb B$), it is also an $\mathbf F^\Y$-$\P$-local martingale and, by \cite[Theorem 10.37]{Jacod},
$\canZ\,\canY$ is an $\canbfF$-$\P_{x_0}$-local martingale.
Hence, also
$\canZ_{\cdot\wedge\tau_n}\canY_{\cdot\wedge\tau_n}$
is an $\canbfF$-$\P_{x_0}$-local martingale.
By \cite[Proposition III.3.8]{JS},
$\canY_{\cdot\wedge\tau_n}$
is an $\canbfF$-$\Q_n$-local martingale.
Now, Lemma~\ref{lem:050324a1} yields $\Q_n=\tP_{x_0}$ on $\canF^o_{\tau_n}$,
where $\tP_{x_0}$ is the law of the diffusion on natural scale
started in $x_0$ with the interior of the state space $J^\circ$,
speed measure $\s'\,d\m$ on $J^\circ$
and boundaries $l$ and $r$ being absorbing whenever they are accessible.
It is worth noting that $\tP_{x_0}$ is as in the first proof of Lemma~\ref{lem:130224a1}.

Our next aim is to show that
\begin{equation}\label{eq:230324a4}
\text{Condition A}
\;\;\Longrightarrow\;\;
\tau_n\nearrow\infty\quad\tP_{x_0}\text{-a.s.},
\end{equation}
while
\begin{equation}\label{eq:230324a5}
\text{Condition B}
\;\;\Longrightarrow\;\;
\{\tau_n=\infty\}\nearrow\canOmega\quad\tP_{x_0}\text{-a.s.}
\end{equation}
To this end, we need to understand 
$\int_0^\cdot \cantheta_s^2\,d\langle\canU,\canU\rangle_s$
on $[0,\zeta)$ under $\tP_{x_0}$.
First, we observe that this process is $\tP_{x_0}$-a.s. finite on $[0,\zeta)$
(cf.~\eqref{eq:230324a-1}, where we have the other measure $\P_{x_0}$, but notice that the argument via the occupation time formula needs \eqref{eq:110224a1} only).
As, under $\tP_{x_0}$, $\canY$ is on natural scale, it is worth expressing everything through $\canY$.
Under $\tP_{x_0}$, we have
$$
\cantheta_t=\frac12 \beta\big(\q(\canU_t)\big)\q'(\canU_t)
=\frac12
\frac{\beta (\canY_t)}{\s' (\canY_t)},
\quad t\in[0,\zeta),
$$
and
$$
d\langle\canU,\canU\rangle_t
=
d\langle\s(\canY),\s(\canY)\rangle_t
=
\big(\s'(\canY_t)\big)^2\,d\langle\canY,\canY\rangle_t
\quad\text{on }[0,\zeta).
$$
Thus, we have
\begin{equation}\label{eq:230324a6}
\int_0^t
\cantheta_s^2\,d\langle\canU,\canU\rangle_s
=\frac14
\int_0^t
\big(\beta(\canY_s)\big)^2\,d\langle\canY,\canY\rangle_s,
\quad t\in[0,\zeta).
\end{equation}
Further, by \cite[Lemma B.18]{criensurusov22}, if a boundary point $b\in\{l,r\}$ satisfies $|b|<\infty$ and~\eqref{eq: b good}, then
\begin{equation}\label{eq:230324a7}
\int_0^\zeta
\big(\beta(\canY_s)\big)^2\,d\langle\canY,\canY\rangle_s
<\infty
\quad\tP_{x_0}\text{-a.s. on }
\Big\{\lim_{t\nearrow\zeta}\canY_t=b \Big\}.
\end{equation}
Next, we observe that a boundary point $b\in\{l,r\}\cap\bR$ satisfies~\eqref{eq: very good add cond} if and only if $b$ is inaccessible for $\canY$ under $\tP_{x_0}$
(apply~\eqref{eq:170323a3} for the natural scale and the speed measure $\s'\,d\m$).
This and~\eqref{eq:230324a7} yield that, under Condition~A,
$$
\int_0^{t\wedge\zeta}
\big(\beta(\canY_s)\big)^2\,d\langle\canY,\canY\rangle_s
<\infty
\quad\tP_{x_0}\text{-a.s.,}
\quad\forall \, t\in\bR_+.
$$
Recalling \eqref{eq:230324a0} and~\eqref{eq:230324a6}, we get~\eqref{eq:230324a4}.
Furthermore, under Condition~B, we have that $\tP_{x_0}$-a.s.,
as $t\nearrow\zeta$,
$\canY_t$ tends to (the) boundary point(s) satisfying $|b|<\infty$ and~\eqref{eq: b good}, see \cite[Proposition~5.5.22]{KaraShre} or \cite[Lemma~B.7]{criensurusov22}.
Together with~\eqref{eq:230324a7} this implies
$$
\int_0^\zeta
\big(\beta(\canY_s)\big)^2\,d\langle\canY,\canY\rangle_s
<\infty
\quad\tP_{x_0}\text{-a.s.}
$$
under Condition~B, which yields~\eqref{eq:230324a5}.

Finally, under Condition~A (resp., Condition~B), we take any $t\in\bR_+$ (resp., $t=\infty$) and observe that
$$
1
\ge
\E^{\P_{x_0}}[\canZ_t]
\ge
\E^{\P_{x_0}}[\canZ_t \1_{\{\tau_n\ge t\}}]
=
\E^{\P_{x_0}}[\canZ_{\tau_n} \1_{\{\tau_n\ge t\}}]
=
\Q_n(\tau_n\ge t)
=
\tP_{x_0}(\tau_n\ge t)\to1
$$
as $n\to\infty$,
where we use \eqref{eq:230324a1} in the first equality,
then the definition of $\Q_n$,
then the facts that $\Q_n=\tP_{x_0}$ on $\canF^o_{\tau_n}$ and $\{\tau_n\le t\}\in\canF^o_{\tau_n}$
and, ultimately,
\eqref{eq:230324a4} (resp., \eqref{eq:230324a5}) in the statement about the limit.
All in all, we obtain $\E^{\P_{x_0}}[\canZ_t]=1$ for all $t\in\bR_+$ under Condition~A and even for $t=\infty$ under Condition~B.
The proof is complete.
\end{proof}

\begin{discussion}
It is worth noting that, under Condition~\ref{cond:170623a1} and Condition~A, the density process $\wtZ$ of $\tP_{x_0}$ w.r.t. $\P_{x_0}$ from the first proof of Lemma~\ref{lem:130224a1} coincides with the process $\canZ$ from the second one (defined as the canonical analog of $Z$ from~\eqref{eq: main Z}).
Indeed, as $\canZ$ is a nonnegative $\canbfF$-$\P_{x_0}$-martingale with $\canZ_0=1$,
a standard extension theorem (e.g., \cite[Lemma 19.19]{kallenberg}) yields the existence of a probability measure \(\tQ\) on \(\canF\) such that $\tQ\ll_{\textup{loc}}\P_{x_0}$ and \(d \tQ = \canZ_t d \P_{x_0}\) on \(\canF_t\) for all \(t \in \bR_+\). As the process $\canY$ is an $\canbfF$-$\tQ$-local martingale by \cite[Proposition III.3.8]{JS}, using Lemma~\ref{lem:050324a1} with $\xi=t$ for every $t\in\bR_+$, we get that $\tQ=\tP_{x_0}$ and hence \(\P_{x_0}\)-a.s. $\wtZ=\canZ$.
\end{discussion}

\subsection{Proof of Theorems \protect\ref{th:090224a1} and~\protect\ref{th:100224a1}} \label{sec: pf th:090224a1, th:100224a1}

For the reader's convenience, we repeat the formulations.

\medskip\noindent
\textbf{Theorem~\ref{th:090224a1}.}
{\em 
Assume that $T<\infty$.
The financial market \((\mathbb{B}, \Y)\) satisfies NA if and only if Condition~\ref{cond:170623a1} holds and, for every $b\in\{l,r\}\cap\bR$, at least one of conditions \eqref{eq: b good}--\eqref{eq: very good add cond} is satisfied.}

\medskip\noindent
\textbf{Theorem~\ref{th:100224a1}.}
{\em 
Suppose that $T=\infty$ and that $Y$ is \emph{not} on natural scale.
The financial market \((\mathbb{B}, \Y)\) satisfies NA if and only if Condition~\ref{cond:170623a1} holds and, for every $b\in\{l,r\}$,
one of the following conditions \textup{(a)--(b)} is satisfied:
\begin{enumerate}
\item[\textup{(a)}]
$|b|<\infty$ and \eqref{eq: b good} holds;
\item[\textup{(b)}]
$|b|=\infty$ and the other boundary point $b^*$ satisfies~\textup{(a)}.
\end{enumerate}}

\medskip
To prove the sufficiency of the deterministic conditions for NA, we discuss two different methods.
The first proof is shorter, but it works only in the case where $T<\infty$, because it is based on Theorem~\ref{th:210324a1}.
The second one is a direct verification of NA, which works both for $T<\infty$ and $T=\infty$.

\begin{proof}[First proof of the implication ``Deterministic conditions $\Rightarrow$ NA'' (Theorem~\ref{th:090224a1} only)]
Here, we consider only the case of a finite time horizon $T<\infty$.
Our tactic is to use the characterization of NA from Theorem~\ref{th:210324a1}. Take an \(\mathbf{F}\)-stopping time \(\sigma \leq T\). Further, set \(\zeta_0 \triangleq \zeta (\Y) \wedge T\). Notice that the deterministic conditions in Theorem~\ref{th:090224a1}
are precisely
Condition~\ref{cond:170623a1} together with Condition~A
from Lemma~\ref{lem:130224a1}.
Hence, by Lemma~\ref{lem:130224a1}, there exists a nonnegative c\`adl\`ag $\mathbf F$-$\P$-martingale $\Z=(\Z_t)_{t\in[0,T]}$ with $\Z_0=1$, strictly positive on the stochastic interval $[0,\zeta_0)$, such that $\Z\Y$ is an $\mathbf F$-$\P$-local martingale on $[0,T]$. Using the continuous time-change \(s \mapsto \tau_s \triangleq (s + \sigma) \wedge T\), it follows from \cite[Theorem~10.16]{Jacod}
that \((\Z_{\tau_s} \Y_{\tau_s})_{s \in [0, T]}\) is a \({^\sigma}\mathbf{F}\)-\(\P\)-local martingale for the time-changed filtration \({^\sigma}\mathbf{F} = (\cF_{\tau_s})_{s \in [0, T]}\).
Furthermore, as \(\tau_s \leq T\) for all \(s \in [0, T]\), the process \((\Z_{\tau_s})_{s \in [0, T]}\) is a true \({^\sigma}\mathbf{F}\)-\(\P\)-martingale by the optional sampling theorem.
Thanks to Lemma~\ref{lem:BR}, we have 
\[
\P ( \sigma < \zeta (Y)) \geq \P (T < \zeta(\Y)) > 0.
\]
Hence, also
\[
\E^\P\big[ Z_\sigma \1_{\{\sigma < \zeta (Y)\}} \big] > 0
\]
because \(\P\)-a.s. \(Z_\sigma> 0\) on \(\{\sigma < \zeta (\Y)\}\).
We define a probability measure \(\Q\) by
\begin{align*}
\Q(G) \equiv {^\sigma}\Q(G) \triangleq \frac{\P (\sigma < \zeta (\Y))}{\E^\P \big[ Z_\sigma \1_{\{\sigma < \zeta (\Y)\}}\big]} \E^{\P} \big[ Z_T \1_{G \, \cap \, \{\sigma < \zeta (\Y)\}} \big] + \P ( G \cap \{\sigma \geq \zeta(\Y) \}), \quad G \in \cF.
\end{align*}
By virtue of \cite[Proposition~III.3.8]{JS}, and recalling that the boundaries of \(\Y\) are absorbing or inaccessible under \(\P\), we get that \((\Y_{\tau_s})_{s \in [0, T]}\) is an \({^\sigma}\mathbf{F}\)-\(\Q\)-local martingale.
By definition of~\(\Q\), it is evident that \(\Q\ll \P\).
Finally, if \(G \in \cF_\sigma\) is such that \(\Q (G) = 0\), we immediately get \(\P (G \cap \{\sigma \geq \zeta (Y)\}) = 0\), and \(\P(G \cap \{\sigma < \zeta (\Y)\}) = 0\) follows from 
\[
\E^\P \big[ \Z_{\sigma} \1_{G \, \cap \, \{\sigma < \zeta (\Y)\}} \big] = 0
\]
because \(\P\)-a.s. \(Z_\sigma > 0\) on \(\{\sigma < \zeta (\Y)\}\).
Hence, \(\P\ll \Q\) on \(\cF_{\sigma}\), which entails that \(\P \sim \Q\) on \(\cF_\sigma\).
Summing up, Theorem~\ref{th:210324a1} yields that NA holds. The proof is complete.
\end{proof}

\begin{proof}[Second proof of the implication ``Deterministic conditions $\Rightarrow$ NA'' (Theorems \ref{th:090224a1} and~\ref{th:100224a1})]
We work in the market $(\mathbb B,\Y)$ with finite or infinite time horizon $T\in(0,\infty]$. At this point, recall Agreement~\ref{agr:050224a1}.
Suppose that there is an admissible strategy
$H\in L(\mathbb B,\Y)$
that realizes arbitrage.
By appropriate scaling of $H$ we can and will assume that $H$ is $1$-admissible.
Set $\zeta_0\triangleq\zeta(\Y)\wedge T$ and recall that $Y$ is stopped at time $\zeta_0$.
Therefore, the continuous process
$$
X_t\triangleq\int_0^t H_s\,d\Y_s,\quad t\in[0,T],
$$
is also stopped at $\zeta_0$, which entails that
$$
\P(X_{\zeta_0}\ge0)=1
\quad\text{and}\quad
\P(X_{\zeta_0}>0)>0.
$$
As $\{X_{\zeta_0}>0\}=\bigcup_{n\in\mathbb N}\{X_{\zeta_0}>1/n\}$,
there exists $\gamma>0$ such that $\P(X_{\zeta_0}>\gamma)>0$
and hence, $\P(\tau_\gamma<\zeta_0)>0$, where
$$
\tau_\gamma
\triangleq
\big(
\inf\{t\in[0,T] \colon X_t\ge\gamma\}
\big)\wedge\zeta_0.
$$
It follows that
\begin{equation}\label{eq:130224a1}
\P(X_{\tau_\gamma}\ge0)=1
\quad\text{and}\quad
\P(X_{\tau_\gamma}>0,\tau_\gamma<\zeta_0)>0.
\end{equation}
Notice that the deterministic conditions in Theorem~\ref{th:090224a1}
(resp., Theorem~\ref{th:100224a1})
are nothing else but
Condition~\ref{cond:170623a1} together with Condition~A
(resp., Condition~B)
from Lemma~\ref{lem:130224a1}.
By Lemma~\ref{lem:130224a1}, there exists a nonnegative c\`adl\`ag uniformly integrable $\mathbf F$-$\P$-martingale $\Z=(\Z_t)_{t\in[0,T]}$ with $\Z_0=1$, strictly positive on the stochastic interval $[0,\zeta_0)$, such that $\Z\Y$ is an $\mathbf F$-$\P$-local martingale on $[0,T]$.
Integration by parts as in \cite[I.4.45]{JS} yields that
$$
d\Z_t \Y_t
=
\Y_t\,d\Z_t + \Z_{t-}\,d\Y_t + d[\Z,\Y]_t
$$
(as $\Y$ is continuous, we do not write $\Y_{t-}$ in the first integrand).
Although, in general, a stochastic integral w.r.t. a c\`adl\`ag local martingale can fail to be a local martingale, it is always a local martingale provided the integrand is locally bounded
(\cite[I.4.31 and I.4.34~(b)]{JS}).
As $\Y$ is an $\mathbf F$-adapted continuous process, it is locally bounded.
Therefore, $\int_0^\cdot \Y_s\,d\Z_s$ is an $\mathbf F$-$\P$-local martingale.
Then, the process
$$
M\triangleq\int_0^\cdot \Z_{s-}\,d\Y_s+[\Z,\Y]
$$
is a continuous $\mathbf F$-$\P$-local martingale
(notice that the quadratic covariation $[\Z,\Y]$ is continuous because $\Y$ is continuous, see \cite[Theorem I.4.47~(c)]{JS}).
Using integration by parts once again, we compute
\begin{align*}
d\Z_t X_t
&=
X_t\,d\Z_t + \Z_{t-}\,dX_t + d[\Z,X]_t
\\
&=
X_t\,d\Z_t + H_t\Z_{t-}\,d\Y_t + H_t\,d[\Z,\Y]_t
\\
&=
X_t\,d\Z_t + H_t\,dM_t.
\end{align*}
This shows that $\Z X$ is a (c\`adl\`ag) $\mathbf F$-$\P$-local martingale on $[0,T]$.
As $H$ is $1$-admissible, we have $X\ge-1$, so $\Z(X+1)$ is a nonnegative $\mathbf F$-$\P$-local martingale, hence a nonnegative $\mathbf F$-$\P$-supermartingale.
Now, the fact that $\Z$ is a uniformly integrable $\mathbf F$-$\P$-martingale implies that
$$
\Z X = \Z(X+1) - \Z
$$
is an $\mathbf F$-$\P$-supermartingale which is closable at time~$T$ (the closability at $T$ is, of course, a nonvoid requirement only in the case $T=\infty$).
In particular, we can apply Doob's optional stopping theorem to the (in the case $T=\infty$, possibly, unbounded) stopping time $\tau_\gamma$ and obtain
$\E^\P[\Z_{\tau_\gamma}X_{\tau_\gamma}]\le\E^\P[\Z_0X_0]=0$.
On the other hand, by~\eqref{eq:130224a1} and the facts that $\Z$ is nonnegative on $[0,T]$ and strictly positive on $[0,\zeta_0)$, we get
$\E^\P[\Z_{\tau_\gamma}X_{\tau_\gamma}]>0$.
This contradiction proves the sufficiency of the deterministic conditions for NA.
\end{proof}

\begin{proof}[Proof of the implication ``NA $\Rightarrow$ deterministic conditions'' (Theorems \ref{th:090224a1} and~\ref{th:100224a1})]
In the following, we consider a finite or infinite time horizon \(T \in (0, \infty]\) and assume that the market $(\mathbb B,\Y)$ satisfies NA.
Set $T_0\triangleq T\wedge1$. 
Then, it is clear that NA holds in the market $(\mathbb B,\Y)$ also for the time horizon $T_0$.
Using Lemma~\ref{lem:210324a1}, it follows that NA holds in the market $(\mathbb B^\Y,\Y)$ with the time horizon $T_0$, and, 
by \cite[Proposition 10.38~(b)]{Jacod}, NA also holds in the canonical setup $(\mathbb C_{x_0},\canY)$ with the time horizon $T_0$.
Take any $T_1\in(0,T_0)$ and \(y_0 \in J^\circ\).
By Lemma~\ref{lem:060324a5}, NA holds for $(\mathbb C_{y_0},\canY)$ with the time horizon $T_1$.
Applying Theorem~\ref{theo: FTAP NA} for the market $(\mathbb C_{y_0},\canY)$ with the time horizon $T_1$ yields the existence of an ACLMM for this market.
The density process of this ACLMM w.r.t. $\P_{y_0}$ satisfies the assumption on the process $\canZ$ in Lemma~\ref{lem:060324a3}~(i) (with the starting point $x=y_0$ and the time horizon $T_1$).
As $y_0\in J^\circ$ is arbitrary, Lemma~\ref{lem:060324a3}~(ii) shows that Condition~\ref{cond:050324a1} is satisfied.
Lemma~\ref{lem:060324a4} upgrades Condition~\ref{cond:050324a1} to Condition~\ref{cond:170623a1}.

It remains to establish the properties of the boundary points.
As the market $(\mathbb B,\Y)$ with the (finite or infinite) time horizon $T$ satisfies NA, by Theorem~\ref{theo: FTAP NA}, there exists an ACLMM $\Q$ for the market $(\mathbb B,\Y)$.
As $\Y$ is a continuous $\mathbf F$-$\Q$-local martingale,
it is also an $\mathbf F^\Y$-$\Q$-local martingale.
By \cite[Theorem 10.37]{Jacod}, $\canY$ is a $\canbfF$-$\Q_{x_0}$-local martingale, where $\Q_{x_0}\triangleq\Q\circ\Y^{-1}$,
which is a probability measure on $(\canOmega,\canF)$
(the canonical setting with the time horizon $T$)
such that $\Q_{x_0}\ll\P_{x_0}$.
As Condition~\ref{cond:050324a1} is already established, we can apply
Lemma~\ref{lem:050324a1} with $\xi=T$ to get that $\Q_{x_0}=\tP_{x_0}$,
where $\tP_{x_0}$ is the law of the diffusion
on natural scale
started in $x_0$
with interior state space $J^\circ$,
speed measure $\s'\,d\m$ on $J^\circ$
and boundary points $l$ and $r$ being absorbing whenever accessible.
(Notice that $\s'$ exists everywhere on $J^\circ$, as we already proved that Condition~\ref{cond:050324a1} is satisfied.)
All in all, we obtain
\begin{equation}\label{eq:250324a1}
\tP_{x_0}\ll\P_{x_0}.
\end{equation}
In the case $T<\infty$, \eqref{eq:250324a1} and \cite[Corollary 2.15]{criensurusov22} yield the properties of the boundary points needed in Theorem~\ref{th:090224a1}.
In the case $T=\infty$, \eqref{eq:250324a1} and \cite[Corollary 2.16]{criensurusov22} yield the properties of the boundary points needed in Theorem~\ref{th:100224a1}.
This concludes the proof of the necessity of the deterministic conditions for NA.
\end{proof}

\subsection{Alternative proof of the characterizations of NFLVR} \label{sec: alternative NFKVR} \label{sec: pf alternativ cor:090224a1, cor:100224a1}

The deterministic characterizations of NFLVR, Corollaries \ref{cor:090224a1} and~\ref{cor:100224a1}, are, of course, immediate consequences of the deterministic characterizations of NUPBR and NA, which are proved above.
But there is a much more direct way of proving the deterministic characterizations of NFLVR.
Before we explain this in detail, let us recall the statements of the corollaries for the reader's convenience. 

\medskip\noindent
\textbf{Corollary~\ref{cor:090224a1}.}
{\em
	Assume that $T<\infty$.
	The financial market \((\mathbb{B}, \Y)\) satisfies NFLVR if and only if Condition~\ref{cond:170623a1} holds and, for every $b\in\{l,r\}\cap\bR$, at least one of conditions \textup{(A)--(B)} below is satisfied:
	\begin{enumerate}
		\item[\textup{(A)}]
		condition~\eqref{eq: b good} holds;
		\item[\textup{(B)}]
		the boundary point $b$ is inaccessible for $\Y$ and \eqref{eq: very good add cond} holds.
	\end{enumerate}}

\medskip\noindent
\textbf{Corollary~\ref{cor:100224a1}.}
{\em
	Suppose that $T=\infty$ and that $Y$ is \emph{not} on natural scale.
	The financial market \((\mathbb{B}, \Y)\) satisfies NFLVR if and only if Condition~\ref{cond:170623a1} holds and, for every $b\in\{l,r\}$,
	one of the following conditions \textup{(I)--(II)} is satisfied:
	\begin{enumerate}
		\item[\textup{(I)}]
		$|b|<\infty$ and \eqref{eq: b good} holds;
		\item[\textup{(II)}]
		$|b|=\infty$, $|\s(b)|=\infty$ and the other boundary point $b^*$ satisfies~\textup{(I)}.
	\end{enumerate}}

\smallskip 
In our previous paper~\cite{criensurusov22}, we studied the questions of (local) absolute continuity of laws of two general diffusions on the canonical space.
To illustrate the results in \cite{criensurusov22} we, essentially, prove the Corollaries~\ref{cor:090224a1} and~\ref{cor:100224a1} in the canonical setting (see \cite[Theorems~3.5 and 3.8 and Remark~3.10]{criensurusov22}).
Those proofs are much more direct than the proofs above because the existence of a ELMM is a global property directly related to the questions studied in \cite{criensurusov22} (which is in contrast to the existence of an SMD).

Thus, from \cite[Section~3]{criensurusov22} we know that the claims of Corollaries \ref{cor:090224a1} and~\ref{cor:100224a1} hold in the canonical setting $(\mathbb C_{x_0},\canY)$.
It is a natural and interesting question whether these results can be transferred directly from the canonical setting to our more general market $(\mathbb B,\Y)$.
The answer to this question is affirmative and it leads to standalone proofs of the characterizations of NFLVR in the sense that they require only Lemma~\ref{lem: markov martingale} and nothing else from Section~\ref{sec:proofs}.

\begin{proof}[Alternative proof of Corollaries \ref{cor:090224a1} and~\ref{cor:100224a1}]
The argumentation below applies both to a finite and to the infinite time horizon.
That is, we consider $T\in(0,\infty]$ and recall Agreement~\ref{agr:050224a1}.

\smallskip
\emph{``Deterministic conditions $\Rightarrow$ NFLVR'':}
The aim is to prove NFLVR in the (non-canonical) market $(\mathbb B,\Y)$, where $\mathbb{B} = (\Omega, \cF, \mathbf{F} = (\cF_t)_{t \in [0, T]}, \P)$ (just recalling the notation).
First, by \cite[Section~3]{criensurusov22},
the deterministic conditions in Corollary~\ref{cor:090224a1}
(resp., Corollary~\ref{cor:100224a1})
give us NFLVR in the canonical setting $(\mathbb C_{x_0},\canY)$
with $T<\infty$ (resp., $T=\infty$).
By Theorem~\ref{theo: FTAP NFLVR}, there exists an ELMM $\Q_{x_0}$ in this canonical setting.
Let $\canZ=(\canZ_t)_{t\in[0,T]}$ be the density process of $\Q_{x_0}$ w.r.t. $\P_{x_0}$, which is a strictly positive uniformly integrable c\`adl\`ag $\canbfF$-$\P_{x_0}$-martingale with $\E^{\P_{x_0}}[\canZ_t]\equiv1$, also with $\canZ_\infty>0$ $\P_{x_0}$-a.s. in the case $T=\infty$.
As $\canY$ is continuous, part (c) from \cite[Proposition~III.3.8]{JS} yields that
$\canZ\,\canY$ is an $\canbfF$-$\P_{x_0}$-local martingale.
By \cite[Theorem~10.37]{Jacod},
$\Z\triangleq\canZ(\Y)$ is an $\mathbf F^\Y$-$\P$-martingale and
$\Z\Y$ is an $\mathbf F^\Y$-$\P$-local martingale.
Further, Lemma~\ref{lem: markov martingale} yields that $\Z$ is an $\mathbf F$-$\P$-martingale and $\Z\Y$ is an $\mathbf F$-$\P$-local martingale.
From above, recall the other good properties of $\Z$, i.e., strict positivity, $\E^{\P}[\Z_t]\equiv1$, uniform integrability under~$\P$.
These enable us to define a probability measure $\Q\sim\P$ on $(\Omega,\cF)$ via the Radon-Nikodym density $d\Q/d\P\triangleq\Z_T$.
In other words, $\Z$ is the density process of $\Q$ w.r.t.~$\P$.
Applying \cite[Proposition~III.3.8]{JS} again, we obtain that $Y$ is an $\mathbf F$-$\Q$-local martingale, that is, $\Q$ is an ELMM.
Theorem~\ref{theo: FTAP NFLVR} yields NFLVR in the market $(\mathbb B,\Y)$.

\smallskip
\emph{``NFLVR $\Rightarrow$ deterministic conditions'':}
We assume that the market $(\mathbb B,\Y)$ with finite or infinite time horizon $T\in(0,\infty]$ satisfies NFLVR.
By Theorem~\ref{theo: FTAP NFLVR}, there exists an ELMM $\Q$ for this market $(\mathbb B,\Y)$.
As the $\mathbf F$-$\Q$-local martingale $\Y$ is continuous,
it is also an $\mathbf F^\Y$-$\Q$-local martingale.
Hence, by \cite[Theorem 10.37]{Jacod}, the measure $\Q\circ\Y^{-1}$ is an ELMM in the canonical market $(\mathbb C_{x_0},\canY)$ with the same time horizon $T$.
Again by Theorem~\ref{theo: FTAP NFLVR},
we have NFLVR in the canonical market $(\mathbb C_{x_0},\canY)$ with the same time horizon $T$.
Now, the deterministic conditions in Corollary~\ref{cor:090224a1}
(resp., Corollary~\ref{cor:100224a1})
follow from \cite[Theorem 3.5 and Remark 3.10]{criensurusov22}
(resp., \cite[Theorem 3.8 and Remark 3.10]{criensurusov22}).
This concludes the proof.
\end{proof}

\subsection{Proofs of Theorems~\ref{theo: NGA} and \ref{theo: NGA2}} \label{sec: theo: NGA, theo: NGA2}

For the reader's convenience, we repeat the formulations.

\medskip\noindent
\textbf{Theorem~\ref{theo: NGA}.}
{\em	Let $T<\infty$.
	There exists an EMM for the market \((\mathbb{B}, Y)\), i.e., a probability measure \(\Q \sim \P\) such that \(Y\) is an \(\mathbf{F}\)-\(\Q\)-martingale, if and only if \((\mathbb{B}, Y)\) satisfies NFLVR and,
	for every infinite boundary point \(b \in \{l,r\}\setminus\bR\),
	\[
	\int_B | x | \s' (x) \m (dx) = \infty
	\]
	holds for some (equivalently, for every) open interval \(B \subsetneq J^\circ\) with \(b\) as endpoint.}

\begin{proof}
If an EMM \(\Q\) exists, then NFLVR holds
for the market $(\mathbb B,\Y)$
by Theorem~\ref{theo: FTAP NFLVR}.
Thanks to the tower rule for conditional expectations,
the $\mathbf F$-$\Q$-martingale $\Y$ is also an $\mathbf F^\Y$-$\Q$-martingale.
Therefore,
the push-forward \(\Q \circ \Y^{-1}\) is an EMM for the canonical market \((\mathbb{C}_{x_0}, \canY)\).
By Lemma~\ref{lem:050324a1}
(which can be applied, as Condition~\ref{cond:050324a1} holds by Corollary~\ref{cor:090224a1}),
the only ELMM for this canonical market is the law \(\tP_{x_0}\) of a diffusion on natural scale with
the interior of the state space $J^\circ$,
starting value \(x_0\) and speed measure \(\s' d \m\) whose accessible boundaries are absorbing.
Thus, \(\tP_{x_0}\) is an EMM for the canonical market.
In particular,
the coordinate process \(\canY\) is a true \(\canbfF\)-\(\tP_{x_0}\)-martingale. Now, \cite[Theorem~1]{kotani} shows that every infinite boundary point \(b \in \{l, r\} \setminus \bR\) has to satisfy \(\int_B |x|\s' (x) \m (dx) =\infty\) for some (equivalently, for every) open interval \(B \subsetneq J^\circ\) with \(b\) as endpoint.

Conversely, assume that NFLVR holds for $(\mathbb B,\Y)$
and that every infinite boundary point \(b \in \{l, r\} \setminus \bR\) satisfies \(\int_B |x|\s' (x) \m (dx) =\infty\) for some (equivalently, for every) open interval \(B \subsetneq J^\circ\) with~\(b\) as endpoint.
As in the proof of the direction ``NFLVR \(\Rightarrow\) deterministic conditions'' from Section~\ref{sec: alternative NFKVR}, NFLVR holds also for the canonical market $(\mathbb C_{x_0},\canY)$ with the same time horizon~$T$.
By Lemma~\ref{lem:050324a1} (which can be applied, as Condition~\ref{cond:050324a1} holds by Corollary~\ref{cor:090224a1}),
the only ELMM for this canonical market is the measure \(\tP_{x_0}\) introduced above.
But thanks to \cite[Theorem~1]{kotani}, $\tP_{x_0}$ is even an EMM for the canonical market \((\mathbb{C}_{x_0}, \canY)\).
As in the proof of the direction ``deterministic conditions \(\Rightarrow\) NFLVR'' from Section~\ref{sec: alternative NFKVR}, it follows from Lemma~\ref{lem: markov martingale} that an EMM for the market \((\mathbb{B}, Y)\) exists. The proof is complete. 
\end{proof}

\medskip\noindent
\textbf{Theorem~\ref{theo: NGA2}.}
{\em	Let $T=\infty$.
	There exists an EMM for the market \((\mathbb{B}, Y)\), i.e., a probability measure \(\Q \sim \P\) such that \(Y\) is a uniformly integrable \(\mathbf{F}\)-\(\Q\)-martingale, if and only if NFLVR holds and both boundaries \(l\) and \(r\) are finite, i.e., \(l, r \in \bR\).}
	
\begin{proof}
First, suppose that NFLVR holds and that \(J\) is bounded. Then, by Theorem~\ref{theo: FTAP NFLVR}, there exists an ELMM and, as \(J\) is bounded, any ELMM is already an EMM.

Conversely, assume that an EMM \(\Q\) exists. Then, NFLVR holds by Theorem~\ref{theo: FTAP NFLVR}. It remains to prove that \(J\) needs to be bounded. 
As in the proof of Theorem~\ref{theo: NGA} above, it follows from Lemma~\ref{lem:050324a1} that the push-forward
\(\Q\circ \Y^{-1}\)
is the law of a diffusion on natural scale with the interior of the state space $J^\circ$
whose accessible boundaries are absorbing.
As \(\Y\) is a UI \(\Q\)-martingale, the UI martingale convergence theorem implies that \(\Y_t\) converges \(\Q\)-a.s. and in \(L^1(\Q)\) as \(t \to \infty\). If \(l = - \infty\) and \(r = \infty\), it follows from \cite[Proposition~5.5.22]{KaraShre} (or \cite[Lemma~B.7]{criensurusov22}) that \(\Y_t\) does not converge \(\Q\)-a.s. as \(t \to \infty\). Therefore, at least one of the boundaries has to be finite. In case one of the boundaries \(l\) and \(r\) is infinite and the other one, call it \(b\), is finite, then \cite[Proposition~5.5.22]{KaraShre} (or \cite[Lemma~B.7]{criensurusov22}) implies that \(\Q\)-a.s. \(\Y_\infty \triangleq \lim_{t \to \infty} \Y_t = b\). However, this is a contradiction to the convergence in \(L^1(\Q)\),
as \(\E^\Q[\Y_\infty] = b \ne x_0 = \lim_{t \to \infty} \E^{\Q}[\Y_t]\). We conclude that \(J\) has to be bounded. 
\end{proof}

\bibliographystyle{plain}

\end{document}